\DeclareMathOperator*{\supp}{supp}
\DeclareMathOperator*{\cov}{cov}
\newcommand{\R}{\ensuremath{\mathbb{R}}}
\newcommand{\Exp}{\ensuremath{\mathbb{E}}} 
\newcommand{\Prob}{\ensuremath{\mathbb{P}}} 
\newcommand{\indicator}{\ensuremath{\mathbbm{1}}}
\newcommand{\normal}{\ensuremath{\mathcal{N}}}
\newcommand\independent{\protect\mathpalette{\protect\independenT}{\perp}}
\def\independenT#1#2{\mathrel{\rlap{$#1#2$}\mkern2mu{#1#2}}}
\theoremstyle{definition}
\newtheorem{proposition}{Proposition}
\newtheorem{lemma}{Lemma}
\newtheorem{definition}{Definition}
\newtheorem{theorem}{Theorem}
\newtheorem{corollary}{Corollary}
\newcounter{partialIndepSection}
\newcommand{\aAssump}{A\arabic{partialIndepSection}}
\newtheoremstyle{theoremSuppressedNumber}{}{}{}{}{\bfseries}{.}{ }{\thmname{#1}\thmnote{ (\mdseries #3)}}
\theoremstyle{theoremSuppressedNumber}
\newtheorem{partialIndepAssump}{Assumption \aAssump \addtocounter{partialIndepSection}{1}}
\title{\textbf{Interpreting Quantile Independence}\footnote{This paper is based on a portion of our previous working paper, \cite{MastenPoirier2016}. This paper was presented at the University of Wisconsin--Madison, Northwestern University, the 2017 Triangle Econometrics Conference, the 2018 North American Winter Meetings of the Econometric Society, and the 2018 Vanderbilt Conference on Identification in Econometrics. We thank audiences at those seminars and conferences, as well as Federico Bugni, Ivan Canay, Andrew Chesher, Joachim Freyberger, Joel Horowitz, Shakeeb Khan, Roger Koenker, Chuck Manski, Jim Powell, and Adam Rosen, for helpful conversations and comments.}}
\author{Matthew A. Masten\footnote{Department of Economics, Duke University,
        \texttt{matt.masten@duke.edu}} \qquad Alexandre Poirier\thanks{
    Department of Economics, University of Iowa,
    \texttt{alexandre-poirier@uiowa.edu}}
}
\begin{document}
\maketitle
\begin{abstract}
How should one assess the credibility of assumptions weaker than statistical independence, like quantile independence? In the context of identifying causal effects of a treatment variable, we argue that such deviations should be chosen based on the form of selection on unobservables they allow. For quantile independence, we characterize this form of treatment selection. Specifically, we show that quantile independence is equivalent to a constraint on the average value of either a latent propensity score (for a binary treatment) or the cdf of treatment given the unobservables (for a continuous treatment). In both cases, this average value constraint requires a kind of non-monotonic treatment selection. Using these results, we show that several common treatment selection models are incompatible with quantile independence. We introduce a class of assumptions which weakens quantile independence by removing the average value constraint, and therefore allows for monotonic treatment selection. In a potential outcomes model with a binary treatment, we derive identified sets for the ATT and QTT under both classes of assumptions. In a numerical example we show that the average value constraint inherent in quantile independence has substantial identifying power. Our results suggest that researchers should carefully consider the credibility of this non-monotonicity property when using quantile independence to weaken full independence.
\end{abstract}

\bigskip
\small
\noindent \textbf{JEL classification:}
C14; C18; C21; C25; C51

\bigskip
\noindent \textbf{Keywords:}
Nonparametric Identification, Treatment Effects, Partial Identification, Sensitivity Analysis

\onehalfspacing
\normalsize

\newpage
\section{Introduction}\label{sec:intro}

A large literature has studied identification and estimation of structural econometric models under quantile independence rather than full independence.\footnote{Important early work includes \cite{KoenkerBassett1978}, who introduced quantile regression, \cite{Manski1975}, who studied discrete choice models under median independence, and \cite{Manski1985}, who extended that analysis to quantile independence.} This literature has faced a longstanding question: How can one substantively interpret and judge the credibility of a given set of quantile independence conditions? For example, \citet[page 733]{Manski1988} notes that ``Quantile independence restrictions sometimes make researchers uncomfortable. The assertion that a given quantile of $u$ does not vary with $x$ may lead one to ask: Why this quantile but not others?''

In this paper, we answer this question by providing a treatment assignment characterization of quantile independence conditions. Specifically, we consider the relationship between a continuous unobserved variable $U$ and an observed treatment variable $X$, which may be binary, discrete, or continuous. For example, $U$ could be an unobserved structural variable like ability, or an unobserved potential outcome. In section \ref{sec:characterizationSection} we consider the binary $X$ case. There the dependence structure between $X$ and $U$ is fully characterized by the conditional probability
\[
	p(u) = \Prob(X = 1 \mid U=u).
\]
If $U$ were observed, this would be an ordinary propensity score. Since $U$ is not observed, however, we call this a \emph{latent} propensity score; for brevity we will often refer to it simply as a propensity score. Constant propensity scores correspond to full statistical independence while non-constant propensity scores represent deviations from full independence. In this sense, any exogeneity assumption weaker than full independence allows for certain kinds of selection on unobservables. Our main theorem characterizes the set of propensity scores consistent with a set of quantile independence conditions. That is, we fully describe the kinds of selection on unobservables that quantile independence does and does not allow. This result shows that quantile independence imposes a set of constraints on \emph{average values} of the propensity score $p$. We then describe several properties of propensity scores which satisfy these constraints. Most notably, non-constant propensity scores which are consistent with a single quantile independence condition must be non-monotonic. Furthermore, if multiple isolated quantile independence conditions hold, then non-constant propensity scores must also oscillate up and down. These results do not depend on a specific econometric model, and hence apply any time one makes a quantile independence assumption. We use our main result to compare the constraints quantile independence imposes on selection with the constraints imposed by mean independence. In particular, we show that mean independence also requires non-constant propensity scores to be non-monotonic.

In section \ref{sec:ctsX} we generalize our main theorem on characterization of quantile independence to discrete and continuous treatments. As in the binary treatment case, quantile independence is equivalent to a set of average value constraints on the distribution of treatment given the unobservable. In both the discrete and continuous case, this constraint implies a non-monontonicity result. Specifically, treatment cannot be \emph{regression dependent} on the unobservable; equivalently, the distribution of treatment given unobservables cannot be stochastically monotonic in the unobservable.

To understand the restrictiveness of these non-monotonicity constraints, and therefore the plausibility of quantile independence assumptions, in section \ref{sec:LatentSelectionModels} we study two simple economic models of treatment selection: One for continuous treatments and one for binary treatments. In both models we show that standard assumptions on the economic primitives imply treatment selection rules that are monotonic in the unobservable. Therefore, by our characterization results, these selection models with the standard assumptions are incompatible with quantile independence. We then discuss how quantile independence can be retained by considering alternative assumptions on the economic primitives. Consequently, researchers using quantile independence constraints must argue that such alternative assumptions are plausible in the empirical setting under consideration.

While the primary contribution of this paper is about the interpretation of quantile independence, in section \ref{sec:treatmentEffects} we return to the binary treatment case to study the implications of our results for identification. We consider the case where an interval of quantile independence conditions hold, like $[0.25,0.75]$. We show that the latent propensity score must be flat on that interval. Our characterization result shows that the latent propensity score must also satisfy an average value constraint outside of that interval, which implies that the latent propensity score is non-monotonic outside of that interval. To isolate the identifying power of this average value constraint, we define a concept weaker than quantile independence, which we call \emph{$\mathcal{U}$-independence}. This concept simply specifies that, on the set $\mathcal{U}$, the latent propensity score equals the overall unconditional probability of being treated. Hence the latent propensity score is flat on $\mathcal{U}$. While quantile independence on the set $\mathcal{U}$ also imposes this flatness constraint, $\mathcal{U}$-independence does not constrain the average value of the latent propensity score outside of this set. Thus the difference between identified sets derived under the two assumptions can be ascribed solely to this average value constraint, which is what requires latent propensity scores to be non-monotonic.

To understand the identifying power of the average value constraint, one must first specify an econometric model and a parameter of interest. While our quantile independence results are relevant for many different models (such as those cited in our literature review below), we focus on a simple but important model: the standard potential outcomes model of treatment effects with a binary treatment. We adapt the analysis of \cite{MastenPoirier2017} to derive identified sets for the average treatment effect for the treated (ATT) and the quantile treatment effect for the treated (QTT) under either an interval of quantile independence assumptions, or under $\mathcal{U}$-independence. We then compare these identified sets in a numerical illustration. In this illustration, the identified sets are significantly larger under $\mathcal{U}$-independence, implying that the average value constraint has substantial identifying power.

\subsection*{Related Literature}\label{sec:litReview}

Quantile independence of $U$ from $X$ constraints the distribution of the unobservable $U$ conditional on the observable $X$. Our analysis is essentially a study of what these constraints on $U \mid X$ imply about the distribution of $X \mid U$. Hence it relies on the fact that, like mean independence, quantile independence treats these two variables asymmetrically. While this asymmetry has long been noted in the literature (for example, page 85 of \citealt{Manski1988book}), we are unaware of any prior study of its implications.\footnote{A large literature in statistics studies dependence concepts; for example, see \cite{Joe1997}. Many of these concepts are asymmetric. This literature studies the properties of and relations between these concepts. Quantile independence is not one of the commonly studied concepts in this literature.}  Instead, most prior research states various constraints on the joint distribution of $U$ and $X$ as a menu of options, with little guidance for choosing between them (for example, see \citealt{Manski1988} or section 2 of \citealt{Powell1994}). Mean independence is sometimes argued to be undesirable since it is not invariant to strictly monotone transformations of the variables (for example, see page 8 of \citealt{Imbens2004}). This non-invariance concern does not apply to quantile independence restrictions, or the stronger assumption of full independence.

A key point of our paper is that the choice of \emph{any} assumption weaker than full independence depends on the form of selection on unobservables one wishes to allow. We have emphasized this by characterizing the form of selection on unobservables allowed by quantile independence conditions. In principle, a similar analysis can be done for other kinds of exogeneity assumptions, like zero correlation, mean independence, or conditional symmetry.

In this paper we emphasize the interpretation of a set of quantile independence conditions which are \emph{strictly} weaker than full independence. As the most common case, a large literature has studied the identifying power of a single quantile independence condition. For example, \citet[page 732]{Manski1988} performs an identification analysis and notes that ``If, in fact, other quantiles are also independent of $x$, this information is ignored.'' To give an idea of the breadth of models where quantile independence is used, we review a subset of papers which perform similar identification analyses; we omit papers which use quantile independence conditions only as a characterization of a full independence assumption.\footnote{In particular, our results are not relevant if one is interested solely in descriptive quantile regressions, rather than causal effects or structural functions. \cite{Sasaki2015} analyzes the relationship between quantile regressions and structural functions in detail.} We also omit papers primarily on estimation theory. See \cite{QRhandbook2017} for a comprehensive overview of quantile methods.

Settings where quantile independence is used as an identifying assumption include:
binary response models with interval measured regressors (\citealt{ManskiTamer2002}),
discrete response models with exogenous regressors (\citealt{Manski1985}, \citealt{Torgovitsky2015}),
discrete response models with endogenous regressors (\citealt{BlundellPowell2004}, \citealt{Chesher2010}),
discrete games (\citealt{Tang2010}, \citealt{WanXu2014}, \citealt{Kline2015}),
IV quantile treatment effect models (\citealt{ChernozhukovHansen2005}, \citealt{Chesher2007worldCongress}, \citealt{ChernozhukovHansenWuthrich2017}),
triangular nonseparable models (\citealt{Chesher2003,Chesher2005,Chesher2007,Chesher2007worldCongress}),
generalized instrumental variable models (\citealt{ChesherRosen2017}),
panel data models (\citealt{Koenker2004}, \citealt{GalvaoKato2017}),
censored regression models (\citealt{Powell1984,Powell1986}, \citealt{HonoreKhanPowell2002}, \citealt{HongTamer2003}),
social interaction models (\citealt{BrockDurlauf2007}),
bargaining models (\citealt{MerloTang2012}),
and transformation models (\citealt{Khan2001}).

One of our main results is that quantile independence assumptions impose a \emph{non}-monotonicity condition on treatment selection. In contrast, monotonicity conditions of various kinds are often viewed as plausible, and are widely used throughout the econometrics literature. These assumptions include
monotonicity of treatment on potential outcomes (\citealt{Matzkin1994}, \citealt{Manski1997}, \citealt{AltonjiMatzkin2005}), 
monotonicity of unobservables on potential outcomes (\citealt{Matzkin2003}, \citealt{Chesher2003,Chesher2005}), 
monotonicity of an instrument on potential treatment (\citealt{ImbensAngrist1994}),
monotonicity of unobservables on potential treatment (\citealt{ImbensNewey2009}), 
mean potential outcomes conditional on an instrument are monotonic in the instrument (\citealt{ManskiPepper2000,ManskiPepper2009}), 
quantiles of potential outcomes conditional on an instrument are monotonic in the instrument (\citealt{Giustinelli2011}), 
monotonicity of potential outcomes in actions of other players in a game (\citealt{KlineTamer2012}, \citealt{Lazzati2015}), 
and many others. See \cite{ChetverikovSantosShaikh2017} for a recent survey of the econometrics literature on shape restrictions, including monotonicity. When applying the main result in our paper to the treatment effects model, our non-monotonicity result concerns the relationship between a realized treatment and an unobservable, like a potential outcome.

\section{Characterizing Quantile Independence}\label{sec:characterizationSection}

In this section, we present our main characterization result. Let $X$ be an observable random variable and $U$ an unobservable random variable. For example, in section \ref{sec:treatmentEffects} we study a treatment effects model where $X$ is a binary treatment and $U$ is an unobserved potential outcome. In this section, however, we generally remain agnostic as to the interpretation of these variables. Finally, note that all results in this section continue to hold if one conditions on an additional vector of observed covariates $W$, as is typically the case in empirical applications.

\subsection{A Class of Quantile Independence Assumptions}

Quantile independence of $U$ from $X$ is based on the well known result that statistical independence between $U$ and $X$ is equivalent to
\[
	Q_{U \mid X}(\tau \mid x) = Q_U(\tau)
\]
for all $\tau \in (0,1)$ and all $x \in \supp(X)$. Existing research typically focuses on two extreme assumptions: a single quantile independence condition holds, such as $Q_{U \mid X}(0.5 \mid x) = Q_U(0.5)$ for all $x \in \supp(X)$, or all quantile independence conditions hold (statistical independence). We study a class of assumptions which includes both of these cases. It is often more natural to work with cdfs. Say $U$ is $\tau$-cdf independent of $X$ if 
\begin{equation}\label{cdfIndependence}
	F_{U \mid X}(u \mid x) = F_U(u)
\end{equation}
holds for $u=\tau$ and for all $x \in \supp(X)$. This motivates the following definition.\footnote{See \cite{BelloniChenChernozhukov2017} and \cite{ZhuZhangXu2017} for similar generalizations of quantile independence.}

\begin{definition}
Let $\mathcal{T}$ be a subset of $\R$. Say $U$ is \emph{$\mathcal{T}$-independent} of $X$ if for all $\tau \in \mathcal{T}$, the cdf independence condition \eqref{cdfIndependence} holds for all $x \in \supp(X)$.
\end{definition}

We assume that $U$ is continuously distributed throughout this paper. In this case, we can without loss of generality normalize its distribution to be uniform on $[0,1]$. This follows since $U$ is $\tau$-cdf independent of $X$ if and only if $F_U(U)$ is $F_U(\tau)$-cdf independent of $X$, for continuously distributed $U$. For the normalized variable $F_U(U)$, equation $\eqref{cdfIndependence}$ is nontrivial only for $\tau \in (0,1)$, and hence it suffices to let $\mathcal{T}$ be a subset of $(0,1)$.

\subsection{The Characterization Result}

We now focus on the binary treatment case, $X \in \{ 0,1 \}$. We extend our results to the discrete and continuous $X$ case in section \ref{sec:ctsX}. Recall that the dependence structure between $X$ and $U$ is fully characterized by the latent propensity score
\[
	p(u) = \Prob(X=1 \mid U=u).
\]
Full statistical independence, $X \independent U$, is equivalent to this propensity score being constant:
\[
	p(u) = \Prob(X=1)
\]
for almost all $u \in \supp(U)$. Consequently, any non-constant propensity score represents a deviation from full independence. $\mathcal{T}$-independence restricts the form of these deviations. The following theorem characterizes the set of propensity scores consistent with $\mathcal{T}$-independence.

\begin{theorem}[Average value characterization]\label{thm:AvgValueCharacterization}
Suppose $U$ is continuously distributed; normalize $U \sim \text{Unif}[0,1]$. Suppose $X$ is binary with $\Prob(X=1) \in (0,1)$. Then $U$ is $\mathcal{T}$-independent of $X$ if and only if
\begin{equation}\label{eq:averageValueCondition_main}
	\frac{1}{t_2 - t_1} \int_{[t_1,t_2]} p(u) \; du = \Prob(X=1)
\end{equation}
for all $t_1, t_2 \in\mathcal{T} \cup \{ 0,1 \}$ with $t_1 < t_2$.
\end{theorem}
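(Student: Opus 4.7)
The plan is to reduce both sides of the equivalence to a statement about the single function
\[
P(\tau) := \int_0^\tau p(u)\,du,
\]
and then observe that the theorem reads as a trivial identity for $P$ evaluated at the points of $\mathcal{T}\cup\{0,1\}$.

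First I would translate the cdf-independence condition into a statement about $p$. Since $U\sim\text{Unif}[0,1]$, the joint law of $(U,X)$ satisfies
\[
\Prob(U\le\tau,\,X=1) \;=\; \int_0^\tau p(u)\,du \;=\; P(\tau),
\]
so Bayes's rule gives $F_{U\mid X}(\tau\mid 1) = P(\tau)/\Prob(X=1)$. Setting this equal to $F_U(\tau)=\tau$ yields the scalar equation
\[
P(\tau) \;=\; \tau\,\Prob(X=1). \qquad (\ast)
\]
The analogous equation obtained from $F_{U\mid X}(\tau\mid 0) = \tau$ is redundant: it is implied by $(\ast)$ together with the law of total probability $\tau = \Prob(U\le\tau) = F_{U\mid X}(\tau\mid 0)\Prob(X=0) + F_{U\mid X}(\tau\mid 1)\Prob(X=1)$. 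So $\tau$-cdf independence is equivalent to $(\ast)$, and $\mathcal{T}$-independence is equivalent to $(\ast)$ holding for all $\tau\in\mathcal{T}$.

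Next I would note the two automatic boundary identities: $P(0)=0$ and $P(1)=\Exp[p(U)] = \Prob(X=1)$ by the law of iterated expectations, so $(\ast)$ is trivially true at $\tau=0$ and $\tau=1$. This is what justifies adjoining $\{0,1\}$ to $\mathcal{T}$ in the statement.

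For the equivalence itself, the forward direction is a subtraction: if $(\ast)$ holds at every $t\in\mathcal{T}\cup\{0,1\}$, then for any such $t_1<t_2$,
\[
\int_{[t_1,t_2]} p(u)\,du \;=\; P(t_2)-P(t_1) \;=\; (t_2-t_1)\Prob(X=1),
\]
which is exactly the displayed average-value condition. For the converse, I would specialize the average-value condition to $t_1=0$, $t_2=\tau$ for an arbitrary $\tau\in\mathcal{T}$ (allowed because $0\in\mathcal{T}\cup\{0,1\}$); this gives $P(\tau) = \tau\Prob(X=1)$, i.e.\ $(\ast)$, hence $\tau$-cdf independence. The only step with any content is the initial Bayes computation for $F_{U\mid X}(\cdot\mid 1)$; everything afterwards is bookkeeping, so I do not anticipate a real obstacle.
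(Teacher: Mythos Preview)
Your proposal is correct and follows essentially the same route as the paper: the paper deduces this from its discrete-$X$ characterization (Theorem \ref{thm:AvgValueCharacterization_discreteX}), whose proof likewise uses Bayes' rule to translate $\tau$-cdf independence into an integral identity for $p$, obtains the forward direction by taking differences over $[t_1,t_2]$, and obtains the converse by specializing to $t_1=0$. Your use of the antiderivative $P(\tau)=\int_0^\tau p(u)\,du$ is just a convenient repackaging of the paper's $\Prob(X=x\mid U\in[t_1,t_2])$ computation (their Lemma \ref{lemma:subgroupPropensityScore}), and your observation that $(\ast)$ holds automatically at $\tau\in\{0,1\}$ is exactly what the paper uses implicitly when it allows $t_1,t_2\in\{0,1\}$.
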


The proof, along with all others, is in appendix \ref{sec:proofs}. Theorem \ref{thm:AvgValueCharacterization} says that $\mathcal{T}$-independence holds if and only if for every interval with endpoints in $\mathcal{T} \cup \{ 0,1 \}$ the average value of the propensity score over that interval equals the overall average of the propensity score, since
\[
	\int_0^1 p(u) \; du = \Prob(X=1).
\]
This overall average is just the unconditional probability of being treated.

Given our assumption that $U$ is continuously distributed, specifying $U \sim \text{Unif}[0,1]$ is a normalization which simply rescales the latent propensity score's domain. If $V$ is our original continuously distributed variable and $U \equiv F_V(V)$ is our scaled variable, then the constraint \eqref{eq:averageValueCondition_main} on $p(u)$ can be translated into a constraint on the original latent propensity score via the equation $\Prob(X=1 \mid V=v) = p(F_V(v))$ for almost all $v \in \supp(V)$.

To illustrate theorem \ref{thm:AvgValueCharacterization}, suppose $\mathcal{T} = \{ 0.5 \}$ and $\Prob(X=1) = 0.5$. Here we have just a single nontrivial cdf independence condition, median independence. Figure \ref{propensityScoresFig1} plots three different propensity scores which are consistent with $\mathcal{T}$-independence under this choice of $\mathcal{T}$; that is, which are consistent with median independence. This figure illustrates several features of such propensity scores: The value of $p(u)$ may vary over the entire range $[0,1]$. $p$ does not need to be symmetric about $u = 0.5$, nor does it need to be continuous. Finally, as suggested by the pictures, $p$ must actually be nonmonotonic; we show this in corollary \ref{corr:monotonicPropensityScores} next.

\begin{figure}[t]
\centering
\includegraphics[width=52mm]{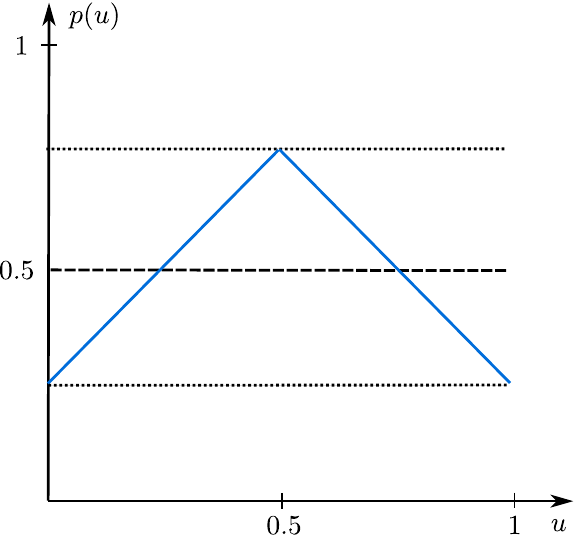}
\hspace{0.2mm}
\includegraphics[width=52mm]{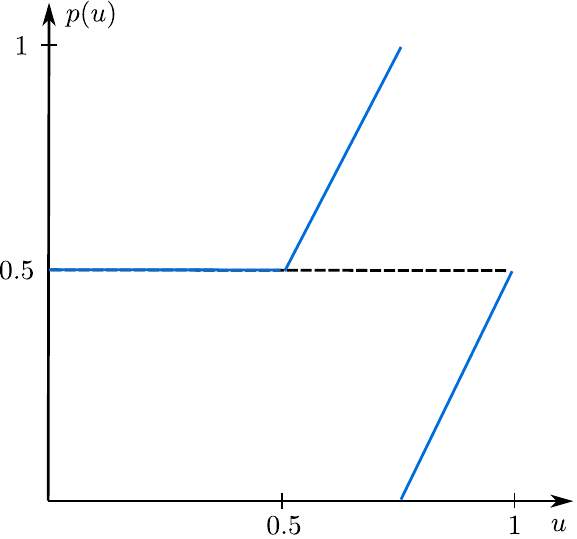}
\hspace{0.2mm}
\includegraphics[width=52mm]{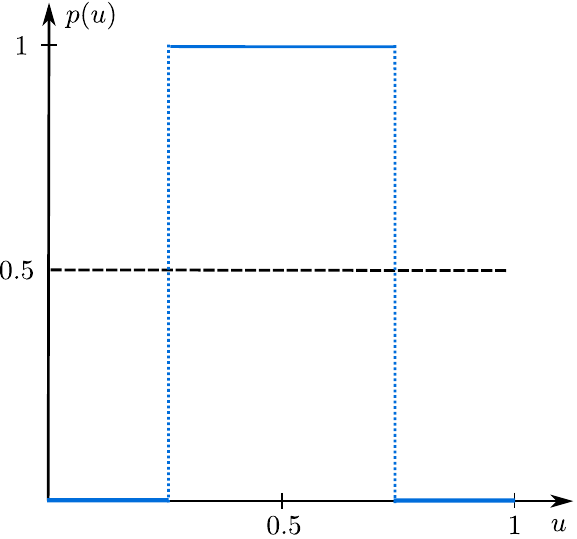}
\caption{Various propensity scores consistent with $\mathcal{T} = \{ 0.5 \}$-independence, when $\Prob(X=1) = 0.5$.}
\label{propensityScoresFig1}
\end{figure}

\begin{corollary}\label{corr:monotonicPropensityScores}
Suppose $X$ is binary and $U$ is continuously distributed; normalize $U \sim \text{Unif}[0,1]$. Suppose the propensity score $p$ is weakly monotonic and not constant on $(0,1)$. Then $U$ is not $\tau$-cdf independent of $X$ for all $\tau \in (0,1)$.
\end{corollary}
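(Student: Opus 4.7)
My plan is to prove the contrapositive using Theorem \ref{thm:AvgValueCharacterization}. Suppose, for some $\tau^\ast \in (0,1)$, $U$ is $\tau^\ast$-cdf independent of $X$; I will show this forces $p$ to be constant on $(0,1)$, contradicting the corollary's hypothesis. Applying the theorem with $\mathcal{T} = \{\tau^\ast\}$ gives
\[
\frac{1}{\tau^\ast}\int_0^{\tau^\ast} p(u)\,du \;=\; \Prob(X=1) \;=\; \frac{1}{1-\tau^\ast}\int_{\tau^\ast}^{1} p(u)\,du,
\]
so the average of $p$ over $[0,\tau^\ast]$ equals the average of $p$ over $[\tau^\ast,1]$.

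Without loss of generality, assume $p$ is weakly non-decreasing (the non-increasing case is symmetric). Monotonicity gives $p(u_1) \leq p(u_2)$ whenever $u_1 \leq \tau^\ast \leq u_2$, so averaging over $u_1 \in [0,\tau^\ast]$ and $u_2 \in [\tau^\ast,1]$ shows that the left average is no greater than the right, with equality if and only if $p(u_1) = p(u_2)$ for almost every such pair. Combined with the equality just derived from the theorem, this forces $p$ to equal a common constant $c$ almost everywhere on $[0,1]$.

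The main (and only) technical obstacle is upgrading ``constant almost everywhere on $[0,1]$'' to ``constant on $(0,1)$'', but this is a routine consequence of monotonicity: for any $x \in (0,1)$ pick $y, z \in (0,1)$ with $y < x < z$ and $p(y) = p(z) = c$ (such points exist because the set where $p \neq c$ has measure zero, hence cannot contain $(0,x)$ or $(x,1)$), and squeeze $c = p(y) \leq p(x) \leq p(z) = c$. This contradicts the hypothesis that $p$ is non-constant on $(0,1)$, completing the proof of the contrapositive.
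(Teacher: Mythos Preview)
Your proof is correct and follows the same overall strategy as the paper: invoke Theorem~\ref{thm:AvgValueCharacterization} with $\mathcal{T}=\{\tau^\ast\}$ to equate the average of $p$ on $[0,\tau^\ast]$ with its average on $[\tau^\ast,1]$, then argue that monotonicity together with this equality forces $p$ to be constant. The execution differs slightly. The paper introduces the auxiliary function $f(\tau)=\int_0^\tau p(u)\,du-\tau\int_0^1 p(u)\,du$, notes $f(0)=f(1)=0$, and uses that $f'(\tau)=p(\tau)-\Prob(X=1)$ is nondecreasing (and changes sign) to conclude $f$ has no interior zero; this directly rules out every $\tau\in(0,1)$ in one stroke. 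You instead fix a putative $\tau^\ast$ and argue, via the nonnegative double integrand $p(u_2)-p(u_1)$ on $[0,\tau^\ast]\times[\tau^\ast,1]$, that equality of the two averages forces $p$ to be a.e.\ constant, and then you use monotonicity to pass from a.e.\ constancy to pointwise constancy on $(0,1)$. Your route is arguably a touch more elementary (no differentiation of an auxiliary function) but requires the extra a.e.-to-everywhere step, which you handle cleanly; the paper's route avoids that step at the cost of a short convexity-type argument on $f$.
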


Corollary \ref{corr:monotonicPropensityScores} shows that a non-constant propensity score must be non-monotonic if it is to satisfy a $\tau$-cdf independence condition. This result can be extended as follows. Say that a function $f$ \emph{changes direction at least $K$ times} if there exists a partition of its domain into $K$ intervals such that $f$ is not monotonic on each interval.

\begin{corollary}\label{corr:SignChanges}
Suppose $X$ is binary and $U$ is continuously distributed; normalize $U \sim \text{Unif}[0,1]$. Suppose $U$ is $\mathcal{T}$-independent of $X$. Suppose there exists a version of $p$ without removable discontinuities. Partition $[0,1]$ by the sets $\mathcal{U}_k = [t_{k-1},t_k)$ for $k=1,\ldots,K-1$ with $t_0 = 0$, $t_K = 1$, and $\mathcal{U}_K = [t_{K-1},t_K]$ and such that for each $k$ there is a $\tau_k \in \mathcal{T}$ with $\tau_k \in \mathcal{U}_k$. Suppose $p$ is not constant over each set $\mathcal{U}_k$, $k=1,\ldots,K$. Then $p$ changes direction at least $K$ times.
\end{corollary}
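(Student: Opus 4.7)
The plan is to prove this by combining Theorem~\ref{thm:AvgValueCharacterization}'s average-value characterization with a localized version of Corollary~\ref{corr:monotonicPropensityScores}. Let $c := \Prob(X=1)$ and adopt the conventions $\tau_0 := 0$ and $\tau_{K+1} := 1$. Applying Theorem~\ref{thm:AvgValueCharacterization}, $\mathcal{T}$-independence yields $\int_{t_1}^{t_2} p(u)\,du = c(t_2-t_1)$ for every $t_1 < t_2$ in $\mathcal{T}\cup\{0,1\}$; in particular, the averages of $p$ over $[\tau_{k-1},\tau_k]$ and over $[\tau_k,\tau_{k+1}]$ both equal $c$ for each $k=1,\ldots,K$.

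Fix $k$ and consider the interval $[\tau_{k-1},\tau_{k+1}]$. After affine rescaling to $[0,1]$, this interval mirrors the setup of Corollary~\ref{corr:monotonicPropensityScores}: it has an interior point $\tau_k$ at which the averages of $p$ on both halves coincide with the overall average $c$. The reasoning behind Corollary~\ref{corr:monotonicPropensityScores} then applies verbatim and forces $p$ to be constant on $[\tau_{k-1},\tau_{k+1}]$ whenever it is weakly monotonic there; the no-removable-discontinuities hypothesis is what upgrades a.e.\ constancy to pointwise constancy of the chosen version. Because $\mathcal{U}_k \subseteq [\tau_{k-1},\tau_{k+1}]$ and $p$ is by hypothesis non-constant on $\mathcal{U}_k$, this rules out monotonicity, so $p$ is non-monotonic on $[\tau_{k-1},\tau_{k+1}]$.

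The remaining task is to convert these $K$ overlapping non-monotonic intervals $\{[\tau_{k-1},\tau_{k+1}]\}_{k=1}^{K}$ into a genuine partition of $[0,1]$ into $K$ intervals, each of which carries a non-monotonicity of $p$. The natural candidate partition is $\{\mathcal{U}_k\}_{k=1}^{K}$ itself. I would show $p$ is non-monotonic on each $\mathcal{U}_k$ by contradiction: if $p$ were weakly monotonic (WLOG increasing) on $\mathcal{U}_k$, then splitting each of the averaging intervals $[\tau_{k-1},\tau_k]$ and $[\tau_k,\tau_{k+1}]$ at the boundaries $t_{k-1}$ and $t_k$ respectively, and using monotonicity on the parts that lie inside $\mathcal{U}_k$, I would deduce $p(\tau_k)=c$ and then $p\equiv c$ on $\mathcal{U}_k$, contradicting non-constancy on $\mathcal{U}_k$.

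The hard part will be this last conversion step. Non-monotonicity on the enlarged interval $[\tau_{k-1},\tau_{k+1}]$ is essentially immediate from Corollary~\ref{corr:monotonicPropensityScores}, but to obtain a partition we need non-monotonicity confined to each individual $\mathcal{U}_k$. Since the averaging conditions of Theorem~\ref{thm:AvgValueCharacterization} are anchored at points of $\mathcal{T}\cup\{0,1\}$ rather than at the partition boundaries $t_{k-1},t_k$, the argument must carefully control what $p$ does on the parts of $[\tau_{k-1},\tau_k]$ and $[\tau_k,\tau_{k+1}]$ that spill into the adjacent $\mathcal{U}_{k-1}$ and $\mathcal{U}_{k+1}$. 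The per-$\mathcal{U}_k$ non-constancy hypothesis — strictly stronger than collective non-constancy on $[0,1]$ — is what I expect to prevent the turning points of $p$ from clustering at the boundaries $t_k$ and thereby defeating the partition, and the no-removable-discontinuity hypothesis is what lets the a.e.\ conclusions be read off the chosen version of $p$ pointwise.
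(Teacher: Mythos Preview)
The paper's proof is a one-line sketch: for each $\mathcal{U}_k$, condition on $U\in\mathcal{U}_k$, assert that a nontrivial $\tau$-cdf independence condition survives the conditioning (since $\tau_k\in\mathcal{U}_k\cap\mathcal{T}$), and rerun the argument of Corollary~\ref{corr:monotonicPropensityScores} there. In particular, it targets non-monotonicity directly on the given partition $\{\mathcal{U}_k\}$, with no detour through your overlapping intervals $[\tau_{k-1},\tau_{k+1}]$. Your first step---non-monotonicity of $p$ on each $[\tau_{k-1},\tau_{k+1}]$---is correct and cleaner than what the paper writes down, because all three endpoints $\tau_{k-1},\tau_k,\tau_{k+1}$ lie in $\mathcal{T}\cup\{0,1\}$ and Theorem~\ref{thm:AvgValueCharacterization} applies without any conditioning claim.

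Your ``hard part,'' the conversion to the partition $\{\mathcal{U}_k\}$, is a genuine gap, and the sketched contradiction does not close it. The obstruction is precisely the one you flag: Theorem~\ref{thm:AvgValueCharacterization} anchors its constraints at points of $\mathcal{T}\cup\{0,1\}$, not at the $t_k$, so monotonicity of $p$ on $\mathcal{U}_k$ gives no control over $\int_{t_{k-1}}^{\tau_k}p$ or $\int_{\tau_k}^{t_k}p$ separately---the spillover pieces $\int_{\tau_{k-1}}^{t_{k-1}}p$ and $\int_{t_k}^{\tau_{k+1}}p$ live in neighboring $\mathcal{U}$'s where nothing is assumed. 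Concretely, take $\mathcal{T}=\{0.25,0.5,0.75\}$, $\Prob(X=1)=0.5$, $K=2$, $t_1=0.3$, and let $p=0.5$ on $[0,0.25]$, $p=0.9$ on $(0.25,0.3)$, $p=0.4$ on $[0.3,0.5]$, $p=0.5$ on $(0.5,1]$. One checks all four quarter-averages equal $0.5$ (so $\mathcal{T}$-independence holds), $p$ has only jump discontinuities, and $p$ is non-constant on each of $\mathcal{U}_1=[0,0.3)$ and $\mathcal{U}_2=[0.3,1]$; yet $p$ is weakly increasing on \emph{both} $\mathcal{U}_1$ and $\mathcal{U}_2$. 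So the target ``$p$ is non-monotone on each given $\mathcal{U}_k$'' that both you and the paper's sketch aim for is simply false here. Worse, for this $p$ no partition of $[0,1]$ into two intervals has $p$ non-monotone on both pieces (any split point $s\ge 0.3$ leaves $p$ weakly increasing on $[s,1]$, while any $s<0.3$ leaves $p$ weakly increasing on $[0,s]$), so the corollary's conclusion itself appears to fail on this example, and no repair of either argument can reach it.
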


This result essentially says that such propensity scores must oscillate up and down at least $K$ times (we assume $p$ does not have removable discontinuities to rule out trivial direction changes).  For example, as in figure \ref{propensityScoresFig1}, suppose we continue to have $\Prob(X=1) = 0.5$ but we add a few more isolated $\tau$'s to $\mathcal{T}$. Figure \ref{propensityScoresFig2} shows several propensity scores consistent with $\mathcal{T}$-independence for larger choices of $\mathcal{T}$. Consider the figure on the left, with $\mathcal{T} = \{ 0.25, 0.5, 0.75 \}$. Partition $[0,1] = [0,0.4) \cup [0.4,0.6) \cup [0.6,1]$. Then $p$ is not monotonic over each partition set, and each partition set contains one element of $\mathcal{T}$: $0.25 \in [0,0.4)$, $0.5 \in [0.4,0.6)$, and $0.75 \in [0.5,1]$. There are $K=3$ partition sets, and hence the corollary says $p$ must change direction at least 3 times. We see this in the figure since there are 3 interior local extrema. A similar analysis holds for the figure on the right. Overall, these triangular and sawtooth propensity scores illustrate the oscillation required by corollary \ref{corr:SignChanges}.

\begin{figure}[t]
\centering
\includegraphics[width=52mm]{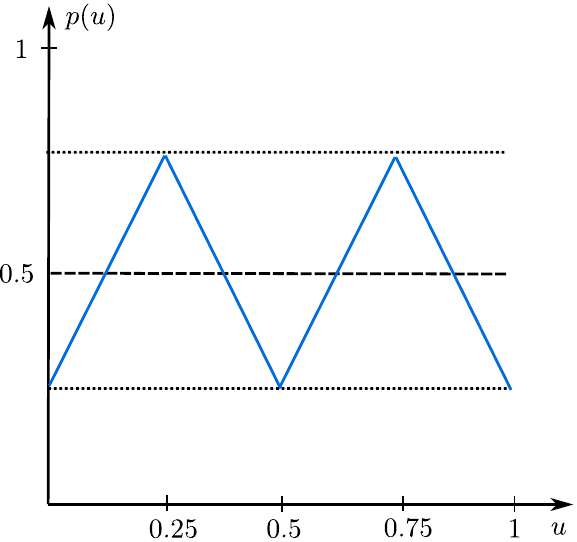}
\hspace{0.2mm}
\includegraphics[width=52mm]{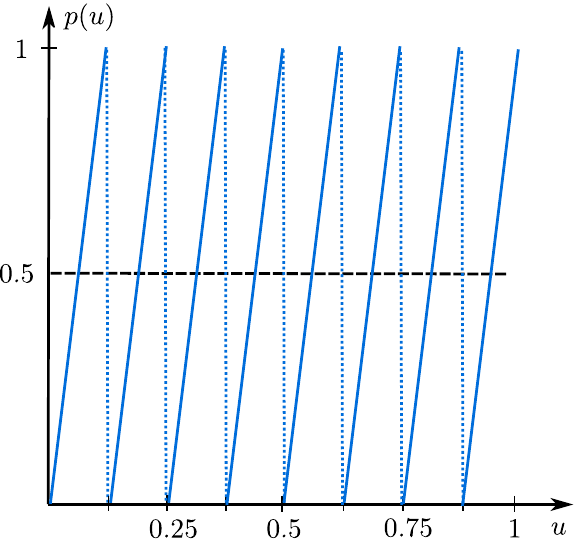}
\caption{Some propensity scores consistent with $\mathcal{T}$-independence when $\Prob(X=1) = 0.5$. Left: $\mathcal{T} = \{ 0.25, 0.5, 0.75 \}$. Right: $\mathcal{T} = \{ 0.125, 0.25, 0.375, 0.5, 0.625, 0.75, 0.875 \}$.}
\label{propensityScoresFig2}
\end{figure}

One final feature we document is that as long as there is some interval which is not in $\mathcal{T}$ then there is a propensity score which takes the most extreme values possible, 0 and 1.

\begin{corollary}\label{corr:TauIndep_ExtremeValues}
Suppose $X$ is binary and $U$ is continuously distributed; normalize $U \sim \text{Unif}[0,1]$. Suppose $[0,1] \setminus \mathcal{T}$ contains a non-degenerate interval. Then there exists a propensity score which is consistent with $\mathcal{T}$-independence of $U$ from $X$ and for which the sets
\[
	\{u \in [0,1]: p(u) = 0\}
	\qquad \text{and} \qquad
	\{u \in [0,1]: p(u) = 1\}
\]
have positive Lebesgue measure.
\end{corollary}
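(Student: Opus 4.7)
The plan is to exhibit an explicit propensity score of the desired form. By hypothesis, $[0,1]\setminus\mathcal{T}$ contains some non-degenerate interval; after possibly shrinking it, take this to be an open interval $(a,b)\subseteq[0,1]$ with $a<b$, and set $L=b-a$ and $q=\Prob(X=1)\in(0,1)$. I would let $p(u)\equiv q$ on the complement of $(a,b)$, and inside $(a,b)$ partition the window into two subintervals: one of length $qL$ on which $p=1$, and one of length $(1-q)L$ on which $p=0$. Both pieces have strictly positive Lebesgue measure since $q\in(0,1)$ and $L>0$, which immediately yields the required positivity of $\{p=0\}$ and $\{p=1\}$. By design, $\int_a^b p(u)\,du = qL\cdot 1 + (1-q)L\cdot 0 = qL$, so the average of $p$ over $(a,b)$ equals $q$.

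The main work is verifying $\mathcal{T}$-independence via the characterization in Theorem~\ref{thm:AvgValueCharacterization}. The key observation is that no point of $\mathcal{T}\cup\{0,1\}$ lies in the open interval $(a,b)$: points of $\mathcal{T}$ are ruled out by the choice of the interval, and $0,1\notin(a,b)$ since $(a,b)\subseteq[0,1]$. Consequently every $t\in\mathcal{T}\cup\{0,1\}$ satisfies $t\leq a$ or $t\geq b$, so for any admissible pair $t_1<t_2$ the interval $[t_1,t_2]$ falls into exactly one of three cases: it lies in $[0,a]$, it lies in $[b,1]$, or it contains all of $[a,b]$. In the first two cases $p\equiv q$ on $[t_1,t_2]$ and the average is trivially $q$. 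In the third case a one-line integral split into $[t_1,a]\cup[a,b]\cup[b,t_2]$ gives $\int_{t_1}^{t_2} p(u)\,du = q(a-t_1) + qL + q(t_2-b) = q(t_2-t_1)$, so the average is again $q$. Theorem~\ref{thm:AvgValueCharacterization} then delivers $\mathcal{T}$-independence.

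The only step requiring thought is the case analysis for how $[t_1,t_2]$ sits relative to the ``active window'' $(a,b)$, and the only real ingredient there — that $\mathcal{T}\cup\{0,1\}$ never enters the open interval $(a,b)$ — is essentially built into the hypotheses. The rest is elementary length accounting; I do not expect a serious obstacle beyond choosing the lengths of the two interior subintervals so that the average on $(a,b)$ is tuned exactly to $q$.
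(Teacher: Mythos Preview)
Your proposal is correct and follows essentially the same approach as the paper: the paper likewise sets $p\equiv\Prob(X=1)$ outside a window $[a,b]\subseteq(0,1)\setminus\mathcal{T}$ and splits the window into a piece of length $\Prob(X=1)(b-a)$ where $p=1$ and the remainder where $p=0$, then verifies $\mathcal{T}$-independence via Theorem~\ref{thm:AvgValueCharacterization} by the same three-case analysis. Your treatment is arguably a bit cleaner in explicitly noting that $0,1\notin(a,b)$ so that the case analysis covers all of $\mathcal{T}\cup\{0,1\}$.
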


\subsection{Quantile Independence versus Mean Independence}

Like quantile independence, mean independence is commonly used to weaken statistical independence. For example, \cite{HeckmanIchimuraTodd1998} assume potential outcomes are mean independent of treatments, conditional on covariates. Our main result, theorem \ref{thm:AvgValueCharacterization}, allows us to compare the kinds of constraints on selection on unobservables imposed by quantile independence with the constraints imposed by mean independence. In this subsection, we briefly explore this comparison.

Say $U$ is \emph{mean independent} of $X$ if $\Exp(U \mid X=x) = \Exp(U)$ for all $x \in \supp(X)$. The following result follows immediately from this definition.

\begin{proposition}\label{prop:meanIndepConstraint}
Suppose $U$ is continuously distributed with density $f_U$. Suppose $X$ is binary with $\Prob(X=1) \in (0,1)$. Then $U$ is mean independent of $X$ if and only if
\begin{equation}\label{eq:meanIndepIntegral}
	\int_{-\infty}^\infty u \; f_U(u) \; p(u) \; du = \Exp(U)  \Prob(X=1).
\end{equation}
\end{proposition}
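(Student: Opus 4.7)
The plan is to show that the displayed integral identity is just a rewriting of $\Exp(U \mid X=1) = \Exp(U)$, and then to note that when $X$ is binary this single equality already delivers full mean independence.

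First I would rewrite the left-hand side of \eqref{eq:meanIndepIntegral} as an ordinary expectation involving $X$. Conditioning on $U$ and using the tower property gives
\[
\int_{-\infty}^\infty u f_U(u) p(u)\, du = \Exp\bigl[U \cdot \Prob(X=1 \mid U)\bigr] = \Exp\bigl[U \cdot \indicator\{X=1\}\bigr] = \Exp(U \mid X=1)\, \Prob(X=1),
\]
where the last equality uses $\Prob(X=1) \in (0,1)$. Hence \eqref{eq:meanIndepIntegral} is equivalent to $\Exp(U \mid X=1)\,\Prob(X=1) = \Exp(U)\,\Prob(X=1)$, and since $\Prob(X=1) > 0$, equivalent to $\Exp(U \mid X=1) = \Exp(U)$.

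Second, I would close the argument by showing this single-value condition is equivalent to mean independence. The ``only if'' direction is immediate from the definition of mean independence. For the ``if'' direction, the law of iterated expectations gives
\[
\Exp(U) = \Exp(U \mid X=1)\,\Prob(X=1) + \Exp(U \mid X=0)\,\Prob(X=0).
\]
Substituting $\Exp(U \mid X=1) = \Exp(U)$ and using $\Prob(X=0) = 1 - \Prob(X=1) \in (0,1)$ yields $\Exp(U \mid X=0) = \Exp(U)$, which together give mean independence.

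There is no real obstacle here; the only thing to be careful about is justifying the interchange in the first display (applying Fubini/tower via $\Prob(X=1 \mid U = u) = p(u)$ under the maintained assumption that $U$ has density $f_U$), and invoking $\Prob(X=1) \in (0,1)$ to divide through and to recover $\Exp(U \mid X=0) = \Exp(U)$ from the binary structure of $X$.
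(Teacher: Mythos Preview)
Your proposal is correct and takes essentially the same approach as the paper: both recognize that the left side of \eqref{eq:meanIndepIntegral} equals $\Exp(U \mid X=1)\,\Prob(X=1)$, the paper via Bayes' rule on densities and you via the tower property. Your argument is in fact slightly more complete than the paper's terse proof, since you explicitly handle the ``if'' direction by using the binary structure of $X$ and the law of iterated expectations to recover $\Exp(U \mid X=0) = \Exp(U)$.
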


In particular, for comparison with theorem \ref{thm:AvgValueCharacterization}, if $U \sim \text{Unif}[0,1]$ then equation \eqref{eq:meanIndepIntegral} simplifies to
\[
	\int_0^1 2u \, p(u) \; du = \Prob(X=1).
\]
Theorem \ref{thm:AvgValueCharacterization} showed that quantile independence constrains the \emph{unweighted} average value of the latent propensity score over certain \emph{subintervals} of its domain. In contrast, proposition \ref{prop:meanIndepConstraint} shows that mean independence constrains a \emph{weighted} average value of the latent propensity score over its entire domain. Proposition \ref{prop:meanIndepConstraint} can be extended to multi-valued and continuous $X$ similar to our analysis of quantile independence in section \ref{sec:ctsX}; we omit this extension for brevity.

Although mean independence imposes different a constraint on the latent propensity score than quantile independence, it also requires non-constant latent propensity scores to be non-monotonic.

\begin{corollary}\label{corr:mean-indep prop scores}
Suppose $X$ is binary and $U$ is continuously distributed with finite mean and support equal to a possibly unbounded interval. Suppose the propensity score $p$ is weakly monotonic and not constant on the interior of its domain. Then $U$ is not mean independent of $X$.
\end{corollary}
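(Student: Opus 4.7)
My plan is to exploit the fact that for binary $X$ with $\Prob(X=1) \in (0,1)$, mean independence of $U$ from $X$ is equivalent to $\cov(X, U) = 0$, and then to show that a weakly monotone non-constant latent propensity score forces $\cov(X, U) \neq 0$. By iterated expectations with $\Exp(X \mid U) = p(U)$, and writing $\bar p = \Prob(X=1) = \int p \, dF_U$,
\[
\cov(X, U) = \int_{\supp(U)} (p(u) - \bar p)(u - c) \, f_U(u) \, du
\]
for any constant $c$, since $\Exp[p(U) - \bar p] = 0$. This is essentially a reparametrization of equation \eqref{eq:meanIndepIntegral} in Proposition \ref{prop:meanIndepConstraint}. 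Note also that the non-constancy of $p$ on an interval support where $f_U > 0$ already forces $\bar p \in (0,1)$.

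The key step is to choose $c$ so that the integrand has a constant sign. Assume without loss that $p$ is weakly increasing; the decreasing case is symmetric. Set
\[
u^* = \inf\bigl\{ u \in \supp(U) : p(u) \geq \bar p \bigr\}.
\]
Monotonicity gives $p(u) \leq \bar p$ for $u < u^*$ and $p(u) \geq \bar p$ for $u > u^*$, so $(p(u) - \bar p)(u - u^*) \geq 0$ pointwise on $\supp(U)$, and hence $\cov(X, U) \geq 0$. To upgrade to a strict inequality, note that since $\bar p$ is the $f_U$-weighted mean of the non-constant function $p$ and $f_U$ is positive on the interval $\supp(U)$, both sets $\{u : p(u) < \bar p\}$ and $\{u : p(u) > \bar p\}$ have positive $F_U$-measure; on each of them, away from the single point $u^*$, the integrand is strictly positive. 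Therefore $\cov(X, U) > 0$, contradicting mean independence.

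The main obstacle is handling the crossover point $u^*$ cleanly, particularly if $p$ has a jump there: this is precisely where the hypothesis that $\supp(U)$ is an interval matters, since a gap in the support coinciding with where $p$ crosses $\bar p$ could in principle eliminate the positive-measure contribution on one side. An alternative packaging of the same argument uses the identity $\cov(X, U) = \cov(p(U), U)$ combined with the classical fact that the covariance of two comonotone functions of the same random variable is nonnegative, and is strictly positive whenever both are nondegenerate; that fact follows from the ``two independent copies'' trick $\Exp[(p(U) - p(U'))(U - U')] = 2\,\cov(p(U), U) \geq 0$, with strict inequality once $p$ is not $F_U$-a.s.\ constant.
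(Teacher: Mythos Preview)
Your proof is correct and follows essentially the same strategy as the paper: both reduce to showing $\cov(p(U),U)>0$ by centering so that the integrand becomes pointwise nonnegative, then invoking non-constancy for strict positivity. The only cosmetic difference is the choice of centering constants: the paper centers $U$ at $\Exp(U)$ and subtracts $p(\Exp(U))$ from $p(U)$, writing $\cov(U,p(U))=\Exp[(U-\Exp(U))(p(U)-p(\Exp(U)))]$, whereas you center $p(U)$ at $\bar p$ and subtract the crossover point $u^*$ from $U$; these are dual versions of the same trick, and the paper's choice avoids having to define and reason about $u^*$.
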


\section{Multi-valued and Continuous Treatments}\label{sec:ctsX}

Thus far we have focused on binary $X$. In this section we extend our main characterization 
result (theorem \ref{thm:AvgValueCharacterization}) to both multi-valued discrete and continuous $X$. As in the binary $X$ case, our results show that the deviations from independence allowed by quantile independence require a kind of non-monotonic selection on unobservables.

We begin with the continuous case.

\begin{theorem}[Average value characterization]
\label{thm:AvgValueCharacterization_ctsX}
Suppose $X$ and $U$ are continuously distributed; normalize $U \sim \text{Unif}[0,1]$. Then $U$ is $\mathcal{T}$-independent of $X$ if and only if
\begin{equation}\label{eq:averageValueCondition_main_ctsX}
	\frac{1}{t_2 - t_1} \int_{t_1}^{t_2} \Prob(X > x \mid U = u) \; du = \Prob(X > x)
	\quad \text{for all $x \in \supp(X)$}
\end{equation}
for all $t_1, t_2 \in\mathcal{T} \cup \{ 0,1 \}$ with $t_1 < t_2$.
\end{theorem}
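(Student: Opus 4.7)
The strategy is to reduce Theorem \ref{thm:AvgValueCharacterization_ctsX} to Theorem \ref{thm:AvgValueCharacterization} by slicing $X$ into a family of binary variables indexed by $x$. For each $x \in \R$, define $\tilde X_x = \indicator\{X > x\}$, so that its latent propensity score is $p_x(u) \equiv \Prob(\tilde X_x = 1 \mid U = u) = \Prob(X > x \mid U = u)$ and $\Prob(\tilde X_x = 1) = \Prob(X > x)$.

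The first step is to establish the equivalence: $U$ is $\mathcal{T}$-independent of $X$ if and only if $U$ is $\mathcal{T}$-independent of $\tilde X_x$ for every $x \in \R$. The forward implication is immediate, since $\tilde X_x$ is a measurable function of $X$, so independence of the event $\{U \leq \tau\}$ from $X$ descends to independence from $\tilde X_x$. For the converse, pointwise independence of $\indicator\{U \leq \tau\}$ from each $\tilde X_x$ gives
\[
\Prob(U \leq \tau,\, X > x) = \Prob(U \leq \tau)\,\Prob(X > x) \quad \text{for all } x \in \R.
\]
Since the half-lines $\{(x,\infty): x \in \R\}$ form a $\pi$-system generating $\mathcal{B}(\R)$, Dynkin's $\pi$-$\lambda$ theorem upgrades this to $\indicator\{U \leq \tau\} \independent X$, which is $\tau$-cdf independence.

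The second step applies Theorem \ref{thm:AvgValueCharacterization} pointwise in $x$. For each $x$ with $\Prob(X > x) \in (0,1)$, Theorem \ref{thm:AvgValueCharacterization} applied to the binary variable $\tilde X_x$ says $U$ is $\mathcal{T}$-independent of $\tilde X_x$ if and only if
\[
\frac{1}{t_2 - t_1} \int_{t_1}^{t_2} p_x(u) \, du = \Prob(X > x)
\]
for all $t_1 < t_2$ in $\mathcal{T} \cup \{0,1\}$, which is exactly condition \eqref{eq:averageValueCondition_main_ctsX} at that $x$. For the remaining $x \in \supp(X)$, where $\Prob(X > x) \in \{0,1\}$ (which can occur only at the endpoints of the support), $p_x(u)$ is almost surely equal to $\Prob(X > x)$, so both sides of the average value condition coincide trivially. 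Combining with step one yields the stated equivalence.

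The only delicate point is the backward direction in step one: passing from the family of pointwise-in-$x$ independence statements to full independence of $\indicator\{U \leq \tau\}$ and $X$. The Dynkin $\pi$-$\lambda$ argument handles this cleanly, after which the theorem reduces to a fibered application of the binary-treatment characterization.
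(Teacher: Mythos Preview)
Your proposal is correct and takes essentially the same approach as the paper: both reduce to the binary-treatment characterization by defining $\widetilde{X}_x = \indicator(X > x)$, dispose of the trivial cases $\Prob(X > x) \in \{0,1\}$ separately, and then invoke the binary/discrete result pointwise in $x$. The only difference is that the paper's proof is a one-liner (``the result now follows from applying theorem \ref{thm:AvgValueCharacterization_discreteX} with $\widetilde{X}$'') while you explicitly justify the backward passage from $\mathcal{T}$-independence of $U$ from each $\widetilde{X}_x$ to $\mathcal{T}$-independence of $U$ from $X$ via a $\pi$-$\lambda$ argument---a detail the paper leaves implicit.
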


The interpretation of equation \eqref{eq:averageValueCondition_main_ctsX} is similar to the binary $X$ case: $\mathcal{T}$-independence holds if and only if, for each possible level of treatment $x$, and for each interval with endpoints in $\mathcal{T} \cup \{ 0,1 \}$ the average value of the conditional probability of receiving treatment larger than $x$ given the unobservable equals the overall unconditional probability of receiving treatment larger than $x$. Notice that, by adding $-1$ to each side of equation \eqref{eq:averageValueCondition_main_ctsX}, this constraint can equivalently be seen as a constraint on the conditional cdf $F_{X \mid U}$.

As in the binary $X$ case, the constraint \eqref{eq:averageValueCondition_main_ctsX} imposes a non-monotonicity condition.

\begin{corollary}\label{corr:noRegDependence}
Suppose $X$ and $U$ are continuously distributed. Suppose there is some $x \in \supp(X)$ such that $\Prob(X > x \mid U=u)$ is weakly monotonic and not constant over $u \in \text{int}[\supp(U)]$. Then $U$ is not $\tau$-cdf independent of $X$ for all $\tau \in \supp(U)$.
\end{corollary}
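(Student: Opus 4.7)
The plan is to reduce this to essentially the same argument that would underlie Corollary~\ref{corr:monotonicPropensityScores}, replacing the latent propensity score $p(u)$ with the conditional survival function $g(u) := \Prob(X > x \mid U = u)$ evaluated at the specific $x$ singled out by the hypothesis.

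First I would handle the normalization. Let $V$ denote the original unobservable and set $U := F_V(V) \sim \text{Unif}[0,1]$. Because $V$ is continuously distributed, $F_V$ is a strictly monotone transformation on $\text{int}[\supp(V)]$, so $\Prob(X > x \mid V = v) = \Prob(X > x \mid U = F_V(v))$ almost surely and the map $u \mapsto \Prob(X > x \mid U = u)$ inherits weak monotonicity and non-constancy on $(0,1)$ from the corresponding map in $v$ on $\text{int}[\supp(V)]$. Thus I can assume without loss of generality that $U \sim \text{Unif}[0,1]$ and that $g$ is weakly monotonic and not constant on $(0,1)$, and it suffices to show that for every $\tau \in (0,1)$, $U$ fails to be $\tau$-cdf independent of $X$.

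Next, I would fix an arbitrary $\tau \in (0,1)$ and argue by contradiction. Assuming $U$ is $\tau$-cdf independent of $X$ and applying Theorem~\ref{thm:AvgValueCharacterization_ctsX} with $\mathcal{T} = \{\tau\}$, the average value equation holds for the pairs $(0,\tau)$ and $(\tau,1)$ at every $x' \in \supp(X)$. Specializing to the $x$ at hand, this yields
$$\frac{1}{\tau}\int_0^\tau g(u)\,du \;=\; \Prob(X>x) \;=\; \frac{1}{1-\tau}\int_\tau^1 g(u)\,du,$$
while $\int_0^1 g(u)\,du = \Prob(X>x)$ by the law of total probability.

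Finally, I would dispatch this in one step. Without loss of generality assume $g$ is weakly increasing (otherwise negate). Monotonicity gives $g(u) \leq g(\tau)$ on $[0,\tau]$ and $g(u) \geq g(\tau)$ on $[\tau,1]$, hence
$$\frac{1}{\tau}\int_0^\tau g(u)\,du \;\leq\; g(\tau) \;\leq\; \frac{1}{1-\tau}\int_\tau^1 g(u)\,du.$$
Since both extremes equal $\Prob(X>x)$, both inequalities are tight; integrating the nonnegative functions $g(\tau) - g(u)$ on $[0,\tau]$ and $g(u) - g(\tau)$ on $[\tau,1]$ then forces $g \equiv g(\tau)$ Lebesgue-a.e.\ on $[0,1]$, contradicting non-constancy of $g$ on $(0,1)$. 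The only subtle point is the normalization step; everything else is a direct transposition of the argument in the binary-$X$ case, with $g$ in the role of $p$.
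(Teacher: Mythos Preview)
Your proposal is correct and takes essentially the same approach as the paper. The paper's proof is the one-liner ``define $\widetilde{X} = \indicator(X > x)$ and apply Corollary~\ref{corr:monotonicPropensityScores},'' which binarizes and then invokes the binary-case non-monotonicity result; you instead invoke Theorem~\ref{thm:AvgValueCharacterization_ctsX} (whose own proof already performs that binarization) and then reproduce the monotonicity contradiction inline, so the two arguments differ only in packaging, with your final step arguably a cleaner version of the contradiction used inside the proof of Corollary~\ref{corr:monotonicPropensityScores}.
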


For example, suppose $X$ is level of education, $x$ is completing college, and $U$ is ability. Then any nontrivial $\mathcal{T}$-independence condition implies that at some point increasing ability lowers the probability of getting more than a college education.

The monotonicity condition of corollary \ref{corr:noRegDependence} dates back to \cite{Tukey1958} and \cite{Lehmann1966}, who give the following definition.

\begin{definition}
Say $X$ is \emph{positively [negatively] regression dependent} on $U$ if $\Prob(X > x \mid U=u)$ is weakly increasing [weakly decreasing] in $u$, for all $x \in \R$. Say $X$ is \emph{regression dependent} on $U$ if it is either positively or negatively regression dependent on $U$.
\end{definition}

Thus corollary \ref{corr:noRegDependence} states that we cannot simultaneously have quantile independence of $U$ on $X$ and regression dependence of $X$ on $U$ (except when $X \independent U$).

\cite{LehmannRomano2005} call positive regression dependence an ``intuitive meaning of positive dependence''. To support this claim, \cite{Lehmann1966} gave the following simple sufficient conditions for regression dependence: If one can write $X = \pi_0 + \pi_1 U + V$ where $\pi_0$ and $\pi_1$ are constants and $V$ is a random variable independent of $U$, then $X$ is regression dependent on $U$ if $\pi_1 \neq 0$. In particular, if $X$ and $U$ are jointly normally distributed then they are regression dependent so long as they have nonzero correlation. While these are special cases, theorem 5.2.10 on page 196 of \cite{Nelsen2006} provides a general characterization of regression dependence in terms of the copula between $X$ and $U$, when both variables are continuous. In particular, if $C_{X,U}(x,u)$ is the copula for $(X,U)$, $X$ is regression dependent on $U$ if and only if $C_{X,U}(x,\cdot)$ is concave for any $x \in [0,1]$.

Regression dependence is also known as \emph{stochastic monotonicity}, since it is equivalent to the set of cdfs $\{ F_{X \mid U}(\cdot \mid u) : u \in \supp(U) \}$ being either increasing or decreasing in the first order stochastic dominance ordering. A large literature studies tests of stochastic monotonicity. For example, \cite{LeeLintonWhang2009} state that stochastic monotonicity is ``of interest in many applications in economics'', and provide many references which use such monotonicity assumptions. See \cite{DelgadoEscanciano2012}, \cite{HsuLiuShi2016}, and \cite{Seo2017} for further work on testing stochastic monotonicity.
In a different application of stochastic monotonicity, \cite{BlundellEtAl2007} study the classic problem of identifying the distribution of potential wages, given that wages are only observed for workers. Following \cite{ManskiPepper2000}, they argue that stochastic monotonicity assumptions are often plausible. They specifically consider stochastic monotonicity of wages on labor force participation status, as well as stochastic monotonicity of wages on an instrument. They furthermore provide a detailed analysis of when stochastic monotonicity assumptions may not be plausible. 

Corollary \ref{corr:noRegDependence} shows that any assumption of $\mathcal{T}$-independence of $U$ on $X$ rules out stochastic monotonicity of $X$ on $U$. Thus, if one wants to allow for a class of deviations from independence which includes stochastically monotonic selection, assumptions of quantile independence of $U$ on $X$ should not be used. Conversely, if one makes a quantile independence assumption of $U$ on $X$, one should argue why stochastically non-monotonic selection models are the deviations of interest. We discuss these issues further in section \ref{sec:LatentSelectionModels}.

The following theorem extends our characterization results to the discrete $X$ case.

\begin{theorem}[Average value characterization]\label{thm:AvgValueCharacterization_discreteX}
Suppose $X$ is discrete with support $\{ x_1,\ldots,x_K \}$ and $\Prob(X=x_k) \in (0,1)$ for $k \in \{ 1,\ldots,K \}$. Suppose $U$ is continuously distributed; normalize $U \sim \text{Unif}[0,1]$. Then $U$ is $\mathcal{T}$-independent of $X$ if and only if
\begin{equation}\label{eq:averageValueCondition_main_discreteX}
	\frac{1}{t_2 - t_1} \int_{[t_1,t_2]} \Prob(X=x_k \mid U=u) \; du = \Prob(X=x_k)
	\qquad \text{for all $k \in \{ 1,\ldots,K \}$}
\end{equation}
for all $t_1, t_2 \in\mathcal{T} \cup \{ 0,1 \} $ with $t_1 < t_2$.
\end{theorem}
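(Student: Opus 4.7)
The plan is to reduce the discrete case to essentially the same calculation that yields the binary theorem, applied separately to each point in the support of $X$. The key observation is that for continuous $U \sim \text{Unif}[0,1]$, Bayes' rule gives, for each $k$,
\[
	F_{U \mid X}(u \mid x_k) = \frac{1}{\Prob(X=x_k)} \int_0^u \Prob(X = x_k \mid U = v) \; dv,
\]
since the joint density factorizes as $\Prob(X=x_k \mid U=v) \, f_U(v)$ and $f_U \equiv 1$. Denote $G_k(u) := \int_0^u \Prob(X=x_k \mid U = v) \, dv$, and note the two boundary identities $G_k(0) = 0$ and $G_k(1) = \Prob(X=x_k)$.

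First I would handle the forward direction. Assume $U$ is $\mathcal{T}$-independent of $X$. Then for every $\tau \in \mathcal{T}$ and every $k$, the cdf independence condition together with $F_U(\tau) = \tau$ gives $G_k(\tau) = \tau \, \Prob(X=x_k)$. Combined with the boundary identities, this says $G_k(t) = t \, \Prob(X=x_k)$ for all $t \in \mathcal{T} \cup \{0,1\}$. Subtracting two such equations at $t_1 < t_2$ and dividing by $t_2 - t_1$ yields \eqref{eq:averageValueCondition_main_discreteX}.

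For the reverse direction, assume \eqref{eq:averageValueCondition_main_discreteX} holds. For any $\tau \in \mathcal{T}$, apply the average value condition with $t_1 = 0$ and $t_2 = \tau$ (permissible because $0 \in \mathcal{T} \cup \{0,1\}$), which gives
\[
	G_k(\tau) = \tau \, \Prob(X = x_k).
\]
Dividing by $\Prob(X=x_k) \in (0,1)$ and invoking the Bayes' rule expression above, we obtain $F_{U \mid X}(\tau \mid x_k) = \tau = F_U(\tau)$ for every $\tau \in \mathcal{T}$ and every $k$. Since $\{x_1, \ldots, x_K\} = \supp(X)$, this is exactly $\mathcal{T}$-independence.

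The argument is structurally identical to the binary case (theorem \ref{thm:AvgValueCharacterization}) applied coordinate-wise in $k$; no new technical obstacle arises beyond noting that the integral representation of $F_{U \mid X}(\cdot \mid x_k)$ is valid for a continuously distributed $U$ conditional on a discrete event, and that the endpoints $0$ and $1$ can be inserted into $\mathcal{T}$ "for free" because the corresponding cdf independence equalities $G_k(0) = 0 \cdot \Prob(X=x_k)$ and $G_k(1) = 1 \cdot \Prob(X=x_k)$ hold automatically. If anything, the only mild subtlety is making clear that the average value condition is required for \emph{each} $k$, whereas in the binary case a single constraint on $p(u) = \Prob(X=1 \mid U=u)$ automatically controls $\Prob(X=0 \mid U=u) = 1 - p(u)$; here one simply notes that the constraints at different $k$ are consistent because $\sum_k \Prob(X = x_k \mid U=u) = 1$ integrates to $\sum_k \Prob(X=x_k) = 1$ on any subinterval.
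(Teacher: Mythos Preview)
Your proof is correct and follows essentially the same Bayes' rule argument as the paper. The only organizational difference is that the paper routes through the intermediate quantity $\Prob(X=x_k \mid U \in [t_1,t_2])$ (their lemma \ref{lemma:subgroupPropensityScore}) and invokes lemma \ref{lemma:continuity} to handle the endpoint $\Prob(U < t_1 \mid X=x_k) = \Prob(U \le t_1 \mid X=x_k)$, whereas you work directly with the antiderivative $G_k$; your version is slightly more streamlined but the substance is the same.
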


This result has a similar interpretation as our previous results for binary $X$ and continuous $X$. First, we have the following corollary.

\begin{corollary}\label{corr:discreteXnotMonotone}
Suppose $X$ is discrete with support $\{ x_1,\ldots,x_K \}$. Suppose $U$ is continuously distributed. Suppose there is some $k \in \{ 1,\ldots,K \}$ such that $\Prob(X \geq x_k \mid U=u)$ is weakly monotonic and not constant over $u \in \text{int}[\supp(U)]$. Then $U$ is not $\tau$-cdf independent of $X$ for all $\tau \in \supp(U)$.
\end{corollary}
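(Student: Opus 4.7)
The natural approach is reduction to the binary treatment case, i.e., to Corollary \ref{corr:monotonicPropensityScores}. Let $k_0 \in \{1,\ldots,K\}$ be an index for which $G_{k_0}(u) := \Prob(X \geq x_{k_0} \mid U=u)$ is weakly monotonic and non-constant on $\text{int}[\supp(U)]$. Define the binary coarsening $\tilde X := \indicator\{X \geq x_{k_0}\}$. Then $\Prob(\tilde X = 1 \mid U = u) = G_{k_0}(u)$, so $G_{k_0}$ plays the role of a latent propensity score for $\tilde X$ with respect to $U$. The hypothesis $\Prob(X = x_k) \in (0,1)$ for every $k$ rules out $k_0 = 1$ (else $G_{k_0} \equiv 1$ would be constant), so $\Prob(\tilde X = 1) = \Prob(X \geq x_{k_0}) \in (0,1)$, verifying the hypotheses of Corollary \ref{corr:monotonicPropensityScores} for the pair $(\tilde X, U)$.

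The key step is then to show that $\tau$-cdf independence of $U$ from $X$ transfers to $\tau$-cdf independence of $U$ from the coarsening $\tilde X$. This is a short calculation: for any $\tau$ at which $F_{U \mid X}(\tau \mid x_k) = F_U(\tau)$ holds for every $k$, one writes
\[
	F_{U \mid \tilde X}(\tau \mid 1)
	= \frac{\sum_{k \geq k_0} \Prob(U \leq \tau, X = x_k)}{\Prob(X \geq x_{k_0})}
	= \frac{F_U(\tau) \sum_{k \geq k_0} \Prob(X = x_k)}{\Prob(X \geq x_{k_0})}
	= F_U(\tau),
\]
and the analogous computation gives $F_{U \mid \tilde X}(\tau \mid 0) = F_U(\tau)$. (Equivalently, one can bypass $\tilde X$ and instead sum the constraint in Theorem \ref{thm:AvgValueCharacterization_discreteX} over $k \geq k_0$, obtaining the binary-case average-value constraint on $G_{k_0}$ directly.)

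With these two ingredients in place, the conclusion is immediate: if $U$ were $\tau$-cdf independent of $X$ for some $\tau \in (0,1)$, it would also be $\tau$-cdf independent of $\tilde X$, and Corollary \ref{corr:monotonicPropensityScores} applied to $(\tilde X, U)$ would contradict the weak monotonicity and non-constancy of $G_{k_0}$. The trivial cases $\tau \in \{0,1\}$ impose no restriction. I do not anticipate a substantial obstacle here; the only thing to be careful about is verifying $\Prob(\tilde X = 1) \in (0,1)$ and confirming that cdf independence is preserved under the coarsening $X \mapsto \tilde X$, both of which are handled above.
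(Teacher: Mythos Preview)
Your proposal is correct and takes essentially the same approach as the paper: define the binary coarsening $\widetilde{X} = \indicator(X \geq x_k)$ and apply Corollary~\ref{corr:monotonicPropensityScores}. You supply more detail than the paper does---in particular, you spell out why $\tau$-cdf independence of $U$ from $X$ passes to $\widetilde{X}$ and why $\Prob(\widetilde{X}=1) \in (0,1)$---but the underlying argument is identical.
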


The interpretation is analogous to corollary \ref{corr:noRegDependence}. Second, all of the interpretations given in section \ref{sec:characterizationSection} apply to the probabilities $\Prob(X = x_k \mid U=u)$ for $k \in \{1,\ldots,K \}$. In particular, these conditional probabilities must be non-monotone. This result is primarily relevant for the lowest treatment level ($k=1$) and the highest treatment level ($k=K$), since non-monotonicity of the middle probabilities would be implied, for example, by a simple ordered threshold crossing model, like $X = x_k$ if $\alpha_k \leq U \leq \alpha_{k+1}$ for constants $\alpha_k \leq \alpha_{k+1}$, $k \in \{ 1,\ldots,K \}$.

\section{Latent Selection Models}\label{sec:LatentSelectionModels}

Many econometric models obtain point identification via quantile independence restrictions, rather than statistical independence. These results are often motivated solely by the fact that quantile independence is weaker than statistical independence, and hence results using only quantile independence are `more robust' than those using statistical independence. As we have emphasized, any deviation from statistical independence of $U$ and $X$ allows for certain forms of treatment selection, in the sense that the distribution of $X \mid U=u$ depends nontrivially on $u$. Thus the choice of an assumption weaker than statistical independence depends on the class of deviations one wishes to be robust against. Since this class is often not explicitly specified, we refer to such deviations as \emph{latent selection models}. Our main results in sections \ref{sec:characterizationSection} and \ref{sec:ctsX} characterize the set of latent selection models allowed by quantile independence restrictions.

In this section, we study two standard econometric selection models. We discuss different assumptions on the economic primitives which lead these models to be either consistent or inconsistent with quantile independence restrictions. We only consider single-agent models, but similar analyses can likely be done for multi-agent models.

\subsection{A Model for Continuous Treatment Choice}

We first consider a selection model discussed by \citet[pages 1484--1485]{ImbensNewey2009}.\footnote{\citet[section 4.2]{Pakes1994} and \citet[pages 283--285]{BlundellMatzkin2014} give additional examples of economic selection models which yield strict monotonicity in a first stage unobservable.} Consider a population of people deciding how much education to obtain. Let $Y$ denote earnings, $X$ the chosen level of education, and $U$ ability. Let $m(x,u)$ denote the earnings production function. Hence earnings, education, and ability jointly satisfy
\[
	Y = m(X,U).
\]
Let $c(x,z)$ denote the cost of obtaining education level $x$. $Z$ denotes a variable which shifts cost and is known to agents. Suppose agents do not perfectly know their own ability, but instead observe a noisy signal $V$ of $U$. Agents know the joint distribution of $(U,V)$. Suppose agents choose $X$ to maximize expected earnings, minus costs. Thus agents solve the problem
\begin{equation}\label{ImbensNeweySelectionEquation}
	\max_{x \geq 0} \big( \Exp[ m(x,U) \mid V=v,Z=z] - c(x,z) \big).
\end{equation}
The following proposition is a variation of a result stated by \cite{ImbensNewey2009}.

\begin{proposition}\label{prop:ImbensNeweySelection}
Consider the selection model \eqref{ImbensNeweySelectionEquation}. Suppose the following hold.
\begin{enumerate}
\item $m$ is twice continuously differentiable in both components, with
\[
	\frac{\partial^2 m(x,u)}{\partial x^2} < 0
	\quad \text{and} \quad
	\frac{\partial^2 m(x,u)}{\partial x \partial u} > 0
\]
for all $x > 0$ and $u \in \supp(U)$. $m(x,u)$ is strictly increasing in both $x$ and $u$. For each $u$, both the first and second derivatives of $m$ in $x$ are bounded in absolute value by a function $B(x,u)$ with $\Exp[ B(x,U) \mid V=v ] < \infty$ for all $v \in \supp(V)$ and all $x > 0$. For each $u \in \supp(U)$, $\partial m(x,u) / \partial x \rightarrow 0$ as $x \nearrow \infty$ and $\partial m(x,u) / \partial x \rightarrow \infty$ as $x \searrow 0$.

\item $c(x,z)$ is twice continuously differentiable in $x$ for each $z \in \supp(Z)$ with
\[
	\frac{\partial c(x,z)}{\partial x} > 0
	\quad \text{and} \quad
	\frac{\partial^2 c(x,z)}{\partial x^2} > 0
\]
for all $x > 0$ and $z \in \supp(Z)$. For each $z \in \supp(Z)$, $\partial c(x,z) / \partial x \rightarrow \infty$ as $x \nearrow \infty$ and $\partial c(x,z) / \partial x \rightarrow 0$ as $x \searrow 0$.

\item $Z \independent (U,V)$.

\item $(U,V)$ are jointly continuously distributed. $U$ and $V$ satisfy the strict monotone likelihood ratio property (MLRP).
\end{enumerate}
Then for each $(z,v) \in \supp(Z,V)$ there is a unique solution to the problem \eqref{ImbensNeweySelectionEquation}, which we denote by $h(z,v)$. Moreover, $h(z,\cdot)$ is strictly increasing for each $z \in \supp(Z)$.
\end{proposition}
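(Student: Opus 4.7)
The plan is to establish existence and uniqueness of the interior maximizer via strict concavity and the Inada conditions, then deduce strict monotonicity in $v$ using strict MLRP together with the supermodularity $\partial^2 m / \partial x \partial u > 0$.

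First, because $Z \independent (U,V)$, conditioning on $Z=z$ drops out of the first term, and the objective becomes $F(x;v,z) = \Exp[m(x,U) \mid V=v] - c(x,z)$. The bound $B(x,u)$ with $\Exp[B(x,U)\mid V=v] < \infty$ justifies differentiating twice under the expectation (dominated convergence / Leibniz), so that
\[
\frac{\partial F}{\partial x} = \Exp\!\left[\frac{\partial m(x,U)}{\partial x} \,\Big|\, V=v\right] - \frac{\partial c(x,z)}{\partial x},
\]
and analogously for $\partial^2 F / \partial x^2$. Since $\partial^2 m/\partial x^2 < 0$ and $\partial^2 c/\partial x^2 > 0$, $F(\cdot;v,z)$ is strictly concave on $(0,\infty)$. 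Moreover, $\partial m(x,u)/\partial x$ is strictly decreasing in $x$ with pointwise limits $+\infty$ as $x \searrow 0$ and $0$ as $x \nearrow \infty$, so monotone convergence transfers these limits inside the conditional expectation. Combined with the Inada limits on $\partial c/\partial x$, we obtain $\partial F/\partial x \to +\infty$ as $x \searrow 0$ and $\partial F/\partial x \to -\infty$ as $x \to \infty$. Strict concavity together with these boundary behaviors yields a unique interior maximizer $h(z,v) > 0$, characterized by the first-order condition
\[
\Exp\!\left[\frac{\partial m(h(z,v),U)}{\partial x} \,\Big|\, V=v\right] = \frac{\partial c(h(z,v),z)}{\partial x}.
\]

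For strict monotonicity, fix $z$ and take $v_1 < v_2$ in $\supp(V)$, with $x_i = h(z,v_i)$. Since $\partial m(x,u)/\partial x$ is strictly increasing in $u$ (by $\partial^2 m / \partial x \partial u > 0$), strict MLRP of $(U,V)$ implies that $U \mid V=v_2$ strictly dominates $U \mid V=v_1$ in the likelihood ratio order, which gives strict first-order stochastic dominance on the interior of $\supp(U)$, and therefore strictly larger expectation of any strictly increasing function of $U$. Evaluating at $x=x_1$,
\[
\Exp\!\left[\frac{\partial m(x_1,U)}{\partial x} \,\Big|\, V=v_2\right] > \Exp\!\left[\frac{\partial m(x_1,U)}{\partial x} \,\Big|\, V=v_1\right] = \frac{\partial c(x_1,z)}{\partial x}.
\]
Because the left-hand side of the FOC (viewed as a function of $x$ for fixed $v_2,z$) is strictly decreasing in $x$ while the right-hand side is strictly increasing in $x$, the unique interior zero $x_2$ must satisfy $x_2 > x_1$, i.e., $h(z,v_2) > h(z,v_1)$.

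The main obstacle is the MLRP step: one must pass from strict MLRP of $(U,V)$ to strict inequality of conditional expectations of a strictly monotone function. This is standard once strict likelihood-ratio dominance is translated into strict FOSD on the interior of $\supp(U)$, but it does rely on the fact that $\partial m(x_1,\cdot)/\partial x$ is genuinely (strictly) increasing on $\supp(U)$, which is exactly what $\partial^2 m/\partial x \partial u > 0$ on $\supp(U)$ guarantees. The remaining steps---justifying differentiation under the integral via $B$, using monotone convergence for the boundary behavior of $\partial F/\partial x$, and invoking strict concavity to make the FOC necessary and sufficient---are routine.
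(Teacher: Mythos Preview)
Your proposal is correct and follows essentially the same approach as the paper: strict concavity of the objective via differentiation under the integral, existence of a unique interior solution from the Inada-type boundary behavior, and strict monotonicity in $v$ from strict MLRP combined with $\partial^2 m/\partial x\partial u>0$. The only cosmetic difference is that the paper outsources the last two steps to citations---\cite{MilgromWeber1982} for the implication that $\Exp[m_1(x,U)\mid V=v]$ is strictly increasing in $v$, and \cite{EdlinShannon1998} for the strict monotone comparative statics---whereas you carry out both arguments directly via the first-order condition; the content is the same.
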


Assumption 1 constrains the earnings production function. Higher education increases earnings, with diminishing marginal returns. Higher ability also increases earnings. Importantly, earnings and ability are complementary. As \cite{ImbensNewey2009} mention, a Cobb-Douglas production function satisfies these assumptions. Assumption 2 constrains the cost function. Cost is increasing in earnings with increasing marginal cost. Assumption 3 implies that $Z$ has no information about agents' true ability $U$. Assumption 4 formalizes the idea that $V$ is a signal of $U$. See \cite{Milgrom1981} for the definition and further discussion of the strict MLRP. The strict MLRP is also sometimes called \emph{strict affiliation}. \cite{AtheyHaile2002,AtheyHaile2007} and \cite{PinkseTan2005} discuss strict affiliation in the context of auction models. The MLRP, and hence the strict MLRP, implies that $V$ is positive regression dependent on $U$.

Proposition \ref{prop:ImbensNeweySelection} gives conditions under which the treatment selection equation has the form
\[
	X = h(Z,V)
\]
where $h(z,\cdot)$ is strictly increasing for each $z \in \supp(Z)$. This is a common restriction imposed in the control function literature. The following proposition shows that this monotonicity restriction combined with regression dependence of the signal $V$ on ability $U$ implies regression dependence of chosen education $X$ on ability $U$.

\begin{proposition}\label{corr:ImbensNeweySelection}
Suppose $X = h(Z,V)$ where
\begin{enumerate}
\item $h(z,\cdot)$ is strictly monotone for each $z \in \supp(Z)$.

\item $Z \independent (U,V)$.

\item $V$ is continuously distributed and is regression dependent on $U$.
\end{enumerate}
Then $X$ is regression dependent on $U$.
\end{proposition}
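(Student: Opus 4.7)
The plan is to reduce the statement to a direct application of the regression-dependence hypothesis on $V$ by exploiting the strict monotonicity of $h(z,\cdot)$ and the independence between $Z$ and $(U,V)$, so that $\Prob(X > x \mid U = u)$ can be written as a mixture (over $Z$) of conditional tail probabilities of $V$ given $U$, each of which inherits monotonicity in $u$.

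\textbf{Step 1 (Setup).} Fix an arbitrary $x \in \R$. For each $z \in \supp(Z)$, since $h(z,\cdot)$ is strictly monotone, the set $A(z,x) = \{ v : h(z,v) > x \}$ is a half-line in $\R$: an upper half-line $(v^*(z,x),\infty)$ when $h(z,\cdot)$ is strictly increasing, and a lower half-line $(-\infty, v^*(z,x))$ when $h(z,\cdot)$ is strictly decreasing, where $v^*(z,x)$ is a generalized inverse. Continuity of $V$'s distribution lets us ignore the boundary point. I will assume WLOG that $h(z,\cdot)$ is strictly increasing for every $z$ and that $V$ is positively regression dependent on $U$; the remaining three sign combinations are handled by symmetric arguments and yield either positive or negative regression dependence of $X$ on $U$.

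\textbf{Step 2 (Mixture representation).} Using $Z \independent (U,V)$ together with the tower property,
\begin{align*}
\Prob(X > x \mid U = u)
&= \Exp\bigl[ \indicator\{h(Z,V) > x\} \mid U = u \bigr] \\
&= \int \Prob\bigl( V > v^*(z,x) \mid U = u, Z = z \bigr) \, dF_Z(z) \\
&= \int \Prob\bigl( V > v^*(z,x) \mid U = u \bigr) \, dF_Z(z),
\end{align*}
where the last equality uses $Z \independent (U,V)$ to drop the conditioning on $Z$ inside the conditional probability.

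\textbf{Step 3 (Monotonicity transfer).} By positive regression dependence of $V$ on $U$, for every fixed $v$, the map $u \mapsto \Prob(V > v \mid U = u)$ is weakly increasing. Applying this with $v = v^*(z,x)$ for each $z$, the integrand above is a weakly increasing function of $u$ for each $z$. Integrating a family of weakly increasing functions against the fixed measure $F_Z$ preserves monotonicity, so $u \mapsto \Prob(X > x \mid U = u)$ is weakly increasing. Since $x$ was arbitrary, $X$ is positively regression dependent on $U$, completing the proof.

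\textbf{Main obstacle.} The only delicate point is legitimizing the mixture representation: one must justify dropping $Z$ from the conditional probability inside the integral, which requires the full joint independence $Z \independent (U,V)$ rather than merely marginal independence. Continuity of $V$ is needed to handle the boundary $\{V = v^*(z,x)\}$, which has $F_V$-measure zero; without it the generalized inverse could introduce non-negligible ties that complicate the sign of the monotonicity. The case analysis over the four sign combinations for the directions of monotonicity of $h(z,\cdot)$ and of the regression dependence of $V$ on $U$ is routine but should be noted for completeness.
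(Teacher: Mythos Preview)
Your proof is correct and follows essentially the same approach as the paper: both arguments invert $h(z,\cdot)$, use $Z \independent (U,V)$ to express $\Prob(X > x \mid U=u)$ (the paper uses $\Prob(X \leq x \mid U=u)$) as an integral over $z$ of conditional tail probabilities of $V$ given $U$, and then invoke regression dependence of $V$ on $U$ pointwise in $z$ to conclude monotonicity of the mixture. The only cosmetic difference is that the paper additionally normalizes $V \sim \text{Unif}[0,1]$ and identifies the generalized inverse as $F_{X\mid Z}(x\mid z)$, whereas you work directly with a generic threshold $v^*(z,x)$.
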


\begin{corollary}\label{corr:ImbensNeweySelectionFinalResult}
Consider the selection model \eqref{ImbensNeweySelectionEquation}. Suppose the assumptions of proposition \ref{prop:ImbensNeweySelection} hold. Then no quantile independence conditions of $U$ on $X$ can hold.
\end{corollary}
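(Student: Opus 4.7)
The plan is to chain together Proposition \ref{prop:ImbensNeweySelection}, Proposition \ref{corr:ImbensNeweySelection}, and Corollary \ref{corr:noRegDependence}. Under the stated assumptions, Proposition \ref{prop:ImbensNeweySelection} gives a unique maximizer $h(z,v)$ of \eqref{ImbensNeweySelectionEquation} with $h(z,\cdot)$ strictly increasing for each $z \in \supp(Z)$, so $X = h(Z,V)$. The strict MLRP assumed in Assumption 4 implies that $V$ is strictly positively regression dependent on $U$, which is more than enough for the weak regression dependence hypothesis of Proposition \ref{corr:ImbensNeweySelection}. That proposition then delivers that $X$ is regression dependent on $U$.

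To invoke Corollary \ref{corr:noRegDependence} we need the additional non-constancy: $\Prob(X > x \mid U=u)$ must be non-constant in $u$ for some $x \in \supp(X)$. Here I would exploit the strict (rather than merely weak) MLRP. Since $Z \independent (U,V)$, conditioning on $U=u$ leaves $Z$ and $V$ independent, so by strict monotonicity of $h(z,\cdot)$,
\[
\Prob(X > x \mid U=u) = \int_{\supp(Z)} \Prob\bigl( V > h^{-1}(z;x) \mid U=u \bigr) \, dF_Z(z),
\]
where $h^{-1}(z;\cdot)$ is the inverse of $h(z,\cdot)$. Strict positive regression dependence of $V$ on $U$ makes the integrand strictly monotone in $u$ for each fixed $z$ (provided $h^{-1}(z;x)$ lies in the interior of $\supp(V)$, which holds for $x$ in the interior of $\supp(X)$). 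Hence the integral is strictly monotone and therefore non-constant in $u$, verifying the hypothesis of Corollary \ref{corr:noRegDependence}. That corollary then rules out $\tau$-cdf independence of $U$ from $X$ for every $\tau \in \supp(U)$, which is the claim.

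The only non-mechanical step is this last piece of bookkeeping: Proposition \ref{corr:ImbensNeweySelection} is stated only for weak regression dependence, whereas Corollary \ref{corr:noRegDependence} needs non-constancy. The integral representation above is what transfers the strictness encoded in the MLRP through the selection equation to $\Prob(X > x \mid U=u)$. Beyond this, the proof is a straightforward assembly of results already established in the excerpt.
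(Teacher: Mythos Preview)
Your proof is correct and follows the same chain as the paper: Proposition~\ref{prop:ImbensNeweySelection} $\to$ Proposition~\ref{corr:ImbensNeweySelection} $\to$ Corollary~\ref{corr:noRegDependence}. The only difference is in how you justify the non-constancy hypothesis of Corollary~\ref{corr:noRegDependence}: the paper simply observes that the \emph{strict} MLRP forces $U$ and $V$ to be dependent, hence $X$ and $U$ are dependent, which is enough to guarantee some $x$ with $\Prob(X>x\mid U=u)$ non-constant; you instead re-derive the integral representation and push the strictness through it to obtain strict monotonicity directly. Both routes are valid and close in spirit; yours is more explicit, the paper's is slightly shorter.
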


This corollary is perhaps not surprising, since one would not typically consider $X$ to be `exogenous' in the model above. Indeed, \cite{ImbensNewey2009} go on to assume that $Z$ is observable and then use its variation to identify treatment effects. To reiterate our previous points, however: A quantile independence assumption allows for selection on unobservables, since it is weaker than full independence. Proposition \ref{corr:ImbensNeweySelection} shows that the form of this allowed selection is not compatible with the selection model described above. On the other hand, if one of the assumptions of proposition \ref{prop:ImbensNeweySelection} fails, then $X$ might not be regression dependent on $U$, and hence a quantile independence condition might hold. In particular, the assumption that $V$ and $U$ satisfy the strict MLRP (which implies regression dependence of $V$ on $U$) could perhaps be dropped. Researchers using quantile independence assumptions should argue why the class of selection models compatible with the quantile independence conditions---as specified in our characterization theorems---are the deviations of interest.

\subsection{The Roy Model of Binary Treatment Choice}

Let $X \in \{ 0,1 \}$ be a binary treatment and $Y_1$ and $Y_0$ denote potential outcomes. We study identification of this model in section \ref{sec:treatmentEffects}. Here we study the class of latent selection models consistent with quantile independence. Suppose agents choose treatment to maximize their outcome:
\begin{equation}\label{eq:RoyModel}
	X = \indicator(Y_1 > Y_0).
\end{equation}
This is the classical Roy model (see \citealt{HeckmanVytlacil2007}). Suppose we are interested in identifying treatment on the treated parameters. Then identification depends on our assumptions about the stochastic relationship between $X$ and $Y_0$. In particular, one might consider assuming that some quantile of $Y_0$ is independent of $X$. As we have discussed, such an assumption constrains the latent selection model of $X$ given $Y_0$. Specifically, consider the latent propensity score,
\begin{align*}
	p(y_0)
		&\equiv \Prob(X=1 \mid Y_0 = y_0) \\
		&= \Prob(Y_1 > y_0 \mid Y_0 = y_0).
\end{align*}
The second line follows by our Roy model treatment choice assumption. Thus regression dependence of $Y_1$ on $Y_0$ implies that $p$ is monotonic and hence no quantile independence conditions of $Y_0$ on $X$ can hold (except when $Y_1 \independent Y_0$ or if $X$ is degenerate, as when treatment effects $Y_1-Y_0$ are constant). In particular, any quantile independence condition of $Y_0$ on $X$ rules out bivariate normally distributed $(Y_1,Y_0)$, unless $Y_1 \independent Y_0$. 

There are, however, joint distributions of $(Y_1,Y_0)$ such that $Y_1$ is not regression dependent on $Y_0$. For example, let $Y_0 \sim \normal(0,1)$ and $Y_1 = Y_0 + \mu(Y_0) - \varepsilon$ where $\mu$ is a deterministic function and $\varepsilon \sim \normal(0,1)$, $\varepsilon \independent Y_0$. Then
\[
	p(y_0) = \Prob(X=1 \mid Y_0 = y_0)
	= \Phi[ \mu( y_0 ) ],
\]
where $\Phi$ is the standard normal cdf. If $\mu$ is non-monotonic then $p$ will also be non-monotonic. For this joint distribution of potential outcomes, the unit level treatment effects $Y_1-Y_0$ conditional on the baseline outcome $Y_0=y_0$ are distributed $\normal( \mu(y_0), 1)$. Hence non-monotonicity of $\mu$ implies that the mean of this distribution of treatment effects is not monotonic. For instance, suppose the outcome is earnings and treatment is completing college. Suppose
\begin{align*}
	\mu(y_0) &> 0 \qquad\text{if $y_0 \in (\alpha,\beta)$} \\
	\mu(y_0) &\leq 0 \qquad\text{if $y_0 \in (-\infty,\alpha] \cup [\beta,\infty)$}
\end{align*}
for $-\infty < \alpha < \beta < \infty$. Then people with sufficiently small or sufficiently large earnings when they do not complete college do not benefit from completing college, on average. People with moderate earnings when they do not complete college, on the other hand, do typically benefit from completing college. Put differently, if potential earnings is an increasing deterministic function of ability, then low and high ability people do not benefit from completing college; only middle ability people do. This kind of joint distribution of potential outcomes combined with the Roy model assumption \eqref{eq:RoyModel} on treatment selection produce non-monotonic latent propensity scores.

While that is just one example joint distribution of $(Y_1,Y_0)$ where regression dependence fails, theorem 5.2.10 on page 196 of \cite{Nelsen2006} characterizes the set of copulas for which $Y_1$ is regression dependent on $Y_0$, when both are continuously distributed. This result therefore also tells us the set of copulas where $Y_1$ is \emph{not} regression dependent on $Y_0$. Among these copulas, $\mathcal{T}$-independence of $Y_0$ from $X$ will specify a further subset of allowed dependence structures. The precise set is given by all copulas which lead to latent propensity scores that satisfy the average value constraint. One could conversely pick a set of allowed copulas and use theorem \ref{thm:AvgValueCharacterization} to obtain a set of quantile independence conditions that might hold. This would allow one to obtain an identified set for parameters like the average treatment effect for the treated under the given constraints on the set of copulas, although we do not pursue this here.

As in the continuous treatment case, researchers using quantile independence assumptions should argue that the set of primitives---like the joint distributions of $(Y_1,Y_0)$ in the Roy model---allowed by quantile independence are the deviations of interest. In other cases, quantile independence may be implausible, as in Heckman, Smith, and Clements' \citeyearpar{HeckmanSmithClements1997} empirical analysis of the Job Training Partnership Act (JTPA), who find that ``plausible impact distributions require high measures of positive dependence [of $Y_1$ on $Y_0$]'' (page 506).

\section{The Identifying Power of the Average Value Constraint}\label{sec:treatmentEffects}

Our main result in section \ref{sec:characterizationSection} shows that quantile independence imposes a constraint on the average value of a latent propensity score. In this section, we study the implications of this result for identification. We first use our characterization to motivate an assumption weaker than quantile independence, which we call \emph{$\mathcal{U}$-independence}. The only difference between these two assumptions is that quantile independence imposes the average value constraint while $\mathcal{U}$-independence does not. Hence the difference between identified sets obtained under these two assumptions is a measure of the identifying power of the average value constraint, which is the feature of quantile independence that requires the latent propensity score to be non-monotonic.

To compute such identified sets and perform such a comparison, one must first specify an econometric model and a parameter of interest. While this can be done in many different models, we focus on a simple but important model: the standard potential outcomes model of treatment effects with a binary treatment. We adapt the analysis of \cite{MastenPoirier2017} to derive identified sets for the average treatment effect for the treated (ATT) and the quantile treatment effect for the treated (QTT) under both $\mathcal{T}$- and $\mathcal{U}$-independence. We then compare these identified sets in a numerical illustration. In this illustration, the identified sets are significantly larger under $\mathcal{U}$-independence, implying that the average value constraint has substantial identifying power.

\subsection{Weakening Quantile Independence}

Throughout this section, we focus on the case where $\mathcal{T}$ is an interval. In this case, we show that latent propensity scores consistent with $\mathcal{T}$-independence have two features: (a) they are flat on $\mathcal{T}$ and (b) they are non-monotonic outside the flat regions, such that the average value constraint \eqref{eq:averageValueCondition_main} is satisfied. We use this finding to motivate a weaker assumption which retains feature (a) but drops feature (b). We call this assumption \emph{$\mathcal{U}$-independence}. While one may consider this a reasonable assumption, our primary motivation for studying $\mathcal{U}$-independence is as a tool for understanding quantile independence.

We begin with the following corollary to theorem \ref{thm:AvgValueCharacterization}.

\begin{corollary}\label{prop:TauImpliesU}
Suppose $X$ is binary and $U$ is continuously distributed; normalize $U \sim \text{Unif}[0,1]$. Let $\mathcal{T} = [a,b]\subseteq [0,1]$. Then $\mathcal{T}$-independence of $U$ from $X$ implies
\begin{equation}\label{eq:UindepEq}
	\Prob(X=1 \mid U=u) = \Prob(X=1)
\end{equation}
for almost all $u \in \mathcal{T}$.
\end{corollary}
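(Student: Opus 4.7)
The plan is to invoke Theorem \ref{thm:AvgValueCharacterization} directly on all subintervals of $[a,b]$ and then appeal to the Lebesgue differentiation theorem to pass from average-value equalities to a pointwise (a.e.) equality.

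First, I would observe that because $\mathcal{T} = [a,b]$, every pair $t_1 < t_2$ in $[a,b]$ lies in $\mathcal{T} \cup \{0,1\}$. Hence Theorem \ref{thm:AvgValueCharacterization}, applied to $\mathcal{T}$-independence, yields
\[
	\int_{t_1}^{t_2} p(u) \, du \;=\; (t_2 - t_1)\,\Prob(X=1)
\]
for every $a \leq t_1 < t_2 \leq b$. Defining $F(t) = \int_a^t p(u)\,du$ for $t \in [a,b]$, this identity is equivalent to $F(t_2) - F(t_1) = (t_2-t_1)\Prob(X=1)$, so $F$ is affine on $[a,b]$ with slope $\Prob(X=1)$.

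Next, I would invoke the Lebesgue differentiation theorem: since $p$ is bounded (it is a conditional probability, hence takes values in $[0,1]$), it is integrable on $[a,b]$, and for almost every $u \in [a,b]$,
\[
	p(u) \;=\; \lim_{\varepsilon \downarrow 0} \frac{1}{2\varepsilon}\int_{u-\varepsilon}^{u+\varepsilon} p(s)\, ds.
\]
For $u$ in the interior of $[a,b]$ and $\varepsilon$ small enough that $[u-\varepsilon,u+\varepsilon] \subseteq [a,b]$, the right-hand side equals $\Prob(X=1)$ by the previous step. Thus $p(u) = \Prob(X=1)$ for almost every $u \in (a,b)$, and hence for almost every $u \in \mathcal{T}$, which is the desired conclusion.

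There is no real obstacle here; the whole argument is a one-line reduction to the characterization theorem followed by a standard measure-theoretic differentiation step. The only minor point worth being careful about is that the average-value constraint, as stated in \eqref{eq:averageValueCondition_main}, is given only for endpoints in $\mathcal{T} \cup \{0,1\}$; but since $\mathcal{T}$ is the full interval $[a,b]$, every subinterval of $[a,b]$ is admissible, which is exactly what makes the Lebesgue differentiation argument go through cleanly.
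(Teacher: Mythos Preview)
Your proof is correct, and it takes a somewhat different route from the paper's argument.

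The paper does not go directly through Lebesgue differentiation. Instead, it first works with the definition of $\mathcal{T}$-independence (cdf independence) to compute
\[
	\Prob(U \leq u \mid X=x, U \in [a,b]) = \frac{F_{U \mid X}(u \mid x) - F_{U \mid X}(a \mid x)}{F_{U \mid X}(b \mid x) - F_{U \mid X}(a \mid x)} = \frac{u-a}{b-a},
\]
which shows $U \independent X \mid \{U \in \mathcal{T}\}$. From this conditional independence it reads off that $p(u)$ is constant on $\mathcal{T}$ and equal to $\Prob(X=1 \mid U \in \mathcal{T})$; only at the very end does it invoke Theorem~\ref{thm:AvgValueCharacterization} once, to identify that constant as $\Prob(X=1)$.

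Your approach is more streamlined: you apply Theorem~\ref{thm:AvgValueCharacterization} on all subintervals of $[a,b]$ right away and then differentiate. This is arguably cleaner, since the average-value characterization has already done the Bayes-rule bookkeeping for you. The paper's route, on the other hand, yields the intermediate statement $U \independent X \mid \{U \in \mathcal{T}\}$ explicitly, which it highlights as having independent interpretive content (random assignment within the subpopulation $\{U \in \mathcal{T}\}$). Either way, both arguments ultimately rest on the same measure-theoretic passage from integral equalities over shrinking intervals to a pointwise a.e.\ conclusion; the paper simply hides that step inside the transition from distributional conditional independence to the pointwise equality of conditional probabilities.
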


Corollary \ref{prop:TauImpliesU} shows that $\mathcal{T}$-independence requires the latent propensity score to be flat on $\mathcal{T}$ and equal to the overall unconditional probability of being treated. The first property---that the latent propensity score is flat on $\mathcal{T}$---means that random assignment holds within the subpopulation of units whose unobservables are in the set $\mathcal{T}$; that is, $X \independent U \mid \{ U \in \mathcal{T} \}$. Corollary \ref{prop:TauImpliesU} can be generalized to allow $\mathcal{T}$ to be a finite union of intervals, but we omit this for simplicity.

This corollary motivates the following definition.

\begin{definition}
Let $\mathcal{U} \subseteq \supp(U)$. Say that $U$ is \emph{$\mathcal{U}$-independent} of $X$ if equation \eqref{eq:UindepEq} holds for almost all $u \in \mathcal{U}$.
\end{definition}

Corollary \ref{prop:TauImpliesU} shows that $\mathcal{T}$-independence implies $\mathcal{U}$-independence with $\mathcal{T}=\mathcal{U}$. The converse does not hold since $\mathcal{T}$-independence furthermore requires the average value constraint to hold, by theorem \ref{thm:AvgValueCharacterization}. For example, figure \ref{TauIndepFlatPropScore} shows two latent propensity scores. One satisfies $\mathcal{T}$-independence, but the other only satisfies $\mathcal{U}$-independence. Finally, note that $\mathcal{U}$-independence is a nontrivial assumption only when $\Prob(U \in \mathcal{U}) > 0$. Conversely, $\mathcal{T}$-independence is nontrivial even when $\mathcal{T}$ is a singleton.

\begin{figure}[t]
\centering
\includegraphics[width=52mm]{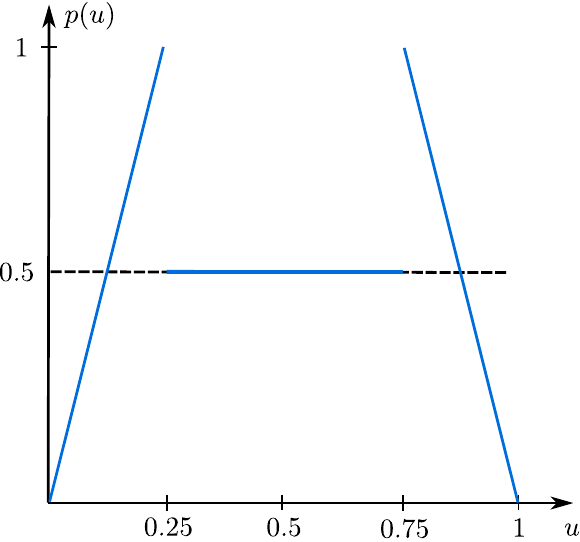}
\hspace{5mm}
\includegraphics[width=52mm]{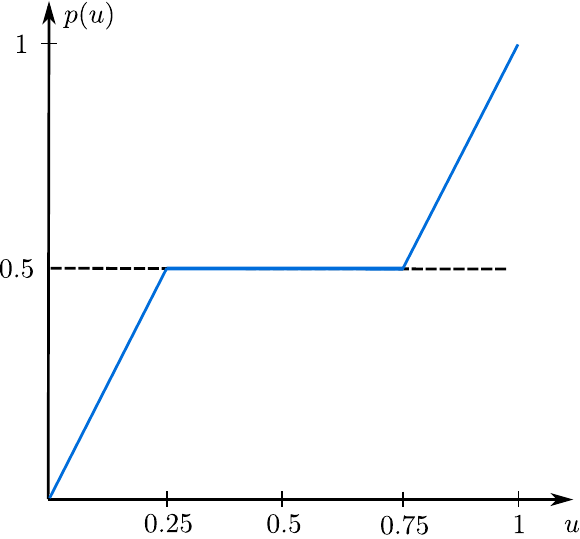}
\caption{Let $\mathcal{T} = \mathcal{U} = [0.25,0.75]$ and $\Prob(X=1) = 0.5$. This figure shows a propensity score consistent with both $\mathcal{T}$- and $\mathcal{U}$-independence on the left and a propensity score consistent with $\mathcal{U}$-, but not $\mathcal{T}$-independence on the right.}
\label{TauIndepFlatPropScore}
\end{figure}

The following result extends corollary \ref{prop:TauImpliesU} to allow $X$ to be multi-valued or continuous.

\begin{corollary}\label{prop:TauImpliesU_ctsX}
Suppose $U$ is continuously distributed; normalize $U \sim \text{Unif}[0,1]$. Let $\mathcal{T} = [a,b]\subseteq [0,1]$. Then $\mathcal{T}$-independence of $U$ from $X$ implies $F_{X \mid U}(x \mid u) = F_X(x)$ for all $x \in \R$ and almost all $u \in \mathcal{T}$.
\end{corollary}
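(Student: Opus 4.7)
The plan is to apply the average value characterizations of quantile independence (Theorems~\ref{thm:AvgValueCharacterization_ctsX} and \ref{thm:AvgValueCharacterization_discreteX}) on the interval $\mathcal{T} = [a,b]$, and then invoke the Lebesgue differentiation theorem to pass from equality of averages on every subinterval to pointwise (a.e.) equality. This is precisely the strategy used to prove Corollary~\ref{prop:TauImpliesU} in the binary case, now extended to cope with the fact that the characterizations in Section~\ref{sec:ctsX} are indexed by the treatment level $x$.

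First, consider the continuous $X$ case. Fix $x \in \R$ and define $g_x(u) = \Prob(X > x \mid U = u)$, which is integrable on $[0,1]$ since it is uniformly bounded. Because $\mathcal{T} = [a,b]$, every pair $t_1 < t_2$ in $[a,b]$ lies in $\mathcal{T} \cup \{0,1\}$, so Theorem~\ref{thm:AvgValueCharacterization_ctsX} yields
\[
\frac{1}{t_2 - t_1}\int_{t_1}^{t_2} g_x(u)\, du = \Prob(X > x)
\qquad \text{for all } a \le t_1 < t_2 \le b.
\]
Applying this to shrinking intervals around any Lebesgue point of $g_x$ in $(a,b)$ gives $g_x(u) = \Prob(X > x)$ for almost every $u \in [a,b]$. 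Call the exceptional null set $N_x$.

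The remaining subtlety, and the main obstacle, is converting the ``for each $x$, a.e.\ $u$'' statement into a ``for a.e.\ $u$, all $x$'' statement. Let $N = \bigcup_{x \in \mathbb{Q}} N_x$, which is still Lebesgue null. For any $u \in [a,b] \setminus N$, we have $\Prob(X > q \mid U = u) = \Prob(X > q)$ for every rational $q$. Both sides are right-continuous in the threshold (since $F_{X \mid U}(\cdot \mid u)$ and $F_X$ are cdfs), so taking rational $q \searrow x$ extends the equality to all $x \in \R$. Therefore $F_{X \mid U}(x \mid u) = 1 - \Prob(X > x \mid U = u) = 1 - \Prob(X > x) = F_X(x)$ for all $x \in \R$ and a.e.\ $u \in \mathcal{T}$.

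For the multi-valued discrete case, the argument is even more direct: Theorem~\ref{thm:AvgValueCharacterization_discreteX} together with the Lebesgue differentiation theorem gives, for each $k$, a null set $N_k \subseteq [a,b]$ outside of which $\Prob(X = x_k \mid U = u) = \Prob(X = x_k)$. Since there are only finitely (or countably) many $k$, $N = \bigcup_k N_k$ is null, and for $u \in [a,b] \setminus N$ one recovers the full conditional pmf, and hence $F_{X \mid U}(x \mid u) = F_X(x)$ for all $x \in \R$, by summing. Combining the two cases completes the proof.
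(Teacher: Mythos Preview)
Your proof is correct, but it takes a different route than the paper. The paper's argument is a one-liner: define $\widetilde{X} = \indicator(X \leq x)$ and apply Corollary~\ref{prop:TauImpliesU} to the binary variable $\widetilde{X}$, since $\mathcal{T}$-independence of $U$ from $X$ immediately implies $\mathcal{T}$-independence of $U$ from any measurable function of $X$. You instead go back to the average value characterizations (Theorems~\ref{thm:AvgValueCharacterization_ctsX} and \ref{thm:AvgValueCharacterization_discreteX}) and invoke Lebesgue differentiation directly. Two remarks. First, your claim that this is ``precisely the strategy used to prove Corollary~\ref{prop:TauImpliesU}'' is not accurate: that proof proceeds via a conditional independence computation ($U \independent X \mid \{U \in \mathcal{T}\}$), not via Lebesgue differentiation. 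Second, your approach buys something the paper's terse proof glosses over: you explicitly handle the quantifier swap from ``for each $x$, a.e.\ $u$'' to ``for a.e.\ $u$, all $x$'' via the countable dense set and right-continuity argument. The paper's proof, read literally, only delivers the former for each fixed $x$ (with an $x$-dependent null set), so your treatment is in fact more careful on this point.
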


Although we focus on binary $X$ for the remainder of this section, this corollary suggests that we can generalize our definition of $\mathcal{U}$-independence to allow multi-valued or continuous treatments by specifying $F_{X \mid U}(x \mid u) = F_X(x)$ for all $x \in \R$ and almost all $u \in \mathcal{U}$.

\subsection{The Potential Outcomes Model}

Let $Y_1$ and $Y_0$ denote unobserved potential outcomes. Let $X \in \{ 0,1\}$ be an observed binary treatment. We observe the scalar outcome variable
\begin{equation}\label{eq:potential outcomes}
	Y = X Y_1 + (1-X) Y_0.
\end{equation}
All results hold if we condition on an additional vector of observed covariates $W$. For simplicity we omit these covariates.
Let $p_x = \Prob(X=x)$ for $x \in \{0,1\}$. We impose the following assumption on the joint distribution of $(Y_1,Y_0,X)$.

\setcounter{partialIndepSection}{1}
\begin{partialIndepAssump}\label{assn:continuity}
For each $x,x' \in \{0,1\}$:
\begin{enumerate}
\item \label{A1_1} $Y_x \mid X=x'$ has a strictly increasing and  continuous distribution function on its support, $\supp(Y_x \mid X=x')$.

\item \label{A1_3} $\supp(Y_x \mid X=x') = \supp(Y_x) = [\underline{y}_x,\overline{y}_x]$ where $-\infty \leq \underline{y}_x < \overline{y}_x \leq \infty$.

\item \label{A1_4} $p_x > 0$.
\end{enumerate}
\end{partialIndepAssump}

Via A\ref{assn:continuity}.\ref{A1_1}, we restrict attention to continuously distributed potential outcomes. A\ref{assn:continuity}.\ref{A1_3} states that the unconditional and conditional supports of $Y_x$ are equal, and are a possibly infinite closed interval. This assumption implies that the endpoints $\underline{y}_x$ and $\overline{y}_x$ are point identified. We maintain A\ref{assn:continuity}.\ref{A1_3} for simplicity, but it can be relaxed using similar derivations as in \cite{MastenPoirier2016}. A\ref{assn:continuity}.\ref{A1_4} is an overlap assumption.

We focus on two parameters: The average treatment effect for the treated,
\[
	\text{ATT} = \Exp(Y_1 - Y_0 \mid X=1),
\]
and the quantile treatment effect for the treated,
\[
	\text{QTT}(q) = Q_{Y_1 \mid X}(q \mid 1) - Q_{Y_0 \mid X}(q \mid 1),
\]
for $q \in (0,1)$. Treatment on the treated parameters are particularly simple to analyze since the distribution of $Y_1 \mid X=1$ is point identified directly from the observed distribution of $Y \mid X=1$. Hence we only need to make assumptions on the relationship between $Y_0$ and $X$. Our analysis can be extended to parameters like ATE by imposing $\mathcal{T}$- or $\mathcal{U}$-independence between $Y_1$ and $X$ as well as between $Y_0$ and $X$.

\subsection{Partial Identification of Treatment Effects}\label{sec:treatmentEffectBounds}

In this subsection we derive sharp bounds on the ATT and $\text{QTT}(q)$ under both $\mathcal{T}$- and $\mathcal{U}$-independence. Since $F_{Y_1 \mid X}(\cdot \mid 1) = F_{Y \mid X}(\cdot \mid 1)$,
\[
	\text{QTT}(q) = Q_{Y \mid X}(q \mid 1) - Q_{Y_0 \mid X}(q \mid 1).
\]
Hence it suffices to derive bounds on $Q_{Y_0 \mid X}(q \mid 1)$. Let $0 < a \leq b < 1$. We define the functions $\overline{Q}_{Y_0 \mid X}^\mathcal{T}(q \mid 1)$, $\underline{Q}_{Y_0 \mid X}^\mathcal{T}(q \mid 1)$, $\overline{Q}_{Y_0 \mid X}^\mathcal{U}(q \mid 1)$, and $\underline{Q}_{Y_0 \mid X}^\mathcal{U}(q \mid 1)$ in appendix \ref{appendix:cdfbounds}. These are piecewise functions which depend on $a$, $b$, $p_1$, and $Q_{Y \mid X}$.

\begin{proposition}\label{prop:quantileTrtBounds}
Let A\ref{assn:continuity} hold. Suppose $Y_0$ is $\mathcal{T}$-independent of $X$ with $\mathcal{T} = [Q_{Y_0}(a),Q_{Y_0}(b)]$, $0 < a \leq b < 1$. Suppose the joint distribution of $(Y,X)$ is known. Let $q \in (0,1)$. Then $Q_{Y_0 \mid X}(q \mid 1)$ lies in the set
\begin{equation}\label{eq:Y0quantileIdentifiedSet}
	\left[ \underline{Q}_{Y_0 \mid X}^\mathcal{T}(q \mid 1), \, \overline{Q}_{Y_0 \mid X}^\mathcal{T}(q \mid 1) \right].
\end{equation}
Moreover, the interior of this set is sharp. Finally, the proposition continues to hold if we replace $\mathcal{T}$ with $\mathcal{U}$.
\end{proposition}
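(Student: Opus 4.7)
The plan is to bound the unknown CDF $G_0(y) := F_{Y_0 \mid X}(y \mid 1)$ pointwise and then invert to obtain bounds on the quantile $Q_{Y_0 \mid X}(q \mid 1) = G_0^{-1}(q)$. By A\ref{assn:continuity}, the data point-identify $F_0^{\text{obs}}(y) := F_{Y \mid X}(y \mid 0) = F_{Y_0 \mid X}(y \mid 0)$, the support endpoints $\underline{y}_0, \overline{y}_0$, and $p_1 = \Prob(X=1)$. Bayes' rule gives $F_{Y_0} = p_0 F_0^{\text{obs}} + p_1 G_0$, so identifying $G_0$ pins down $F_{Y_0}$ as well.

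Normalizing $U := F_{Y_0}(Y_0) \sim \text{Unif}[0,1]$, $\mathcal{T}$-independence with $\mathcal{T} = [Q_{Y_0}(a), Q_{Y_0}(b)]$ becomes $[a,b]$-independence of $U$ from $X$. Applying Theorem \ref{thm:AvgValueCharacterization} to the latent propensity score $p(u) = \Prob(X=1 \mid U=u)$, the average-value conditions over sub-intervals with endpoints in $\{0,1\} \cup [a,b]$ reduce to three: $p \equiv p_1$ on $[a,b]$, $\int_0^a p(u)\,du = p_1 a$, and $\int_b^1 p(u)\,du = p_1(1-b)$. The change of variables $u = F_{Y_0}(y)$ rewrites these as $g_0 = f_0^{\text{obs}}$ on $[y_a, y_b]$ (with $y_a := Q_{Y_0}(a)$ and $y_b := Q_{Y_0}(b)$), $G_0(y_a) = a$, and $G_0(y_b) = b$, respectively. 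Plugging back into the Bayes identity forces $y_a = (F_0^{\text{obs}})^{-1}(a)$, $y_b = (F_0^{\text{obs}})^{-1}(b)$, and $G_0 = F_0^{\text{obs}}$ on $[y_a, y_b]$. Under $\mathcal{U}$-independence only the flatness condition survives, so $G_0 - F_0^{\text{obs}}$ equals an undetermined feasible constant on $[y_a, y_b]$, and the endpoints $y_a, y_b$ are themselves only partially identified.

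Outside the flat region, the remaining restrictions on $G_0$ are continuity, strict monotonicity on $[\underline{y}_0, \overline{y}_0]$, $G_0(\underline{y}_0) = 0$, $G_0(\overline{y}_0) = 1$, and the known values at $y_a$ and $y_b$. A pointwise envelope argument then yields bounds on $G_0(y)$: on the flat region $G_0(y)$ is point-identified (under $\mathcal{T}$) or ranges over the feasible shifts (under $\mathcal{U}$); off the flat region $G_0(y)$ is bounded solely by monotonicity against the fixed values $0, a, b, 1$. Inverting these pointwise CDF bounds produces the piecewise expressions $\underline{Q}_{Y_0 \mid X}^{\mathcal{T}/\mathcal{U}}(q \mid 1)$ and $\overline{Q}_{Y_0 \mid X}^{\mathcal{T}/\mathcal{U}}(q \mid 1)$, with cases distinguished by whether $q$ falls inside or outside $[a,b]$.

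The hard part is interior sharpness. For a target value $v$ in the interior of the claimed identified set, I would construct a joint distribution of $(Y_0, Y_1, X)$ compatible with the observed $(Y,X)$ distribution and with the independence assumption such that $Q_{Y_0 \mid X}(q \mid 1) = v$. One first builds a candidate $G_0$ by pasting its mandated behaviour on $[y_a, y_b]$ together with a continuous, strictly increasing extension outside that satisfies $G_0^{-1}(q) = v$; $Y_1$ is then coupled to $(Y_0, X)$ via any joint law preserving the observed marginal $F_{Y \mid X}(\cdot \mid 1)$. Under $\mathcal{T}$-independence, the delicate step is to verify that the constructed propensity score satisfies \emph{every} average-value condition of Theorem \ref{thm:AvgValueCharacterization}, not merely the three reduced ones. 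This is ensured once the flat segment on $[a,b]$ and the two endpoint integral identities are imposed, since any sub-interval with endpoints in $\{0,1\} \cup [a,b]$ decomposes into pieces whose integrals are already fixed; as in the construction of \cite{MastenPoirier2017}, an approximation argument near the boundary then attains values arbitrarily close to the extremes, yielding interior sharpness rather than closed-set sharpness.
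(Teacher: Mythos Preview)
Your approach is correct and shares the paper's overall architecture: normalize to the rank $R_0=F_{Y_0}(Y_0)$, convert the independence assumption into constraints on a conditional cdf, derive pointwise cdf bounds, and invert to quantile bounds, with an $\eta$-approximation to secure interior (rather than closed) sharpness. The organizational difference is that the paper routes the argument through its standalone Proposition~\ref{prop:TauCDFbounds}: it bounds $F_{R_0\mid X}(\cdot\mid 0)$ generically, uses the identity $F_{Y\mid X}(y\mid 0)=F_{R_0\mid X}(F_{Y_0}(y)\mid 0)$ to recover bounds on $F_{Y_0}$, and only then substitutes into $F_{Y_0\mid X}(y\mid 1)=\big(F_{Y_0}(y)-p_0F_{Y\mid X}(y\mid 0)\big)/p_1$. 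You instead translate the average-value constraints of Theorem~\ref{thm:AvgValueCharacterization} directly into constraints on $G_0=F_{Y_0\mid X}(\cdot\mid 1)$, which for the $\mathcal{T}$ case yields the bounds more transparently (your observation that $y_a=(F_0^{\text{obs}})^{-1}(a)$ and $G_0\equiv F_0^{\text{obs}}$ on $[y_a,y_b]$ is exactly what makes the $\mathcal{T}$ quantile bounds so simple). The cost is in the $\mathcal{U}$ case: because $y_a,y_b$ and the shift constant are themselves free, working directly with $G_0$ requires you to optimize over these jointly, and your sketch stops short of carrying this through to the explicit two-case piecewise forms in appendix~\ref{appendix:cdfbounds}. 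The paper's modular route handles $\mathcal{T}$ and $\mathcal{U}$ uniformly because Proposition~\ref{prop:TauCDFbounds} already packages both sets of cdf bounds; your direct route is cleaner for $\mathcal{T}$ but would need more work to match the $\mathcal{U}$ bounds in full detail.
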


$\mathcal{T}$-independence of $Y_0$ from $X$ with $\mathcal{T} = [Q_{Y_0}(a),Q_{Y_0}(b)]$ is equivalent to the quantile independence assumptions $Q_{Y_0 \mid X}(\tau \mid x) = Q_{Y_0}(\tau)$ for all $\tau \in [a,b]$, by A\ref{assn:continuity}. The bounds \eqref{eq:Y0quantileIdentifiedSet} are also sharp for the function $Q_{Y_0 \mid X}(\cdot \mid 1)$ in a sense similar to that used in proposition \ref{prop:TauCDFbounds} in appendix \ref{sec:cdfIdent}; we omit the formal statement for brevity. This functional sharpness delivers the following result.

\begin{corollary}\label{corr:ATT_Y0bounds}
Suppose the assumptions of proposition \ref{prop:quantileTrtBounds} hold. Suppose $\Exp( | Y | \mid X=x) < \infty$ for $x \in \{0,1\}$. Then $\Exp(Y_0 \mid X=1)$ lies in the set
\[
	\left[ \underline{\Exp}^\mathcal{T}(Y_0 \mid X=1), \,\overline{\Exp}^\mathcal{T}(Y_0 \mid X=1) \right]
	\equiv
	\left[
		\int_0^1 \underline{Q}_{Y_0 \mid X}^\mathcal{T}(q \mid 1) \; dq, \,
		\int_0^1 \overline{Q}_{Y_0 \mid X}^\mathcal{T}(q \mid 1) \; dq
	\right].
\]
Moreover, the interior of this set is sharp. Finally, the corollary continues to hold if we replace $\mathcal{T}$ with $\mathcal{U}$.
\end{corollary}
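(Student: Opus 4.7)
The plan is to derive the containment by integrating the pointwise quantile bounds from proposition \ref{prop:quantileTrtBounds}, using the elementary identity
\[
	\Exp(Y_0 \mid X=1) = \int_0^1 Q_{Y_0 \mid X}(q \mid 1) \, dq,
\]
which holds whenever the conditional mean is finite. The pointwise inequality $\underline{Q}_{Y_0 \mid X}^\mathcal{T}(q \mid 1) \leq Q_{Y_0 \mid X}(q \mid 1) \leq \overline{Q}_{Y_0 \mid X}^\mathcal{T}(q \mid 1)$ guaranteed by that proposition integrates directly to the claimed interval. A preliminary step is to check that the two bound functions are themselves Lebesgue-integrable on $(0,1)$: this follows from the hypothesis $\Exp(|Y| \mid X=x) < \infty$ for $x \in \{0,1\}$, since the bound functions are defined piecewise in terms of $p_1$, $a$, $b$, and the observable quantile functions $Q_{Y \mid X}(\cdot \mid 0)$ and $Q_{Y \mid X}(\cdot \mid 1)$, each of which is integrable on $(0,1)$ under this observable-moment assumption. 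A sandwich argument then forces $\Exp(|Y_0| \mid X=1) < \infty$ for any distribution consistent with the quantile bounds, so the integral representation is legitimate.

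For sharpness of the interior, I would invoke the functional form of the sharpness statement underlying proposition \ref{prop:quantileTrtBounds} (proposition \ref{prop:TauCDFbounds} in the appendix), which says that any admissible conditional quantile function strictly sandwiched between $\underline{Q}_{Y_0 \mid X}^\mathcal{T}(\cdot \mid 1)$ and $\overline{Q}_{Y_0 \mid X}^\mathcal{T}(\cdot \mid 1)$ is attained by the conditional quantile of some joint distribution of $(Y_1, Y_0, X)$ satisfying A\ref{assn:continuity} and $\mathcal{T}$-independence. Given a target value $v$ in the open interval $(\underline{\Exp}^\mathcal{T}(Y_0 \mid X=1),\, \overline{\Exp}^\mathcal{T}(Y_0 \mid X=1))$, consider the convex-combination family
\[
	Q_\lambda(q) = (1-\lambda)\underline{Q}_{Y_0 \mid X}^\mathcal{T}(q \mid 1) + \lambda \overline{Q}_{Y_0 \mid X}^\mathcal{T}(q \mid 1), \qquad \lambda \in [0,1].
\]
Each $Q_\lambda$ is nondecreasing with image in $[\underline{y}_0, \overline{y}_0]$, and $\lambda \mapsto \int_0^1 Q_\lambda(q)\, dq$ is affine in $\lambda$, interpolating linearly between the lower and upper mean bounds. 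The intermediate value theorem yields some $\lambda^\star \in (0,1)$ with $\int_0^1 Q_{\lambda^\star}(q)\, dq = v$, and since $Q_{\lambda^\star}$ lies strictly between the two envelope functions wherever they disagree, functional sharpness produces an admissible joint law attaining mean $v$.

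The main obstacle is the sharpness step, specifically verifying that a monotone function sandwiched strictly between the two envelope functions is realizable as the conditional quantile $Q_{Y_0 \mid X}(\cdot \mid 1)$ of a joint law satisfying both A\ref{assn:continuity} and $\mathcal{T}$-independence. This requires constructing, for a candidate target quantile function, a latent propensity score $p$ equal to $p_1$ on $\mathcal{T}$ and satisfying the average-value constraint of theorem \ref{thm:AvgValueCharacterization} off $\mathcal{T}$, and pairing it with a conditional distribution $F_{Y_0 \mid X}(\cdot \mid 0)$ so that the implied marginal of $Y_0$ is internally consistent, that is, so Bayes' rule reproduces the targeted $F_{Y_0 \mid X}(\cdot \mid 1)$. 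This construction is precisely what proposition \ref{prop:TauCDFbounds} in the appendix provides. The same argument applies verbatim with $\mathcal{U}$ replacing $\mathcal{T}$, since the only input used is the functional sharpness of the quantile bounds, which proposition \ref{prop:quantileTrtBounds} also grants in that case.
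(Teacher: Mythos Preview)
Your containment argument is correct and matches the intended route: integrate the pointwise quantile bounds from proposition \ref{prop:quantileTrtBounds} against $dq$ and use $\Exp(Y_0\mid X=1)=\int_0^1 Q_{Y_0\mid X}(q\mid 1)\,dq$. The paper itself does not spell this out, deferring to derivations analogous to corollary 1 of \cite{MastenPoirier2017}, but your write-up is consistent with that approach.

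There is one technical imprecision in your sharpness step. You form $Q_\lambda=(1-\lambda)\underline{Q}_{Y_0\mid X}^{\mathcal{T}}(\cdot\mid 1)+\lambda\,\overline{Q}_{Y_0\mid X}^{\mathcal{T}}(\cdot\mid 1)$ and then assert that functional sharpness delivers an admissible joint law with this conditional quantile function. But the functional sharpness actually established in the paper (equation \eqref{eq:joint epsilon cdf} of proposition \ref{prop:TauCDFbounds}, and its $\eta$-perturbed version in the proof of proposition \ref{prop:quantileTrtBounds}) is for \emph{cdf} convex combinations $\varepsilon\,\underline{F}+(1-\varepsilon)\overline{F}$, not for quantile convex combinations. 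A convex combination of two quantile functions is not, in general, the quantile function of the corresponding convex combination of cdfs, so your $Q_{\lambda^\star}$ need not be attained by any admissible law. The fix is immediate: work instead with the attainable cdf family $F_\varepsilon=\varepsilon\,\underline{F}_{Y_0\mid X}^{\mathcal{T}}(\cdot\mid 1)+(1-\varepsilon)\overline{F}_{Y_0\mid X}^{\mathcal{T}}(\cdot\mid 1)$ (or its $\eta$-approximants). Since $\int y\,dF_\varepsilon=\varepsilon\,\overline{\Exp}^{\mathcal{T}}(Y_0\mid X=1)+(1-\varepsilon)\underline{\Exp}^{\mathcal{T}}(Y_0\mid X=1)$ by linearity of $F\mapsto\int y\,dF$, the mean still interpolates linearly between the two endpoints as $\varepsilon$ ranges over $[0,1]$, and the intermediate value argument goes through unchanged. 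With that correction your argument is complete and agrees with the paper's.
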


By proposition \ref{prop:quantileTrtBounds} we have that $\mathcal{T}$-independence implies that $\text{QTT}(q)$ lies in the set
\[
	\left[ Q_{Y \mid X}(q \mid 1) - \overline{Q}_{Y_0 \mid X}^\mathcal{T}(q \mid 1), \;
		Q_{Y \mid X}(q \mid 1) - \underline{Q}_{Y_0 \mid X}^\mathcal{T}(q \mid 1) \right]
\]
and that the interior of this set is sharp. Likewise for $\mathcal{U}$-independence. If $q \in \mathcal{T}$, then $\text{QTT}(q)$ is point identified under $\mathcal{T}$-independence (as is immediate from our bound expressions in appendix \ref{appendix:cdfbounds}). This result---that a single quantile independence condition can be sufficient for point identifying a treatment effect---was shown by \cite{Chesher2003}. A similar result holds in the instrumental variables model of \cite{ChernozhukovHansen2005} and the LATE model of \cite{ImbensAngrist1994}. See the discussion around assumption 4 in section 1.4.3 of \cite{MellyWuthrich2017}.

By corollary \ref{corr:ATT_Y0bounds} we have that $\mathcal{T}$-independence implies that ATT lies in the set
\[
	\left[ \Exp(Y \mid X=1) - \overline{\Exp}^\mathcal{T}(Y_0 \mid X=1), \;
		\Exp(Y \mid X=1) - \underline{\Exp}^\mathcal{T}(Y_0 \mid X=1) \right]
\]
and that the interior of this set is sharp. Likewise for $\mathcal{U}$-independence. Furthermore, in appendix \ref{appendix:cdfbounds} we show that these ATT bounds have simple analytical expressions, obtained from integrating our closed form expressions for the bounds on $Q_{Y_0 \mid X}(q \mid 1)$.

\subsection{Numerical Illustration}

By corollary \ref{prop:TauImpliesU}, $\mathcal{T}$-independence implies $\mathcal{U}$-independence for $\mathcal{T} = \mathcal{U}$. Hence identified sets using $\mathcal{T}$-independence must necessarily be weakly contained within identified sets using only $\mathcal{U}$-independence, when $\mathcal{T} = \mathcal{U}$. In this subsection, we use a numerical illustration to explore the magnitude of this difference. Since $\mathcal{T}$-independence is simply $\mathcal{U}$-independence combined with the average value constraint, the size difference between these identified sets tells us the identifying power of the average value constraint.

For $x \in \{0,1 \}$, suppose the density of $Y \mid X = x$ is
\[
	f_{Y \mid X}(y \mid x) = \frac{1}{\gamma x + 1} \phi_{[-4,4]} \left( \frac{y - \pi x}{\gamma x + 1} \right)
\]
where $\phi_{[-4,4]}$ is the pdf for the truncated standard normal on $[-4,4]$. $X$ is binary with $\Prob(X=1) = 0.5$. Let $\gamma = 0.1$ and $\pi = 1$.

Under the full independence assumption $Y_0 \independent X$, this dgp implies that treatment effects are heterogeneous, with an average treatment effect for the treated of $\text{ATT} = \pi = 1$. The quantile treatment effect for the treated at $q = 0.5$ also equals $\pi = 1$ under full independence. We continuously relax full independence by considering $\mathcal{T}$- and $\mathcal{U}$-independence with the choice $\mathcal{T} = \mathcal{U} = [\delta,1-\delta]$ for $\delta \in [0,0.5]$. For $\delta = 0$, this choice corresponds to full independence under both classes of assumptions. For $\delta = 0.5$, this choice corresponds to median independence for $\mathcal{T}$-independence, and no assumptions for $\mathcal{U}$-independence. Values of $\delta$ between 0 and $0.5$ yield partial independence between $Y_0$ and $X$ for both classes of assumptions.

\begin{figure}[t]
\centering
\includegraphics[width=80mm]{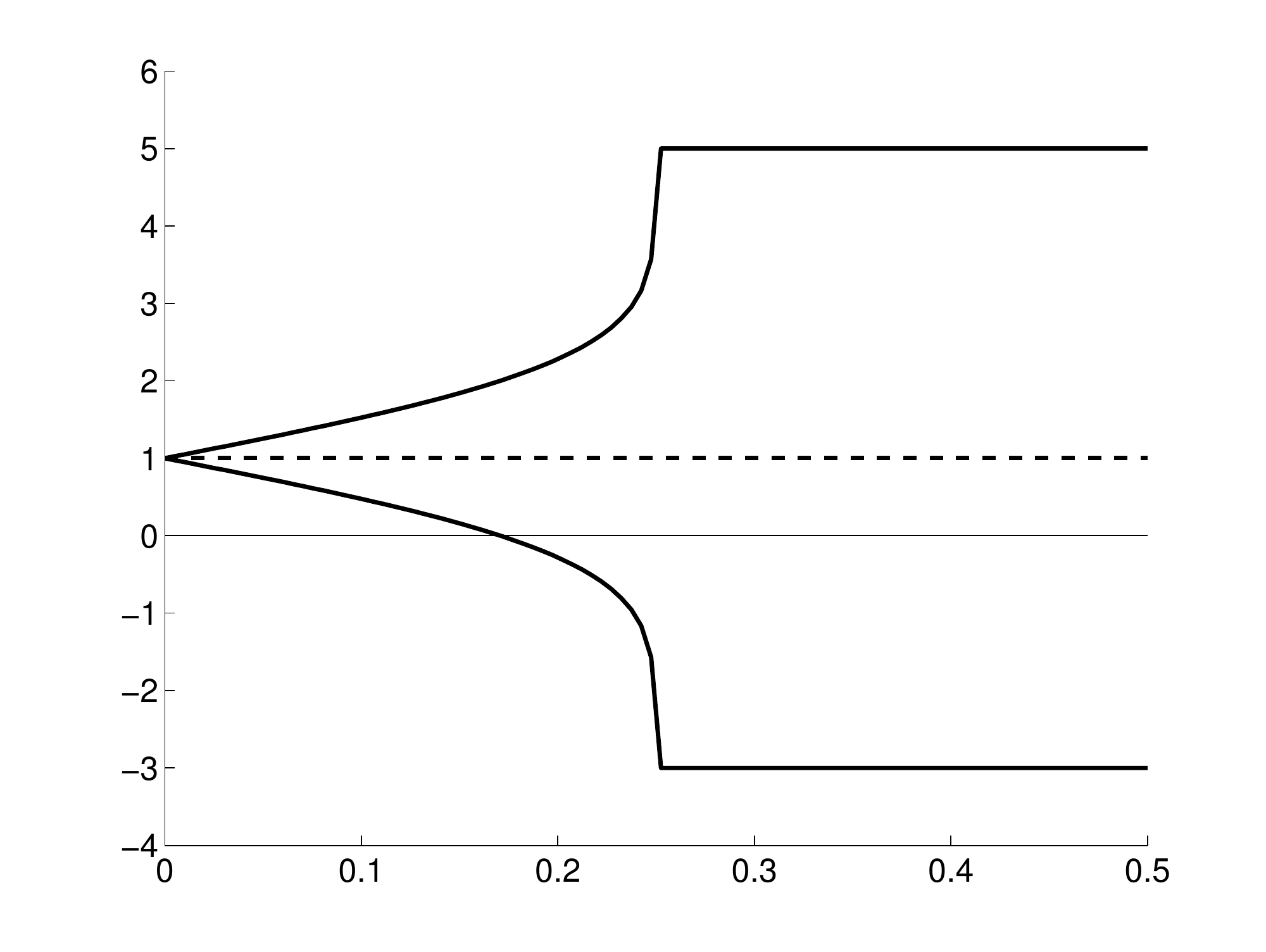}
\includegraphics[width=80mm]{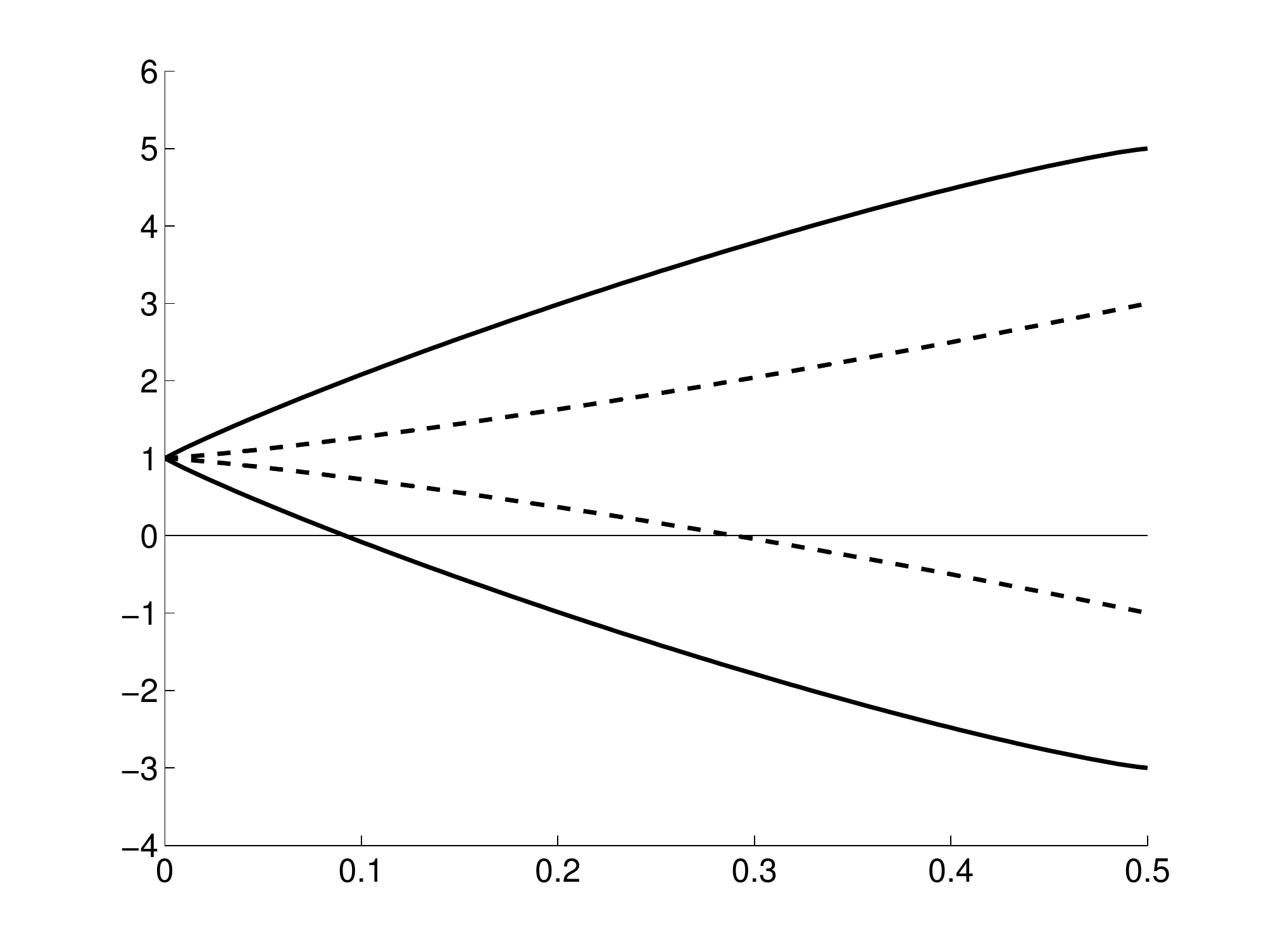}
\caption{Bounds on $\text{QTT}(0.5)$ (left) and ATT (right) for $\mathcal{U} = \mathcal{T} = [\delta,1-\delta]$ with $\delta \in [0,0.5]$. Solid: $\mathcal{U}$-independence. Dashed: $\mathcal{T}$-independence}
\label{fig:numericalIllustration}
\end{figure}

Figure \ref{fig:numericalIllustration} shows identified sets for both ATT and $\text{QTT}(0.5)$ as $\delta$ varies from $0$ to $0.5$. First consider the plot on the left, which shows the $\text{QTT}(0.5)$ bounds. The dashed lines are the identified sets under $\mathcal{T}$-independence. Since median independence of $Y_0$ from $X$ is sufficient to point identify the conditional median $Q_{Y_0 \mid X}(0.5 \mid 1)$, median independence is also sufficient to point identify the QTT at $0.5$. Hence the identified set is a singleton for all $\delta \in [0,0.5]$. Next consider the solid lines. These are the identified sets under $\mathcal{U}$-independence. When $\delta = 0.5$, $\mathcal{U}$-independence does not impose any constraints on the model, and hence we obtain the no-assumptions bounds, which are quite wide: $[-3,5]$. If we decrease $\delta$ a small amount, thus making the $\mathcal{U}$-independence constraint nontrivial, the identified set does not change. In fact, we can impose random assignment for the middle 50\% of units (i.e., $\mathcal{U} = [0.25,0.75]$, which is $\delta = 0.25$) and still we only obtain the no-assumptions bounds. Consequently, for intervals $\mathcal{T} \subseteq [0.25,0.75]$, \emph{the point identifying power of $\mathcal{T}$-independence is due solely to the constraint it imposes on the average value of the latent propensity score outside the interval} $\mathcal{T}$, rather than the constraint that random assignment holds for units in the middle of the distribution of $Y_0$.

Next consider the plot on the right of figure \ref{fig:numericalIllustration}, which shows the ATT bounds. The dashed lines are the identified sets under $\mathcal{T}$-independence. The ATT is no longer point identified under median independence, or any set $\mathcal{T} \subsetneq (0,1)$ of quantile independence conditions; that is, the ATT is partially identified for all $\delta > 0$. Nonetheless, even median independence alone has substantial identifying power: For $\delta = 0.5$, the identified set under median independence is $[-1,3]$, whereas the no-assumptions bounds are $[-3,5]$. Thus the length of the bounds has been cut in half. For $\delta > 0$, $\mathcal{U}$-independence has non-trivial identifying power, as shown in the solid lines. However, comparing the length of these bounds to the length to the $\mathcal{T}$-independence bounds, we see that imposing the average value constraint outside the interval $[\delta,1-\delta]$ again has substantial identifying power: the $\mathcal{T}$-independence bounds are anywhere from 50\% ($\delta = 0.5)$ to almost 100\% (arbitrarily small $\delta$) smaller than the $\mathcal{U}$-independence bounds. That is, the difference in lengths increases as we get closer to independence (as $\delta$ gets smaller). Thus conclusions about ATT are substantially more sensitive to small deviations from independence which do not impose the average value constraint, compared with small deviations which do impose that constraint.

\section{Conclusion}

In this paper we studied the interpretation of quantile independence of an unobserved variable $U$ from an observed variable $X$. We considered binary, discrete, and continuous $X$. We characterized sets of such quantile independence assumptions in terms of their constraints on the distribution of $X$ given $U$. This characterization shows that quantile independence requires \emph{non-monotonic} treatment selection. For example, if any quantile independence conditions on $U \mid X$ hold then for (a) binary $X$ the probability of receiving treatment given $U=u$ must be non-monotonic in $u$, while for (b) continuous $X$ the distribution of $X \mid U=u$ cannot be stochastically monotonic in $u$. Moreover, in a numerical illustration we show that the average value constraint (which imposes this non-monotonicity) has substantial identifying power, by comparing quantile independence with a weaker version we call \emph{$\mathcal{U}$-independence}.

Any class of deviations from statistical independence of $U$ and $X$ allows for certain forms of treatment selection, in the sense that the distribution of $X \mid U=u$ depends nontrivially on $u$. Thus the choice of such deviations from independence should be driven by the class of desired selection rules one wishes to allow for. For quantile independence, we characterized this class of selection rules. This class of selection rules may be of interest in some empirical applications, but not in others. Either way, researchers should justify their choice. If the class of selection rules allowed by quantile independence is deemed undesirable, several alternatives exist. These include $\mathcal{U}$-independence as defined in this paper, and $c$-dependence as defined in \cite{MastenPoirier2017}. $c$-dependence constrains the probability of receiving treatment given one's unobservables to be not too far from the overall probability of receiving treatment, and allows for monotonic selection.

In section \ref{sec:LatentSelectionModels} we considered several standard selection models. We linked their structure to the presence or absence of monotonic treatment selection, and therefore to the plausibility of quantile independence assumptions. It would be helpful to perform a more extensive analysis for additional models. In particular, in section \ref{sec:LatentSelectionModels} we only considered the relationship between a treatment variable and an unobservable. Many models instead use quantile independence to relax statistical independence between an instrument and an unobservable. Consequently, these models allow instruments to be selected on the unobservables. Our main results in sections \ref{sec:characterizationSection} and \ref{sec:ctsX} characterize the kinds of distributions of instruments given unobservables allowed by quantile independence. Detailing how these distributions relate to economic models of instrument selection would help researchers assess the plausibility of quantile independence assumptions involving instruments.

In section \ref{sec:treatmentEffects}, we studied the identifying power of the average value constraint in the standard potential outcomes model with a binary treatment. It would be helpful to perform a similar analysis for other parameters in that model, like the ATE or QTE, and also for different models altogether. In particular, quantile independence assumptions are widely used in discrete response models (following \citealt{Manski1975,Manski1985}). While our main results in sections \ref{sec:characterizationSection} and \ref{sec:ctsX} already apply to the interpretation of quantile independence in these models, an identification analysis analogous to that in section \ref{sec:treatmentEffects} would explain the importance of the average value constraint---which requires non-monotonic treatment selection---in obtaining point identification.

\singlespacing
\bibliographystyle{econometrica}
\bibliography{QI_paper}

\appendix

\section{Partial Identification of cdfs under $\mathcal{T}$- and $\mathcal{U}$-independence}\label{sec:cdfIdent}

To obtain the identified sets in section \ref{sec:treatmentEffectBounds}, we first derive sharp bounds on cdfs under generic $\mathcal{T}$- and $\mathcal{U}$-independence. We then apply these results to obtain sharp bounds on the treatment effect parameters. To this end, in this subsection we consider the relationship between a generic continuous random variable $U$ and a binary variable $X \in \{ 0,1 \}$. We derive sharp bounds on the conditional cdf of $U$ given $X$ when (1) the marginal distributions of $U$ and $X$ are known and either (2) $U$ is $\mathcal{T}$-independent of $X$ or (2$^\prime$) $U$ is $\mathcal{U}$-independent of $X$. We obtain the identified sets in section \ref{sec:treatmentEffectBounds} by applying this general result with $U = R_x$, a scaled potential outcome variable.

Let $F_{U \mid X}(u \mid x) = \Prob(U \leq u \mid X=x)$ denote the unknown conditional cdf of $U$ given $X=x$. Let $F_U(u) = \Prob(U \leq u)$ denote the known marginal cdf of $U$. Let $p_x = \Prob(X=x)$ denote the known marginal probability mass function of $X$. Let $a,b \in \R$, $a \leq b$. We define the functions $\overline{F}_{U \mid X}^\mathcal{T}(u \mid x)$, $\underline{F}_{U \mid X}^\mathcal{T}(u \mid x)$, $\overline{F}_{U \mid X}^\mathcal{U}(u \mid x)$, and $\underline{F}_{U \mid X}^\mathcal{U}(u \mid x)$ in appendix \ref{appendix:cdfbounds}. These are piecewise linear functions which depend on $a$, $b$, and $p_x$. Figure \ref{TauDepBoundsOnCDFs} plots several examples.

\begin{proposition}\label{prop:TauCDFbounds}
Suppose the following hold:
\begin{enumerate}
\item The marginal distributions of $U$ and $X$ are known.
\item $U$ is continuously distributed.
\item $p_1 \in (0,1)$.
\item $U$ is $\mathcal{T}$-independent of $X$ with $\mathcal{T} = [a,b]$.
\end{enumerate}
Let $\mathcal{F}_{\supp(U)}$ denote the set of all cdfs on $\supp(U)$. Then, for each $x \in \{0,1\}$, $F_{U \mid X}(\cdot \mid x) \in \mathcal{F}_{U \mid x}^{\mathcal{T}}$, where
\[
	\mathcal{F}_{U \mid x}^\mathcal{T} =
	\left\{ F \in \mathcal{F}_{\supp(U)} :
	\underline{F}_{U \mid X}^\mathcal{T}(u \mid x) \leq F_{U \mid X}(u \mid x) \leq \overline{F}_{U \mid X}^\mathcal{T}(u \mid x) \text{ for all $u \in \supp(U)$} \right\}.
\]
Furthermore, for each $\varepsilon \in [0,1]$, there exists a joint distribution of $(U,X)$ consistent with assumptions 1--4 above and such that
\begin{multline}\label{eq:joint epsilon cdf}
	\big( \Prob(U\leq u \mid X=1), \, \Prob(U\leq u \mid X=0) \big) \\
	= \left( \varepsilon  \underline{F}_{U \mid X}^\mathcal{T}(u \mid 1) + (1-\varepsilon) \overline{F}_{U \mid X}^{\mathcal{T}}(u \mid 1), \, 
		(1-\varepsilon)  \underline{F}_{U \mid X}^\mathcal{T}(u \mid 0) + \varepsilon \overline{F}_{U \mid X}^{\mathcal{T}}(u \mid 0) \right)
\end{multline}
for all $u \in \supp(U)$. Finally, the entire theorem continues to hold if we replace $\mathcal{T}$ with $\mathcal{U}$.
\end{proposition}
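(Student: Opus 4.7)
The plan is to prove the proposition in two steps: first a necessity argument bounding $F_{U \mid X}(\cdot \mid x)$ for any joint distribution satisfying assumptions 1--4, and then a sharpness argument verifying that the $\varepsilon$-mixture in \eqref{eq:joint epsilon cdf} defines a valid joint distribution consistent with those assumptions for each $\varepsilon \in [0,1]$. The $\mathcal{U}$-independence case will be handled in parallel, with only the value-pinning on $[a,b]$ replaced by a density-pinning.

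For necessity, three inputs produce the bounds. First, $F_{U \mid X}(\cdot \mid x)$ is a cdf on $\supp(U)$, so it is non-decreasing and takes values in $[0,1]$. Second, $\mathcal{T}$-independence with $\mathcal{T}=[a,b]$ fixes $F_{U \mid X}(\tau \mid x) = F_U(\tau)$ for every $\tau \in [a,b]$; combined with monotonicity this further forces $F_{U \mid X}(u \mid x) \leq F_U(a)$ on $(-\infty,a)$ and $F_{U \mid X}(u \mid x) \geq F_U(b)$ on $(b,\infty)$. Third, the law of total probability $p_1 F_{U \mid X}(u \mid 1) + p_0 F_{U \mid X}(u \mid 0) = F_U(u)$ converts any bound on one conditional cdf into a bound on the other. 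Combining these at each $u$, taking the tightest available constraint, produces the closed-form expressions $\underline{F}_{U \mid X}^\mathcal{T}$ and $\overline{F}_{U \mid X}^\mathcal{T}$ stated in appendix \ref{appendix:cdfbounds}. For $\mathcal{U}$-independence, Bayes' rule gives $f_{U \mid X}(u \mid x) = f_U(u)$ a.e.\ on $[a,b]$, so $F_{U \mid X}(u \mid x)$ is forced to equal $F_U$ plus an unknown additive shift on $[a,b]$; the same monotonicity and total-probability logic then yields the $\mathcal{U}$-bounds, with looser behavior inside $[a,b]$ because the cdf is no longer pinned pointwise there.

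For sharpness, I would fix $\varepsilon \in [0,1]$, define the two candidate conditional cdfs by the right-hand side of \eqref{eq:joint epsilon cdf}, and check three things. (a) Each candidate is a valid cdf, because $\underline{F}^\mathcal{T}$ and $\overline{F}^\mathcal{T}$ are themselves valid cdfs by construction and convex combinations preserve non-decreasingness, right-continuity, and the $0/1$ limits. (b) The pair together with $p_x$ reproduces the marginal $F_U$, which reduces to the two paired identities $p_1 \underline{F}_{U \mid X}^\mathcal{T}(u \mid 1) + p_0 \overline{F}_{U \mid X}^\mathcal{T}(u \mid 0) = F_U(u)$ and the swap $p_1 \overline{F}_{U \mid X}^\mathcal{T}(u \mid 1) + p_0 \underline{F}_{U \mid X}^\mathcal{T}(u \mid 0) = F_U(u)$; these hold because the upper bound on one conditional cdf was derived precisely by saturating the total probability constraint against the lower bound on the other. (c) $\mathcal{T}$-independence holds, which follows because $\underline{F}^\mathcal{T}(u \mid x) = \overline{F}^\mathcal{T}(u \mid x) = F_U(u)$ for $u \in [a,b]$, making any convex combination equal to $F_U(u)$ there; in the $\mathcal{U}$-case the analogous density identity inside $[a,b]$ plays the same role. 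As $\varepsilon$ ranges over $[0,1]$, the conditional cdfs trace out the entire rectangle of permissible pairs, so the identified set $\mathcal{F}_{U \mid x}^\mathcal{T}$ is exactly attained.

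The main obstacle is the piecewise nature of $\underline{F}^\mathcal{T}$ and $\overline{F}^\mathcal{T}$: each is defined separately on $(-\infty,a)$, $[a,b]$, and $(b,\infty)$, and on the two outer regions the active branch depends on whether the monotonicity constraint or the probability-axiom constraint is binding at $u$. Verifying the paired total-probability identities in step (b) therefore requires a case analysis through these branches. Each individual check is elementary algebra and the identities are in fact forced by the way the bounds were constructed, but the bookkeeping must be carried out carefully. The $\mathcal{U}$-case follows the same template, with modifications only to the pinning condition and the corresponding bounds inside $[a,b]$.
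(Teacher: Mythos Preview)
Your proposal is correct and follows essentially the same architecture as the paper: derive pointwise upper and lower envelopes for $F_{U\mid X}(\cdot\mid x)$ from the cdf axioms plus the $\mathcal{T}$-pinning on $[a,b]$, then verify sharpness by checking that the bound functions themselves (and hence their $\varepsilon$-mixtures) define valid conditional cdfs that reproduce $F_U$ via total probability.

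The one presentational difference is how the ``slope'' constraint outside $[a,b]$ is obtained. The paper writes $F_{U\mid X}(u\mid x)=\int_{-\infty}^u \Prob(X=x\mid U=v)\,dF_U(v)/p_x$ and bounds the integrand by $[0,1]$, yielding $F_U(u)/p_x$ and $1+(F_U(u)-1)/p_x$ as the extreme envelopes; you instead use the total-probability identity $p_x F_{U\mid X}(u\mid x)+p_{1-x} F_{U\mid X}(u\mid 1-x)=F_U(u)$ together with $F_{U\mid X}(u\mid 1-x)\in[0,1]$ (or $\geq F_U(b)$, $\leq F_U(a)$ near the endpoints). These two routes are algebraically equivalent and produce identical bounds, so this is a matter of packaging rather than a different idea. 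Your phrasing has the small advantage of making the paired identities in step (b) transparent by construction; the paper's propensity-score phrasing has the advantage of making the $\mathcal{U}$-sharpness construction more explicit, since there the paper builds the attaining distributions directly as piecewise-constant $\Prob(X=x\mid U=u)$ rather than as conditional cdfs.
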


Consider the $\mathcal{T}$-independence case. Then proposition \ref{prop:TauCDFbounds} has two conclusions. First, we show that the functions $\overline{F}_{U \mid X}^\mathcal{T}(\cdot \mid x)$ and $\underline{F}_{U \mid X}^\mathcal{T}(\cdot \mid x)$ bound the unknown conditional cdf $F_{U \mid X}(\cdot \mid x)$ uniformly in their arguments. Second, we show that these bounds are functionally sharp in the sense that the joint identified set for the two conditional cdfs $(F_{U \mid X}(\cdot \mid 1), F_{U \mid X}(\cdot \mid 0))$ contains linear combinations of the bound functions $\overline{F}_{U \mid X}^\mathcal{T}(\cdot \mid x)$ and $\underline{F}_{U \mid X}^\mathcal{T}(\cdot \mid x)$. We use this second conclusion to prove sharpness of our treatment effect parameters in section \ref{sec:treatmentEffectBounds}. Identical conclusions hold in the $\mathcal{U}$-independence case.

For simplicity we have only stated this result when $\mathcal{T}$ and $\mathcal{U}$ are closed intervals $[a,b]$. It can be generalized, however. For example, for $\mathcal{T}$-independence, theorem 2 of \cite{MastenPoirier2016} provides cdf bounds when $\mathcal{T}$ is a finite union of closed intervals.

\begin{figure}[t]
\centering
\includegraphics[width=50mm]{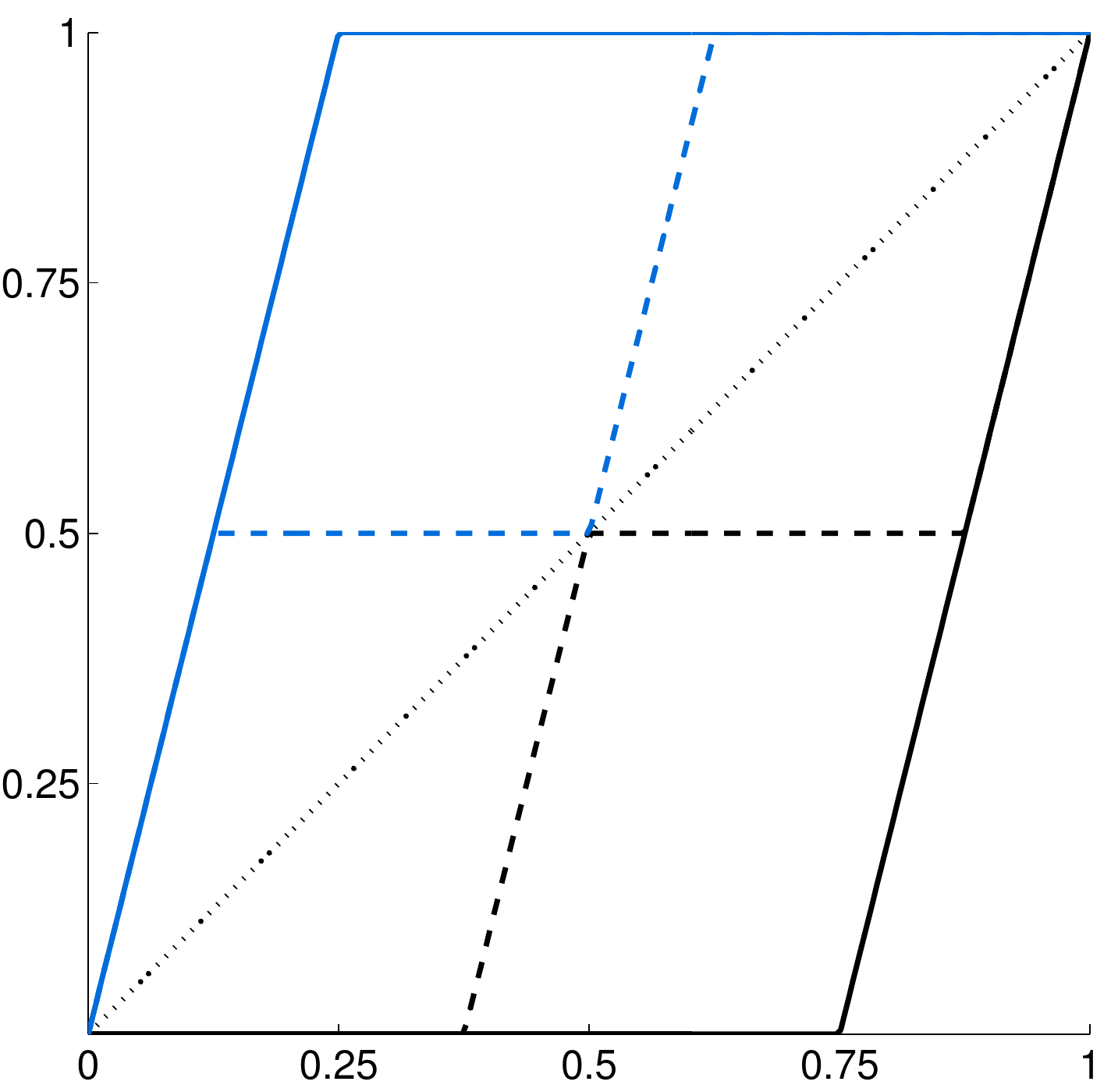}
\hspace{3mm}
\includegraphics[width=50mm]{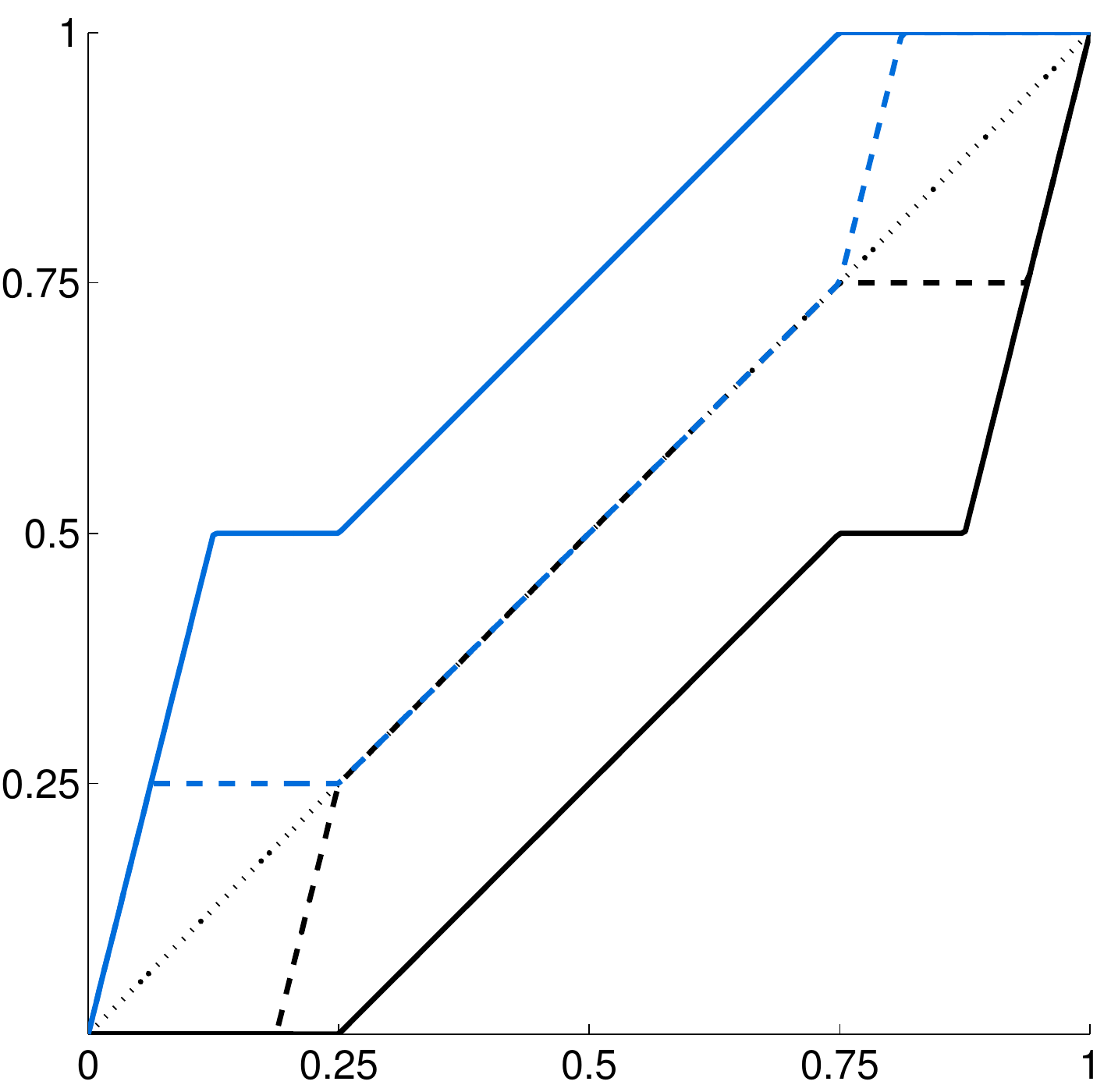}
\hspace{3mm}
\includegraphics[width=50mm]{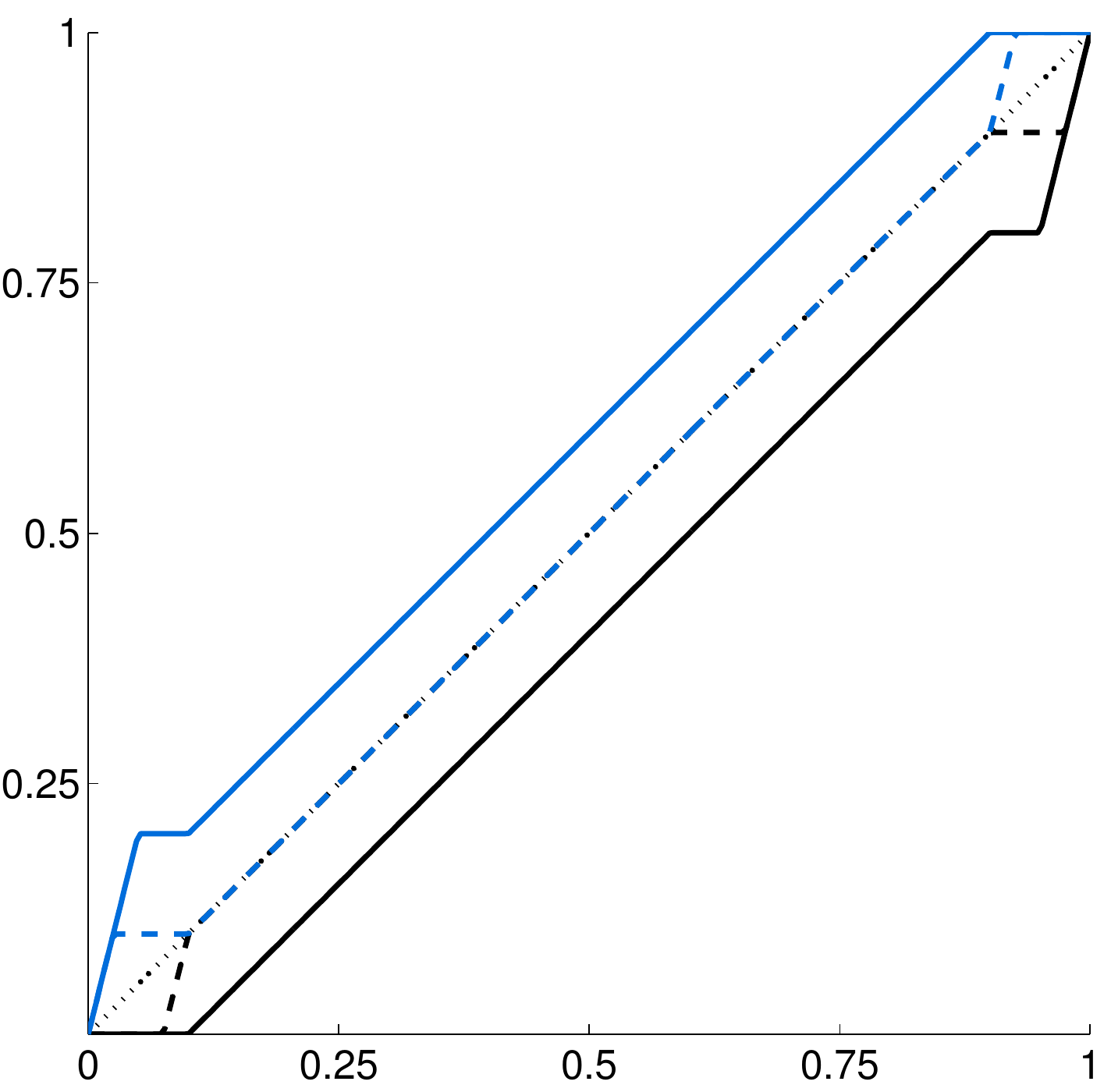}
\caption{Example upper and lower bounds on $F_{U \mid X}(u \mid 0)$, when $p_1 = 0.75$ and $U \sim \text{Unif}[0,1]$. Solid: $\mathcal{U}$-independence. Dashed: $\mathcal{T}$-independence. All three plots have $\mathcal{U} = \mathcal{T}$, for three different choices: $\{ 0.5 \}$ on the left, $[0.25, 0.75]$ in the middle, and $[0.1,0.9]$ on the right. The diagonal, representing the choice $[0,1]$---the case of full independence---is plotted as a dotted line.}
\label{TauDepBoundsOnCDFs}
\end{figure}

\section{Definitions of the bound functions}\label{appendix:cdfbounds}

In this appendix we provide the precise functional forms for the cdf bounds of proposition \ref{prop:TauCDFbounds}, the quantile bounds of proposition \ref{prop:quantileTrtBounds}, and the conditional mean bounds of corollary \ref{corr:ATT_Y0bounds}.

\subsection*{The cdf bounds}

The $\mathcal{T}$-independence bounds are as follows:
\[
	\overline{F}_{U \mid X}^\mathcal{T}(u \mid x)
		=
		\begin{cases}
			\dfrac{F_U(u)}{p_x}
				&\text{if $u \leq  Q_U(p_x F_U(a))$} \\
			F_U(a)
				&\text{if $ Q_U(p_x F_U(a)) \leq u \leq a$} \\
			F_U(u)
				& \text{if $a \leq u \leq b$} \\
			\dfrac{F_U(u) - F_U(b)}{p_x} + F_U(b)
				&\text{if $b \leq u \leq Q_{U}(p_x + F_U(b)(1-p_x))$} \\
			1
				&\text{if $Q_{U}(p_x + F_U(b)(1-p_x))  \leq u$}
		\end{cases}
\]
and
\[
	\underline{F}_{U \mid X}^\mathcal{T}(u \mid x)
		=
		\begin{cases}
			0
				&\text{if $ u \leq  Q_U((1-p_x) F_U(a))$} \\
			\dfrac{F_U(u)-F_U(a)}{p_x} + F_U(a)
				&\text{if $ Q_U((1-p_x) F_U(a)) \leq u \leq a$}\\
			F_U(u)
				& \text{if $ a \leq u \leq b$}\\
			F_U(b)
				&\text{if $b \leq u \leq Q_U(p_x F_U(b) + (1-p_x)) $} \\
			\dfrac{F_U(u)-1}{p_x} + 1
				&\text{if $Q_U(p_x F_U(b) + (1-p_x))  \leq u$}.
		\end{cases}
\]
For $\mathcal{U}$-independence, first consider the lower bound. There are two separate cases. First, if $(1-(F_U(b)-F_U(a)))(1-p_x) \leq F_U(a)$,
\begin{multline*}
	\underline{F}_{U \mid X}^\mathcal{U}(u \mid x)
	= \\
	\begin{cases}
		0 
			&\text{ for $u \leq Q_U((1-(F_U(b)-F_U(a)))(1-p_x))$} \\
		\dfrac{F_U(u) - (1-(F_U(b)-F_U(a)))(1-p_x)}{p_x}
			&\text{ for $u\in[Q_U((1-(F_U(b)-F_U(a)))(1-p_x)),a]$} \\
		\dfrac{(F_U(b)-1)(1-p_x) }{p_x} + F_U(u)
			&\text{ for $u \in [a,b]$} \\
		\dfrac{F_U(u)-1}{p_x} +1
			&\text{ for $u \geq b$}.
	\end{cases}
\end{multline*}
Second, if $(1-(F_U(b)-F_U(a)))(1-p_x) \geq F_U(a)$,
\[
	\underline{F}_{U \mid X}^\mathcal{U}(u \mid x) =
	\begin{cases}
		0
			&\text{ for $u\leq a$} \\
		F_U(u)-F_U(a)
			&\text{ for $u\in[a,b]$} \\
		F_U(b)-F_U(a)
			&\text{ for $u\in[b,Q_U(p_x(F_U(b)-F_U(a)) + 1-p_x)]$} \\
		\dfrac{F_U(u)-1}{p_x} +1
			&\text{ for $u \geq Q_U(p_x(F_U(b)-F_U(a)) + 1-p_x)$.}
	\end{cases}
\]
Next consider the upper bound. Again, there are two separate cases. First, if $(1-(F_U(b)-F_U(a)))p_x \leq F_U(a)$,
\[
	\overline{F}_{U \mid X}^\mathcal{U}(u \mid x) =
	\begin{cases}
		\dfrac{F_U(u)}{p_x}
			&\text{ for $u \leq Q_U((1-(F_U(b) - F_U(a)))p_x)$} \\
		1-(F_U(b)-F_U(a))
			&\text{ for $u \in [Q_U((1-(F_U(b) - F_U(a)))p_x),a]$} \\
		1-(F_U(b)-F_U(u))
			&\text{ for $u \in [a,b]$} \\
		1
			&\text{ for $u \geq b$.}
	\end{cases}
\]
Second, if $(1-(F_U(b)-F_U(a)))p_x \geq F_U(a)$,
\begin{multline*}
	\overline{F}_{U \mid X}^\mathcal{U}(u \mid x) = \\
	\begin{cases}
		\dfrac{F_U(u)}{p_x}
			&\text{ for $u \leq a$} \\
		\dfrac{F_U(a)}{p_x} + F_U(u)-F_U(a)
			&\text{ for $u \in [a,b]$} \\
		\dfrac{(F_U(b)-F_U(a))(p_x-1) +F_U(u)}{p_x}
			&\text{ for $u \in [b,Q_U((F_U(b)-F_U(a))(1-p_x)+ p_x)]$} \\
		1
			&\text{ for $u \geq Q_U((F_U(b)-F_U(a))(1-p_x)+ p_x)$.}
	\end{cases}
\end{multline*}

\subsection*{The quantile bounds}

The $\mathcal{T}$-independence bounds are defined by
\[
	\overline{Q}_{Y_0 \mid X}^\mathcal{T}(\tau \mid 1)
	=
	\begin{cases}
		Q_{Y \mid X}(a \mid 0)
			&\text{ for $\tau \in (0,a]$} \\
		Q_{Y \mid X}(\tau \mid 0)
			&\text{ for $\tau\in(a,b]$} \\
		Q_{Y \mid X}(1 \mid 0)
			&\text{ for $\tau\in(b,1)$},
	\end{cases}
	\qquad
	\underline{Q}_{Y_0 \mid X}^\mathcal{T}(\tau \mid 1)
	=
	\begin{cases}
		Q_{Y \mid X}(0 \mid 0)
			&\text{ for $\tau\in(0,a]$} \\
		Q_{Y \mid X}(\tau \mid 0)
			&\text{ for $\tau\in(a,b]$} \\
		Q_{Y \mid X}(b \mid 0)
			&\text{ for $\tau \in(b,1)$}.
	\end{cases}
\]
For $\mathcal{U}$-independence, there are two cases. First consider the lower bound. If $(1-(b-a))p_0 \leq a$,
\[
	\underline{Q}_{Y_0 \mid X}^\mathcal{U}(\tau \mid 1)
	=
	\begin{cases}
		Q_{Y \mid X}(0 \mid 0)
			&\text{ for $\tau \in (0,1-(b-a)]$} \\
		Q_{Y \mid X}\left( \tau + \dfrac{b-1}{p_0} \mid 0 \right)
			&\text{ for $\tau \in (1-(b-a), 1)$}.
	\end{cases}
\]
If $(1-(b-a))p_0 \geq a$,
\[
	\underline{Q}_{Y_0 \mid X}^\mathcal{U}(\tau \mid 1)
	=
	\begin{cases}
		Q_{Y \mid X}(0 \mid 0)
			&\text{ for $\tau \in \left(0,\dfrac{a}{p_1} \right]$} \\
		Q_{Y \mid X} \left(\tau - \dfrac{a}{p_1} \mid 0 \right)
			&\text{ for $\tau \in \left( \dfrac{a}{p_1},\dfrac{a}{p_1} + b-a \right]$} \\
		Q_{Y \mid X}(b-a \mid 0)
			&\text{ for $\tau \in \left( \dfrac{a}{p_1} + b-a,1 \right)$}.
	\end{cases}
\]
Next consider the upper bound. If $(1-(b-a))p_1 \leq a$,
\[
	\overline{Q}_{Y_0 \mid X}^\mathcal{U}(\tau \mid 1)
	= \begin{cases}
		 Q_{Y \mid X}(1-(b-a) \mid 0)
		 &\text{for $\tau \in \left( 0,1-(b-a) - \dfrac{1-b}{p_1} \right]$} \\
		 Q_{Y \mid X} \left( \tau +\dfrac{1-b}{p_1} \mid 0 \right)
		 &\text{for $\tau \in \left( 1-(b-a) - \dfrac{1-b}{p_1}, 1 - \dfrac{1-b}{p_1} \right]$} \\
		 Q_{Y \mid X}(1 \mid 0)
		 &\text{for $\tau \in \left( 1 - \dfrac{1-b}{p_1}, 1 \right)$.}
	\end{cases}
\]
If $(1-(b-a))p_1 \geq a$,
\[
	\overline{Q}_{Y_0 \mid X}^\mathcal{U}(\tau \mid 1)
	=
	\begin{cases}
	Q_{Y \mid X} \left(\tau + \dfrac{a}{p_0} \mid 0 \right)
		&\text{ for $\tau \in (0, b-a]$} \\
	Q_{Y \mid X}(1 \mid 0)
		&\text{ for $\tau \in (b-a,1)$}.
	\end{cases}
\]

\subsection*{The conditional mean bounds}

By integrating the quantile bounds as in the statement of corollary \ref{corr:ATT_Y0bounds}, we obtain the bounds on $\Exp(Y_0 \mid X=1)$. We provide the explicit form of these bounds but omit the derivations for brevity. For $\mathcal{T}$-independence,
\[
	\overline{\Exp}^\mathcal{T}(Y_0 \mid X=1) =
	a Q_{Y \mid X}(a \mid 0) + \int_a^b Q_{Y \mid X}(\tau \mid 0) \; d\tau + (1-b)Q_{Y \mid X}(1 \mid 0)
\]
and
\[
	\underline{\Exp}^\mathcal{T}(Y_0 \mid X=1) =
	a Q_{Y \mid X}(0 \mid 0) + \int_a^b Q_{Y \mid X}(\tau \mid 0) \; d\tau + (1-b)Q_{Y \mid X}(b \mid 0).
\]
For $\mathcal{U}$-independence, first consider the lower bound. There are two cases. If $(1-(b-a))p_0 \leq a$,
\[
	\underline{\Exp}^\mathcal{U}(Y_0 \mid X=1)
	= (1-(b-a))Q_{Y \mid X}(0 \mid 0)
	+ \int_{1-(b-a)	+ \frac{b-1}{p_0}}^{\frac{p_0 + b-1}{p_0}} Q_{Y \mid X}(\tau \mid 0) \; d\tau.
\]
If $(1-(b-a))p_0 \geq a$,
\[
	\underline{\Exp}^\mathcal{U}(Y_0 \mid X=1) = \frac{a}{p_1} Q_{Y \mid X}(0 \mid 0)
	+ \int_0^{b-a}Q_{Y \mid X}(\tau \mid 0) \; d\tau
	+ \left(1-(b-a) - \frac{a}{p_1} \right) Q_{Y \mid X}(b-a \mid 0).
\]
Next consider the upper bound. If $(1-(b-a))p_1 \leq a$,
\begin{multline*}
	\overline{\Exp}^\mathcal{U}(Y_0 \mid X=1)
	= \\
	Q_{Y \mid X}(1-(b-a) \mid 0) \left( 1-(b-a) - \frac{1-b}{p_1} \right) + \int_{1-(b-a)}^1Q_{Y \mid X}(\tau\mid 0) \; d\tau + Q_{Y \mid X}(1 \mid 0)\frac{1-b}{p_1}.
\end{multline*}
If $(1-(b-a))p_1 \geq a$,
\[
	\overline{\Exp}^\mathcal{U}(Y_0 \mid X=1)
	= \int_{\frac{a}{p_0}}^{b-a + \frac{a}{p_0}}Q_{Y \mid X}(\tau\mid 0) \; d\tau + (1-(b-a))Q_{Y \mid X}(1 \mid 0).
\]

\section{Proofs}\label{sec:proofs}

\subsection*{Proofs for section \ref{sec:characterizationSection}}

\begin{proof}[Proof of theorem \ref{thm:AvgValueCharacterization}]
This result follows immediately from our more general result for discrete $X$, theorem \ref{thm:AvgValueCharacterization_discreteX}.
\end{proof}

\begin{proof}[Proof of corollary \ref{corr:monotonicPropensityScores}]
Without loss of generality, suppose $p$ is weakly increasing. Then for any $\tau$, the average value of the propensity score to the left of $\tau$ is weakly smaller than the average value to the right:
\[
	\frac{1}{\tau} \int_0^\tau p(u) \; du \leq \frac{1}{1-\tau} \int_\tau^1 p(u) \; du.
\]
Moreover, this inequality must actually be strict for all $\tau \in (0,1)$. To see this, suppose there exists a $\tau^*\in(0,1)$ such that
\[
	\frac{1}{\tau^*} \int_0^{\tau^*} p(u) \; du 
	= \frac{1}{1-\tau^*} \int_{\tau^*}^1 p(u) \; du.
\]
This equality is equivalent to $f(\tau^*) = 0$ where we defined
\[
	f(\tau)
	=
	\int_0^{\tau} p(u)\;du - \tau \int_0^1 p(u) \; du.
\]
$f$ is differentiable with derivative
\[
	f'(\tau) = p(\tau) -\int_0^1 p(u) \; du.
\]
Moreover, $f(0) = 0$ and $f(1) = 0$. Since $p(\tau)$ is not constant on $(0,1)$, there exists a $\tau_1\in(0,1)$ small enough such that 
\[
	p(\tau_1) < \int_0^1 p(u) \; du
\]
and a $\tau_2 \in (\tau_1,1)$  large enough such that 
\[
	p(\tau_2) >  \int_0^1 p(u) \; du.
\]
Hence $f'(\tau_1) < 0$ and $f'(\tau_2) > 0$. Moreover, $f'$ is nondecreasing since $p$ is nondecreasing. Therefore $f(\tau) > 0$ for all $\tau \in(0,1)$. Hence such a $\tau^*$ cannot exist.
\end{proof}

\begin{proof}[Proof of corollary \ref{corr:SignChanges}]
For each interval $\mathcal{U}_k$, we just repeat the argument of corollary \ref{corr:monotonicPropensityScores}, conditional on $U \in \mathcal{U}_k$, noting that a nontrivial $\tau$-cdf independence condition will still hold conditional on $U \in \mathcal{U}_k$.
\end{proof}

\begin{proof}[Proof of corollary \ref{corr:TauIndep_ExtremeValues}]
Let $[a,b] \subseteq (0,1) \setminus \mathcal{T}$ with $a < b$. Consider the propensity score
\[
	p(u)
	=
	\begin{cases}
		1 &\text{if $u \in \left[a,a + \Prob(X=1) (b - a)\right)$} \\
		0 &\text{if $u \in [a + \Prob(X=1) (b - a),b]$} \\
        		\Prob(X=1) &\text{if $u \notin [a,b]$}.
	\end{cases}
\]
By definition, $p$ attains the values 0 and 1 over intervals which have positive Lebesgue measure. Next we show that $\mathcal{T}$ independence holds. Let $t_1$ and $t_2$ be any two values in $\mathcal{T}$ such that $t_1 < t_2$. Then
\[
	\frac{1}{t_2-t_1}\int_{t_1}^{t_2} p(u) \; du = \Prob(X=1)
\]
if $t_1 < t_2 < a$ or $b < t_1 < t_2$. This condition also holds if $t_1<a<b<t_2$ since
\begin{align*}
	\frac{1}{t_2-t_1}\int_{t_1}^{t_2} p(u) \; du
		&= \frac{1}{t_2-t_1} \Big( \Prob(X=1) (a-t_1) + 1 \cdot \big( \Prob(X=1)(b - a) + a - a \big) + \Prob(X=1)(t_2 - b) \Big) \\
		&= \Prob(X=1).
\end{align*}
Thus $\mathcal{T}$-independence holds by theorem \ref{thm:AvgValueCharacterization}.
\end{proof}

\begin{proof}[Proof of proposition \ref{prop:meanIndepConstraint}]
By mean independence,
\[
	\int_{-\infty}^\infty u \; f_{U \mid X}(u \mid 1) \; du = \Exp(U).
\]
By Bayes' rule,
\[
	f_{U \mid X}(u \mid 1) = \frac{p(u) f_U(u)}{\Prob(X=1)}.
\]
Substitute this expression into the integral and rearrange to obtain the result.
\end{proof}

The following lemma provides a useful alternative characterization of the constraint on the latent propensity score in proposition \ref{prop:meanIndepConstraint}.

\begin{lemma}\label{lemma:moment mean-indep}
Suppose $U$ is continuously distributed with finite mean. Suppose $X$ is binary. Then $U$ is mean independent of $X$ if and only if $\cov(p(U),U) = 0$.
\end{lemma}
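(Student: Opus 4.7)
The plan is to compute $\cov(p(U),U)$ directly in terms of $\Exp(U \mid X=1)$ and $\Exp(U)$, then observe that for binary $X$ the vanishing of this single conditional mean difference is equivalent to full mean independence. This reduces the lemma to an application of the tower property.

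First I would expand $\cov(p(U),U) = \Exp[p(U)U] - \Exp[p(U)]\Exp[U]$. By iterated expectations, $\Exp[p(U)] = \Exp[\Exp(X \mid U)] = \Prob(X=1)$. Likewise, since $U$ is $\sigma(U)$-measurable, $\Exp[p(U) U] = \Exp[\Exp(X \mid U) \cdot U] = \Exp[\Exp(XU \mid U)] = \Exp[XU]$. Because $X$ is binary, $\Exp[XU] = \Prob(X=1)\Exp(U \mid X=1)$. Substituting gives
\[
	\cov(p(U),U) = \Prob(X=1)\bigl(\Exp(U \mid X=1) - \Exp(U)\bigr).
\]

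Next I would verify that mean independence of $U$ from $X$ is equivalent to $\Exp(U \mid X=1) = \Exp(U)$ when $X$ is binary. The forward direction is immediate from the definition. For the converse, the law of total expectation yields $\Exp(U) = \Prob(X=1)\Exp(U \mid X=1) + \Prob(X=0)\Exp(U \mid X=0)$, and substituting $\Exp(U \mid X=1) = \Exp(U)$ forces $\Exp(U \mid X=0) = \Exp(U)$ whenever $\Prob(X=0) > 0$. Combining this with the display above yields $\cov(p(U),U) = 0$ if and only if $U$ is mean independent of $X$, provided $\Prob(X=1) \in (0,1)$; the degenerate cases $\Prob(X=1) \in \{0,1\}$ make both sides of the equivalence trivially hold.

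There is no real obstacle here. The proof is essentially a one-line computation once one recognizes that $\Exp[p(U) U] = \Exp[XU]$ via the tower property, so the only thing worth being careful about is the routine handling of the binary structure in passing between a single conditional mean equality and the full mean independence condition.
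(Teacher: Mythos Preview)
Your proposal is correct and follows essentially the same approach as the paper: both use the tower property to identify $\Exp[p(U)U]=\Exp[XU]=\Prob(X=1)\Exp(U\mid X=1)$, then exploit the binary structure of $X$ to reduce mean independence to the single equality $\Exp(U\mid X=1)=\Exp(U)$, with the degenerate cases handled separately. Your version is slightly more streamlined in that you package the computation into the single identity $\cov(p(U),U)=\Prob(X=1)\bigl(\Exp(U\mid X=1)-\Exp(U)\bigr)$, but the ingredients are identical.
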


\begin{proof}[Proof of lemma \ref{lemma:moment mean-indep}]
Since $X$ is binary,
\[
	\Exp[U X] = \Exp(U \mid X=1)\Prob(X=1).
\]
By the law of iterated expectations,
\[
	\Exp[U p(U)] - \Exp(U \mid X=1) \Exp[p(U)] = 0.
\]
If mean independence holds, then the left hand side is precisely $\cov(p(U),U)$. Conversely, if $\cov(p(U),U) = 0$ then
\[
	\Exp[U p(U)] = \Exp(U) \Exp(X).
\]
Since the left hand side equals $\Exp(UX)$ and since $\Exp(X) = \Prob(X=1)$,
\[
	\Exp(U \mid X=1) \Exp(X) = \Exp(U) \Exp(X).
\]
Dividing by $\Exp(X)$ shows that mean independence holds (If $\Exp(X) = 0$ then $X$ is degenerate on zero and mean independence holds trivially).
\end{proof}

\begin{proof}[Proof of corollary \ref{corr:mean-indep prop scores}]
We have
\begin{align*}
	\cov(U,p(U))
		&= \Exp[(U-\Exp(U)) p(U)] \\
		&= \Exp[(U-\Exp(U))(p(U) - p(\Exp(U)))].
\end{align*}
Without loss of generality, suppose $p(u)$ is non-decreasing on $\supp(U)$. Therefore
\[
	(U-\Exp(U))(p(U) - p(\Exp(U))) \geq 0
\]
holds with probability one. Moreover, equality holds with probability equal to $\Prob[p(U) = p(\Exp(U))]$. Since $p$ is non-constant and non-decreasing, the probability that $p(U)$ is equal to a constant is strictly less than one. Hence $\cov(U,p(U)) > 0$. Therefore, by lemma \ref{lemma:moment mean-indep}, $U$ cannot be mean independent of $X$.
\end{proof}

\subsection*{Proofs for section \ref{sec:ctsX}}

\begin{proof}[Proof of theorem \ref{thm:AvgValueCharacterization_ctsX}]
If $x$ is such that $\Prob(X > x) \in \{ 0, 1 \}$ then equation \ref{eq:averageValueCondition_main_ctsX} holds trivially. So suppose $x$ is such that $\Prob(X > x) \in (0,1)$. Define $\widetilde{X} = \indicator(X > x)$. Note that $\Prob(\widetilde{X}=1) \in (0,1)$ and $\Prob(\widetilde{X}=0) \in (0,1)$. The result now follows from applying theorem \ref{thm:AvgValueCharacterization_discreteX} with $\widetilde{X}$.
\end{proof}

\begin{proof}[Proof of corollary \ref{corr:noRegDependence}]
Follows by defining $\widetilde{X} = \indicator(X > x)$ and applying corollary \ref{corr:monotonicPropensityScores}.
\end{proof}

The following lemma provides an alternative way of writing the average value constraint \eqref{eq:averageValueCondition_main}. We use this in the proof of theorem \ref{thm:AvgValueCharacterization_discreteX}.

\begin{lemma}\label{lemma:subgroupPropensityScore}
Suppose $X$ is discrete with support $\{x_1,\ldots,x_K\}$ and $\Prob(X=x_k) \in(0,1)$ for $k \in \{1,\ldots,K \}$. Suppose $U$ is continuously distributed; normalize $U \sim \text{Unif}[0,1]$. For $t_1, t_2 \in [0,1]$ with $t_1< t_2$ we have
\[
	\Prob(X = x_k \mid U \in [t_1,t_2]) = \frac{1}{t_2 - t_1} \int_{[t_1,t_2]} \Prob(X=x_k \mid U=u) \; du.
\]
\end{lemma}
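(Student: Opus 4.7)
The plan is to unpack the left-hand side via the definition of conditional probability and then invoke the law of total probability (equivalently, disintegration against the density of $U$). Since the claim is essentially a statement that $\Prob(X=x_k \mid U=u)$ serves as a regular conditional probability and that conditioning on $\{U \in [t_1,t_2]\}$ averages this quantity against the restriction of $f_U$ to $[t_1,t_2]$, the argument should be short.

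First I would write
\[
\Prob(X = x_k \mid U \in [t_1,t_2]) = \frac{\Prob(X = x_k,\, U \in [t_1,t_2])}{\Prob(U \in [t_1,t_2])},
\]
which is well-defined because $U \sim \text{Unif}[0,1]$ and $t_1 < t_2$ in $[0,1]$ give $\Prob(U \in [t_1,t_2]) = t_2 - t_1 > 0$. Next I would rewrite the numerator using the law of total probability applied to the regular conditional distribution of $X$ given $U$:
\[
\Prob(X = x_k,\, U \in [t_1,t_2]) = \int_{[t_1,t_2]} \Prob(X = x_k \mid U = u) \, f_U(u)\, du.
\]
Since $f_U(u) = 1$ for $u \in [0,1]$ and $[t_1,t_2] \subseteq [0,1]$, the density drops out and the numerator reduces to $\int_{[t_1,t_2]} \Prob(X = x_k \mid U = u)\, du$. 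Dividing by $t_2 - t_1$ yields the claim.

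The only subtle step is the second one, where one must justify treating $u \mapsto \Prob(X = x_k \mid U = u)$ as a genuine measurable function whose integral against $f_U$ recovers the joint probability. This is immediate from the existence of regular conditional probabilities for $X$ given $U$ (which holds since $X$ takes finitely many values, so $\Prob(X = x_k \mid U = \cdot)$ can be taken to be a Borel-measurable version of the conditional expectation $\Exp[\indicator(X=x_k) \mid U]$), together with the defining property $\Exp[\indicator(X=x_k)\indicator(U \in A)] = \Exp[\indicator(U \in A) \Prob(X=x_k \mid U)]$ applied to $A = [t_1,t_2]$. I do not anticipate any real obstacle beyond recording this measure-theoretic justification cleanly.
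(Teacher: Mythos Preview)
Your proposal is correct and follows essentially the same route as the paper: both start from the ratio definition of $\Prob(X=x_k \mid U\in[t_1,t_2])$, use $U\sim\text{Unif}[0,1]$ to replace the denominator by $t_2-t_1$, and express the numerator as an integral of $\Prob(X=x_k\mid U=u)$ against $f_U\equiv 1$. The only cosmetic difference is that the paper passes through the joint density $f_{X,U}(x_k,u)$ before dividing by $f_U(u)=1$, whereas you invoke the defining property of conditional expectation directly; your added remark on the existence and measurability of a regular conditional probability is a welcome clarification the paper omits.
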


\begin{proof}[Proof of lemma \ref{lemma:subgroupPropensityScore}]
We have
\begin{align*}
	\Prob(X = x_k \mid U \in [t_1,t_2])
		&= \frac{\Prob(X=x_k,t_1 \leq U \leq t_2)}{\Prob(t_1 \leq U \leq t_2)} \\
		&= \frac{1}{t_2 - t_1} \Prob(X=x_k, t_1 \leq U \leq t_2) \\
		&= \frac{1}{t_2 - t_1} \int_{t_1}^{t_2} f_{X,U}(x_k,u) \; du \\
		&= \frac{1}{t_2 - t_1} \int_{t_1}^{t_2} \frac{f_{X,U}(x_k,u)}{f_U(u)} \; du \\
		&= \frac{1}{t_2 - t_1} \int_{t_1}^{t_2} \Prob(X=x_k \mid U=u) \; du.
\end{align*}
The fourth line follows since $f_U(u) = 1$ by $U \sim \text{Unif}[0,1]$.
\end{proof}

\begin{lemma}\label{lemma:continuity}
Suppose $U$ is continuously distributed. Suppose $X$ is discrete with support $\{ x_1,\ldots,x_K \}$ and probability masses $p_k = \Prob(X=x_k) \in (0,1)$ for $k \in \{ 1,\ldots, K \}$. Then $F_{U \mid X}(\cdot \mid x_k)$ is a continuous function for all $k \in \{ 1,\ldots,K \}$.
\end{lemma}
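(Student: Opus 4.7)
The plan is to reduce the question to the continuity of the joint probability $G(u) := \Prob(U \leq u,\, X = x_k)$, and then use the fact that $U$ being continuously distributed forces all jumps of $G$ to vanish.

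First I would write
\[
F_{U \mid X}(u \mid x_k) = \frac{\Prob(U \leq u,\, X = x_k)}{\Prob(X = x_k)} = \frac{G(u)}{p_k}.
\]
Since $p_k \in (0,1)$ is a positive constant, continuity of the conditional cdf in $u$ is equivalent to continuity of $G(u)$ in $u$.

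Next I would establish continuity of $G$. The function $G$ is non-decreasing and bounded in $[0,1]$, hence its only possible discontinuities are jump discontinuities, and it is automatically right-continuous by dominated convergence applied to the monotone sequence $\indicator(U \leq u_0 + 1/n,\, X = x_k)$. For left-continuity at any $u_0$, the jump equals
\[
G(u_0) - \lim_{u \nearrow u_0} G(u) = \Prob(U = u_0,\, X = x_k).
\]
Since $\{U = u_0,\, X = x_k\} \subseteq \{U = u_0\}$, monotonicity of $\Prob$ yields $\Prob(U = u_0,\, X = x_k) \leq \Prob(U = u_0) = 0$, where the last equality uses that $U$ is continuously distributed. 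Hence $G$ has no jumps and is continuous, so $F_{U \mid X}(\cdot \mid x_k)$ is continuous.

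There is no real obstacle here; the only point worth being careful about is that continuity of $U$ means $\Prob(U = u_0) = 0$ for every $u_0$, which is exactly what kills the jump of $G$ through the inclusion $\{U = u_0, X = x_k\} \subseteq \{U = u_0\}$. The same argument works uniformly for every $k \in \{1,\ldots,K\}$.
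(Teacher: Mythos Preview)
Your proof is correct and follows essentially the same approach as the paper's: both identify the jump of $F_{U\mid X}(\cdot\mid x_k)$ at a point $u_0$ with $\Prob(U=u_0\mid X=x_k)$ (equivalently $\Prob(U=u_0,\,X=x_k)/p_k$) and then bound this by $\Prob(U=u_0)=0$. The only cosmetic difference is that the paper phrases this as a contradiction via the law of total probability $\Prob(U=u_0)=\sum_j \Prob(U=u_0\mid X=x_j)p_j\geq \Prob(U=u_0\mid X=x_k)p_k$, while you argue directly via the set inclusion $\{U=u_0,\,X=x_k\}\subseteq\{U=u_0\}$.
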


\begin{proof}[Proof of lemma \ref{lemma:continuity}]
Suppose by way of contradiction that $F_{U \mid X}(\cdot \mid x_k)$ is not continuous at some point $u^*$. Since cdfs are right-continuous, we must have
\[
	\lim_{ u \nearrow u^*} F_{U \mid X}(u \mid x_k) < F_{U \mid X}(u^* \mid x_k).
\]
This implies
\[
	\Prob(U = u^* \mid X = x_k) > 0.
\]
Therefore
\begin{align*}
	0
		&= \Prob(U = u^*)
			&\text{by $U$ continuously distributed} \\
		&= \sum_{j=1}^K \Prob( U = u^* \mid X = x_j) p_j
			&\text{by the law of total probability} \\
		&\geq \Prob( U = u^* \mid X = x_k) p_k \\
		&> 0.
\end{align*}
This is a contradiction.
\end{proof}

\begin{proof}[Proof of theorem \ref{thm:AvgValueCharacterization_discreteX}]
\hfill
\begin{itemize}
\item[($\Rightarrow$)] Suppose $U$ is $\mathcal{T}$-independent of $X$. Let $t_1, t_2 \in\mathcal{T} \cup \{ 0, 1 \}$ with $t_1 < t_2$. Then, for any $x \in \supp(X)$,
\begin{align*}
	\Prob(X=x \mid U \in [t_1,t_2])
		&= \frac{\Prob(X=x, U \in [t_1,t_2]) }{\Prob(U \in [t_1,t_2]) } \\
		&= \frac{\Prob(U \in [t_1,t_2] \mid X=x) \Prob(X=x)}{t_2-t_1} \\
		&= \frac{(\Prob(U \leq t_2 \mid X=x) - \Prob(U < t_1 \mid X=x)) \Prob(X=x)}{t_2-t_1} \\
    		&= \frac{(\Prob(U \leq t_2 \mid X=x) - \Prob(U \leq t_1 \mid X=x)) \Prob(X=x)}{t_2-t_1} \\
    		&= \frac{(t_2 - t_1) \Prob(X=x)}{t_2-t_1} \\
    		&= \Prob(X=x).
\end{align*}
The second line follows since $U \sim \text{Unif}[0,1]$. The fourth line follows since $U \mid X$ is continuously distributed, which itself follows by $X$ being discretely distributed and lemma \ref{lemma:continuity}. The fifth line follows from $\mathcal{T}$-independence.

\item[($\Leftarrow$)] Suppose that for any $x \in \supp(X)$,
\[
	\Prob(X=x \mid U \in [t_1,t_2]) = \Prob(X=x)
\]
for all $t_1, t_2 \in \mathcal{T} \cup \{ 0, 1 \}$ with $t_1 < t_2$. Then,
\begin{align*}
	\Prob(U \in [t_1,t_2] \mid X=x)
		&= \frac{\Prob(X=x \mid U \in[t_1,t_2]) \Prob(U \in [t_1,t_2])}{\Prob(X=x)} \\
		&= \frac{\Prob(X=x) \Prob(U \in [t_1,t_2])}{\Prob(X=x)} \\
    		&= \Prob(U \in [t_1,t_2]).
\end{align*}
The second line follows by assumption. Setting $t_1 = 0$ and using $U \sim \text{Unif}[0,1]$ gives the result.
\end{itemize}
The result now follows by lemma \ref{lemma:subgroupPropensityScore}.
\end{proof}

\begin{proof}[Proof of corollary \ref{corr:discreteXnotMonotone}]
Follows by defining $\widetilde{X} = \indicator(X\geq x_k)$ and applying corollary \ref{corr:monotonicPropensityScores}.
\end{proof}

\subsection*{Proofs for section \ref{sec:LatentSelectionModels}}

\begin{proof}[Proof of proposition \ref{prop:ImbensNeweySelection}]
Define
\[
	M(x,z,v) = \Exp[ m(x,U) \mid V=v ] - c(x,z),
\]
which equals our objective function since $Z \independent (U,V)$. We have
\begin{align*}
	\frac{\partial^2}{\partial x^2} M(x,z,v)
		&= \Exp \left[ \frac{\partial^2}{\partial x^2} m(x,U) \mid V=v \right] - \frac{\partial^2 c(x,z)}{\partial x^2} \\
		&< 0
\end{align*}
for all $z$, $x$, and $v$. The first line follows by the dominated convergence theorem. The second line follows by our assumptions that $\partial^2 c(x,z) / \partial x^2 > 0$ and $\partial^2 m(x,u)/ \partial x^2 < 0$ for all $z$, $x$, and $u$. Thus the function $M(\cdot,z,v)$ is globally strictly concave, for each $z$ and $v$.

Our assumptions on the limits of $m_1(x,u) \equiv \partial m(x,u) / \partial x$ as $x \rightarrow 0$ or $\infty$ combined with our dominance assumption imply that
\[
	\lim_{x \rightarrow \infty} \Exp[m_1(x,U) \mid V=v] \rightarrow 0
	\qquad \text{and} \qquad
	\lim_{x \rightarrow 0} \Exp[m_1(x,U) \mid V=v] \rightarrow \infty.
\]
This result combined with our assumptions on the limits of $\partial c(x,z) / \partial x$ as $x \rightarrow 0$ or $\infty$ imply that
\[
	\lim_{x \rightarrow \infty} \frac{\partial}{\partial x} M(x,z,v) = -\infty
	\qquad \text{and} \qquad
	\lim_{x \rightarrow 0} \frac{\partial}{\partial x} M(x,z,v) = \infty.
\]
Consequently, since $\partial M(x,z,v) / \partial x$ is continuous, the intermediate value theorem implies there exists a solution to the first order condition $\partial M(x,z,v) / \partial x = 0$. Since $M(\cdot,z,v)$ is globally strictly concave, this solution is unique. Let $h(z,v)$ denote this solution.

Next we show that $h(z,v)$ is strictly increasing in $v$. Let
\begin{align*}
	M_1(x,z,v)
		&= \frac{\partial}{\partial x} M(x,z,v) \\
		&= \Exp \left[ \frac{\partial}{\partial x} m(x,U) \mid V=v \right] - \frac{\partial c(x,z)}{\partial x} \\
		&= \Exp [m_1(x,U) \mid V=v] - \frac{\partial c(x,z)}{\partial x}.
\end{align*}
The second line follows by the dominated convergence theorem. Since $U$ and $V$ satisfy the strict MLRP, and since $m_1(x,u)$ is strictly increasing in $u$ for each $x$, $\Exp[m_1(x,U) \mid V=v]$ is strictly increasing in $v$; this follows by a straightforward generalization of theorem 5 on page 1100 of \cite{MilgromWeber1982}. Thus $M_1(x,z,v)$ is strictly increasing in $v$, for all $x$ and $z$. Finally, note that the optimum $h(z,v)$ is in the interior of the constraint set (which is simply $\{ x \in \R: x \geq 0 \}$). Thus $h(z,\cdot)$ is strictly increasing by theorem 1 on page 205 of \cite{EdlinShannon1998}.
\end{proof}

\begin{proof}[Proof of proposition \ref{corr:ImbensNeweySelection}]
By assumption $h(z,\cdot)$ is strictly monotone; normalize it to be strictly increasing. Without loss of generality, normalize $V \sim \text{Unif}[0,1]$. We have
\begin{align*}
	\Prob(X \leq x \mid U=u)
		&= \Prob( h(Z,V) \leq x \mid U=u) \\
		&= \Prob(V \leq h^{-1}(Z,x) \mid U=u) \\
		&= \Prob(V \leq F_{X \mid Z}(x \mid Z) \mid U=u) \\
		&= \int_{\supp(Z \mid U=u)} \Prob(V \leq F_{X \mid Z}(x \mid z) \mid U=u,Z=z) \; dF_{Z \mid U}(z \mid u) \\
		&= \int_{\supp(Z)} \Prob(V \leq F_{X \mid Z}(x \mid z) \mid U=u) \; dF_Z(z)
\end{align*}
The second line follows since $h(z,\cdot)$ is strictly increasing. The third line follows since $Q_{X \mid Z}(v \mid z) = h(z,v)$, by $Z \independent V$, $h(z,\cdot)$ strictly increasing, $V \sim \text{Unif}[0,1]$, and quantile equivariance. The fourth line follows by iterated expectations. The fifth line follows by $Z \independent (U,V)$. Suppose $V$ is positive regression dependent on $U$; the proof for the negative regression dependence case is symmetric. Then $\Prob(V \leq v \mid U=u)$ is nonincreasing in $u$ for each $v$. In particular, $\Prob(V \leq F_{X \mid Z}(x \mid z) \mid U=u)$ is nonincreasing in $u$ for all $x$ and $z$. Since the integrand is monotonic in $u$ for each $z$, the integral over $z$ is also monotonic in $u$.
\end{proof}

\begin{proof}[Proof of corollary \ref{corr:ImbensNeweySelectionFinalResult}]
Follows immediately from corollary \ref{corr:noRegDependence} and propositions \ref{prop:ImbensNeweySelection} and  \ref{corr:ImbensNeweySelection}. Note that $U$ and $V$ are not independent since they satisfy the \emph{strict} MLRP. This implies that $X$ and $U$ are not independent.
\end{proof}

\subsection*{Proofs for section \ref{sec:treatmentEffects}}

\begin{proof}[Proof of corollary \ref{prop:TauImpliesU}]
If $a = b$ the result holds trivially since $\mathcal{T}$ has measure zero. So suppose $a<b$. Then
\[
	\Prob(U\leq u \mid X=x,U\in[a,b])
	= \frac{\Prob(U \in  [0,u] \cap [a,b] \mid X=x)}{\Prob(U\in [a,b] \mid X=x)}.
\]
If $u > b$ this fraction equals 1. If $u < a$ the numerator is zero. In either case, $\Prob(U\leq u \mid X=x,U\in[a,b])$ does not depend on $x$. If $u\in[a,b]$, $[0,u] \cap [a,b] = [a,u]$. Hence
\begin{align*}
	\Prob(U\leq u \mid X=x,U\in[a,b])
		&= \frac{F_{U \mid X}(u \mid x) - F_{U \mid X}(a \mid x)}{F_{U \mid X}(b \mid x) - F_{U \mid X}(a \mid x)}\\
		&= \frac{u - a}{b - a}\\
		&= \Prob(U\leq u \mid U\in[a,b]).
\end{align*}
The first line follows since $U \mid X=x$ is continuously distributed, by lemma \ref{lemma:continuity}. The second line follows from $a,u,b\in\mathcal{T}$. Therefore, $U\independent X \mid \{ U \in \mathcal{T} \}$. This implies that, for almost all $u \in \mathcal{T}$,
\begin{align*}
	\Prob(X=1 \mid U=u)
		&= \Prob(X=1 \mid U=u,U\in\mathcal{T}) \\
		&= \Prob(X=1 \mid U\in\mathcal{T}).
\end{align*}
The second equality follows from conditional independence. Thus we have shown that $p(u)$ is flat on $\mathcal{T}$. Finally, let $t_1, t_2 \in [a,b]$ with $t_1 < t_2$. Then
\begin{align*}
	\Prob(X=1)
		&= \frac{1}{t_2 - t_1} \int_{t_1}^{t_2} p(u) \; du \\
		&= \Prob(X=1 \mid U \in \mathcal{T}).
\end{align*}
The first line follows by theorem \ref{thm:AvgValueCharacterization} while the second line follows by our derivations above showing that $p(u)$ is flat on $\mathcal{T}$. Hence
\[
	\Prob(X=1 \mid U=u) = \Prob(X=1)
\]
for almost all $u \in \mathcal{T}$.
\end{proof}

\begin{proof}[Proof of corollary \ref{prop:TauImpliesU_ctsX}]
Follows by defining $\widetilde{X} = \indicator(X\leq x)$ and applying corollary \ref{prop:TauImpliesU}.
\end{proof}

\begin{proof}[Proof of proposition \ref{prop:quantileTrtBounds}]
By the law of total probability and equation \eqref{eq:potential outcomes},
\begin{align*}
	F_{Y_0}(y)
		&= F_{Y_0 \mid X}(y \mid 1) p_1 + F_{Y_0 \mid X}(y \mid 0) p_0 \\
		&= F_{Y_0 \mid X}(y \mid 1) p_1 + F_{Y \mid X}(y \mid 0) p_0.
\end{align*}
Rearranging yields
\begin{equation}\label{eq:cdfY0givenX}
	F_{Y_0 \mid X}(y \mid 1) = \frac{F_{Y_0}(y) - p_0 F_{Y \mid X}(y \mid 0)}{p_1}.
\end{equation}
Finally, the desired quantile is simply the left-inverse of this conditional cdf:
\[
	Q_{Y_0 \mid X}(\tau \mid 1) = F_{Y_0 \mid X}^{-1}(\tau \mid 1).
\]
Thus it suffices to obtain bounds on the unconditional cdf $F_{Y_0}(y)$. By equation (12) on page 337 of \cite{MastenPoirier2017},
\begin{equation}\label{eq:eq12MP2018}
	F_{Y \mid X}(y \mid 0) = F_{R_0 \mid X}(F_{Y_0}(y) \mid 0)
\end{equation}
where $R_0 \equiv F_{Y_0}(Y_0)$ is the \emph{rank} of $Y_0$. By A\ref{assn:continuity}, $Y_0$ is continuously distributed and hence $R_0 \sim \text{Unif}[0,1]$. The main idea of the proof is that proposition \ref{prop:TauCDFbounds} yields bounds on $F_{R_0 \mid X}$, which we then invert to obtain bounds on $F_{Y_0}$. We then substitute these bounds into equation \eqref{eq:cdfY0givenX} to obtain bounds on $F_{Y_0 \mid X}(\cdot \mid 1)$. Inverting those bounds yields the quantile bounds given in appendix \ref{appendix:cdfbounds}. Since the bounds of proposition \ref{prop:TauCDFbounds} are not always uniquely invertible, we approximate them by invertible bound functions. Here we explain the main argument, but we omit the full details since these are similar to the proof of proposition 2 in \cite{MastenPoirier2017}.

\bigskip

\textbf{Under $\mathcal{T}$-independence} of $Y_0$ from $X$ with $\mathcal{T} = [Q_{Y_0}(a),Q_{Y_0}(b)]$, we have $\mathcal{T}$-independence of $R_0$ from $X$ with $\mathcal{T} = [a,b]$. To see this, let $\tau \in [a,b]$. Then
\begin{align*}
	F_{R_0 \mid X}(\tau \mid x)
		&= \Prob(R_0 \leq \tau \mid X=x) \\
		&= \Prob(F_{Y_0}(Y_0) \leq \tau  \mid X=x) \\
		&= \Prob(Y_0 \leq Q_{Y_0}(\tau) \mid X=x) \\
		&= F_{Y_0 \mid X}(Q_{Y_0}(\tau) \mid x) \\
		&= F_{Y_0}(Q_{Y_0}(\tau)) \\
		&= \tau.
\end{align*}
The second line follows by definition of the rank $R_0$. The fifth line follows by $\mathcal{T}$-independence of $Y_0$ from $X$ and since $\tau \in [a,b]$. The third and sixth lines follow by A\ref{assn:continuity}.\ref{A1_1}. Thus proposition \ref{prop:TauCDFbounds} yields sharp bounds on $F_{R_0 \mid X}$. Substituting these bounds into our argument above yields the bounds on $Q_{Y \mid X}(\tau \mid 1)$ in appendix \ref{appendix:cdfbounds}.

Sharpness of these bounds holds in the same sense as sharpness of the CQTE bounds in proposition 3 of \cite{MastenPoirier2017}. That is, the conditional quantile of $Y_0 \mid X=1$ should be a continuous and strictly increasing function, while $\underline{Q}_{Y_0 \mid X}^\mathcal{T}(\tau \mid 1)$ and $\overline{Q}_{Y_0 \mid X}^\mathcal{T}(\tau \mid 1)$ may have discontinuities  and flat regions. Nevertheless we show there exists a function that is arbitrarily close (pointwise in $\tau$) to these bounds that is continuous and strictly increasing. To see this, for $\eta \in[0,1]$ define
\begin{align*}
	\underline{F}^{\mathcal{T}}_{R_0 \mid X}(u \mid 0;\eta)
		&= (1-\eta) \cdot \underline{F}^{\mathcal{T}}_{R_0 \mid X}(u \mid 0) + \eta \cdot u \\
	\overline{F}^{\mathcal{T}}_{R_0 \mid X}(u \mid 0;\eta)
		&= (1-\eta) \cdot \overline{F}^{\mathcal{T}}_{R_0 \mid X}(u \mid 0) + \eta \cdot u.
\end{align*}
These cdfs satisfy $\mathcal{T}$-independence. For each $\eta  > 0$, they are continuous and strictly increasing. Finally, they converge uniformly to $\underline{F}^{\mathcal{T}}_{R_0 \mid X}(u \mid 0)$ and $\overline{F}^{\mathcal{T}}_{R_0 \mid X}(u \mid 0)$, respectively, as $\eta \searrow 0$. Therefore, we can substitute the cdf bounds $\underline{F}^{\mathcal{T}}_{R_0 \mid X}(u \mid 0;\eta)$ and $\overline{F}^{\mathcal{T}}_{R_0 \mid X}(u \mid 0;\eta)$ into equation \eqref{eq:eq12MP2018}, invert and then substitute that into equation \eqref{eq:cdfY0givenX} to obtain
\[
	\overline{F}^{\mathcal{T}}_{Y_0 \mid X}(y \mid 1;\eta)
	\equiv
	\frac{\underline{F}_{R_0 \mid X}^{^{\mathcal{T}}\, -1}(F_{Y \mid X}(y \mid 0) \mid 0;\eta) - p_0 F_{Y \mid X}(y \mid 0)}{p_1}
\]
and
\[
	\underline{F}^{\mathcal{T}}_{Y_0 \mid X}(y \mid 1;\eta)
	\equiv
	\frac{\overline{F}_{R_0 \mid X}^{^{\mathcal{T}}\, -1}(F_{Y \mid X}(y \mid 0) \mid 0;\eta) - p_0 F_{Y \mid X}(y \mid 0)}{p_1}.
\]
Taking the inverses of these two cdfs and letting $\eta \searrow 0$ allows us to attain points arbitrarily close to the endpoints of the set $[\underline{Q}_{Y_0 \mid X}^\mathcal{T}(\tau \mid 1), \overline{Q}_{Y_0 \mid X}^\mathcal{T}(\tau \mid 1)]$. The rest of the interior is attained by picking sufficiently small $\eta > 0$ and taking convex combinations of the bound functions, as in equation \eqref{eq:joint epsilon cdf}, and letting $\varepsilon$ vary from 0 to 1.

\bigskip

\textbf{For $\mathcal{U}$-independence} we also obtain sharpness of the interior because the functions $\underline{Q}_{Y_0 \mid X}^\mathcal{U}(\tau \mid 1)$ and $\overline{Q}_{Y_0 \mid X}^\mathcal{U}(\tau \mid 1)$ are not necessarily continuous or strictly increasing.  Nevertheless, as for $\mathcal{T}$-independence, we can obtain continuous and strictly increasing functions that are arbitrarily close (pointwise in $\tau$) to these bounds. To see this, for $\eta = (\eta_1,\eta_2)\in(0,\min\{p_1,p_0\})^2$ define
\begin{align*}
	\underline{p}_x(u;\eta)
		&= \min\{\max\{\underline{p}_x(u),\eta_1\},1-\eta_2\} \\
	\overline{p}_x(u;\eta)
		&= \min\{\max\{\overline{p}_x(u),\eta_1\},1-\eta_2\}
\end{align*}
where $\underline{p}_x$ and $\overline{p}_x$ are defined as in the proof of proposition \ref{prop:TauCDFbounds} and we let $U \equiv R_0$. 
These conditional probabilities lie in $(0,1)$ and satisfy $\mathcal{U}$-independence. Moreover, there exists an $\widetilde{\eta}_1 \in (0,\min \{p_1,p_0 \})$ such that for any $\eta_1 \in (0,\widetilde{\eta}_1)$, there is an $\eta _2(\eta_1) \in (0,\min\{p_1,p_0\})$ such that
\[
	\int_{0}^1 \underline{p}_x(u;\eta) \; du = p_x
\]
for $\eta = (\eta_1,\eta_2(\eta_1))$. This follows by the intermediate value theorem. An analogous result holds for the conditional probability $\overline{p}_x(u;\eta)$. For such values of $\eta$, define 
\[
	\underline{F}^{\mathcal{U}}_{R_0 \mid X}(u \mid 0;\eta)
	= \int_{0}^u \frac{\underline{p}_x(v;\eta)}{p_x} \; dv
	\qquad \text{and} \qquad
	\overline{F}^{\mathcal{U}}_{R_0 \mid X}(u \mid 0;\eta)
	= \int_{0}^u \frac{\underline{p}_x(v;\eta)}{p_x} \; dv.
\]
These cdf bounds are strictly increasing and continuous since the integrand is strictly positive. Therefore, we can substitute these cdf bounds into equation \eqref{eq:eq12MP2018}, invert and then substitute that into equation \eqref{eq:cdfY0givenX} to obtain
\[
	\overline{F}^{\mathcal{U}}_{Y_0 \mid X}(y \mid 1;\eta)
		\equiv \frac{
			\underline{F}_{R_0 \mid X}^{^{\mathcal{U}}\,-1}(F_{Y \mid X}(y \mid 0) \mid 0;\eta) 
				- p_0 F_{Y \mid X}(y \mid 0)
			}{
			p_1
			}
\]
and
\[
	\underline{F}^{\mathcal{U}}_{Y_0 \mid X}(y \mid 1;\eta)
		\equiv \frac{
			\overline{F}_{R_0 \mid X}^{^{\mathcal{U}}\, -1}(F_{Y \mid X}(y \mid 0) \mid 0;\eta) 
				- p_0 F_{Y \mid X}(y \mid 0)
			}{
			p_1
			}.
\]
Taking the inverses of these two cdfs and letting $\eta_1 \searrow 0$ allows us to attain points arbitrarily close to the endpoints of the set $[\underline{Q}_{Y_0 \mid X}^\mathcal{U}(\tau \mid 1), \overline{Q}_{Y_0 \mid X}^\mathcal{U}(\tau \mid 1)]$. The rest of the interior is attained by picking sufficiently small $\eta_1 > 0$ and taking convex combinations of the bound functions, as in equation \eqref{eq:joint epsilon cdf}, and letting $\varepsilon$ vary from 0 to 1.
\end{proof}

\begin{proof}[Proof of corollary \ref{corr:ATT_Y0bounds}]
This result follows by derivations similar to the proof of corollary 1 in \cite{MastenPoirier2017}.
\end{proof}

\subsection*{Proofs for appendix \ref{sec:cdfIdent}}

We frequently use the following result.

\begin{lemma}\label{lemma:cdfAsIntegralOfPropensityScore}
Let $U$ be a continuous random variable. Let $X$ be a random variable with $p_x = \Prob(X=x) > 0$. Then
\[
	F_{U \mid X}(u \mid x) = \int_{-\infty}^u \frac{\Prob(X=x \mid U=v)}{p_x} \; dF_U(v).
\]
\end{lemma}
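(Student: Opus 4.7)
The plan is to reduce this to the definition of conditional probability and then decompose the resulting joint probability using iterated expectations. Writing
\[
F_{U \mid X}(u \mid x) = \frac{\Prob(U \leq u, X = x)}{\Prob(X=x)} = \frac{\Prob(U \leq u, X = x)}{p_x},
\]
the lemma reduces to establishing the identity
\[
\Prob(U \leq u, X = x) = \int_{-\infty}^u \Prob(X = x \mid U = v) \; dF_U(v).
\]

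The key step is to apply the law of iterated expectations to the indicator representation of the joint probability. Specifically, I would write
\[
\Prob(U \leq u, X = x) = \Exp\!\left[\indicator(U \leq u) \indicator(X = x)\right] = \Exp\!\left[\indicator(U \leq u) \, \Exp(\indicator(X=x) \mid U)\right],
\]
where the second equality uses that $\indicator(U \leq u)$ is $\sigma(U)$-measurable, so it can be pulled inside the inner conditional expectation. Recognizing $\Exp(\indicator(X=x) \mid U) = \Prob(X = x \mid U)$ and expressing the outer expectation as a Lebesgue--Stieltjes integral against the marginal law $F_U$ yields
\[
\Prob(U \leq u, X = x) = \int_{-\infty}^u \Prob(X = x \mid U = v) \; dF_U(v),
\]
which gives the result after dividing by $p_x$.

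The only subtlety, rather than an obstacle, is that when $X$ is not discrete the function $v \mapsto \Prob(X=x \mid U=v)$ is defined only as a version of the regular conditional probability, i.e., up to $F_U$-null sets. This is enough for the Stieltjes integral to be unambiguous. No integrability issue arises since the integrand is bounded in $[0,1]$ and the measure $F_U$ is a probability measure. Because the derivation uses only the definitions of conditional probability and iterated expectations, the argument is essentially immediate once the above decomposition is written down.
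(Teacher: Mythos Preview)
Your argument is correct and is exactly the standard derivation: rewrite the conditional cdf as a joint probability over $p_x$, apply iterated expectations conditioning on $U$, pull the $\sigma(U)$-measurable indicator outside, and integrate against $F_U$. The paper itself does not give an in-line proof but simply cites lemma~1 of \cite{MastenPoirier2017}, so there is nothing substantive to compare; your write-up is the natural self-contained verification of that cited result.
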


\begin{proof}[Proof of lemma \ref{lemma:cdfAsIntegralOfPropensityScore}]
See lemma 1 in \cite{MastenPoirier2017}.
\end{proof}

\begin{proof}[Proof of proposition \ref{prop:TauCDFbounds} ($\mathcal{T}$-independence)]
We prove this statement for $\mathcal{T}$-independence first, then for $\mathcal{U}$-independence. Both proofs proceed by first deriving the upper cdf bound, then deriving the lower cdf bound, and finishing by showing the joint attainability of the cdfs of equation \eqref{eq:joint epsilon cdf}. 

\bigskip

For both the $\mathcal{T}$- and $\mathcal{U}$-independence proofs, we use the following two inequalities: First, for all $u \in \supp(U)$,
\begin{align}\label{eq:uniformUpperBound}
	F_{U \mid X}(u \mid x)
		&= \int_{-\infty}^u \frac{\Prob(X=x \mid U=v)}{p_x} \; dF_U(v) \notag\\
		&\leq \int_{-\infty}^u \frac{1}{p_x} \; dF_U(v) \notag\\
		&= \frac{F_U(u)}{p_x}.
\end{align}
The first line follows by lemma \ref{lemma:cdfAsIntegralOfPropensityScore}. The second line follows by $\Prob(X=x \mid U=v) \leq 1$. Second, for all $u \in \supp(U)$,
\begin{align}\label{eq:uniformLowerBound}
	F_{U \mid X}(u \mid x)
		&= \int_{-\infty}^u \frac{\Prob(X=x \mid U=v)}{p_x} \; dF_U(v) \notag \\
		&= 1 - \int_u^\infty \frac{\Prob(X=x \mid U=v)}{p_x} \; dF_U(v) \notag \\
		&\geq 1 - \int_u^\infty \frac{1}{p_x} \; dF_U(v) \notag \\
		&= 1 + \frac{F_U(u) - 1}{p_x}.
\end{align}
While equations \eqref{eq:uniformUpperBound} and \eqref{eq:uniformLowerBound} both hold for all $u \in \supp(U)$, they are not sharp for all $u$.

\bigskip

\textbf{Part 1.} We show that $F_{U \mid X}(u \mid x) \leq \overline{F}_{U \mid X}^{\mathcal{T}}(u \mid x)$ for all $u \in \supp(U)$. If $u \leq Q_U(p_xF_U(a))$, the upper bound holds by equation \eqref{eq:uniformUpperBound}.
Second, if $u \in [Q_U(p_xF_U(a)),a]$, then $F_{U \mid X}(u \mid x) \leq F_{U \mid X}(a \mid x) = F_U(a)$ since $F_{U \mid X}(\cdot \mid x)$ is nondecreasing and by $\mathcal{T}$-independence. Third, if $u\in [a,b]$, then $F_{U \mid X}(u \mid x) = F_U(u)$ by $\mathcal{T}$-independence. Fourth, if $u \in [b, Q_U(p_x + F_U(b)(1-p_x))]$, then 
\begin{align*}
	F_{U \mid X}(u \mid x)
		&= \int_{-\infty}^u \frac{\Prob(X=x \mid U=v)}{p_x} \; dF_U(v)\\
		&= \int_{-\infty}^b \frac{\Prob(X=x \mid U=v)}{p_x} \; dF_U(v)
			+ \int_b^u \frac{\Prob(X=x \mid U=v)}{p_x} \; dF_U(v)\\
		&\leq F_{U \mid X}(b \mid x) + \int_b^u \frac{1}{p_x} \; dF_U(v)\\
		&= F_U(b) + \frac{F_U(u) - F_U(b)}{p_x}.
\end{align*}

Finally, for all $u \in \supp(U)$, $F_{U \mid X}(u \mid x) \leq 1$. In particular, this holds for $u \geq Q_U(p_x + F_U(b)(1-p_x))$.

\bigskip

\textbf{Part 2.} We show that $F_{U \mid X}(u \mid x) \geq \underline{F}_{U \mid X}^{\mathcal{T}}(u \mid x)$ for all $u \in \supp(U)$. First, $F_{U \mid X}(u \mid x) \geq 0$ for all $u \in \supp(U)$. In particular, this holds for $u \leq Q_{U}((1-p_x)F_U(a))$. Second, if $u \in [Q_{U}((1-p_x)F_U(a)),a]$, then 
\begin{align*}
	F_{U \mid X}(u \mid x)
		&= \int_{-\infty}^u \frac{\Prob(X=x \mid U=v)}{p_x} \; dF_U(v)\\
		&= \int_{-\infty}^a \frac{\Prob(X=x \mid U=v)}{p_x} \; dF_U(v)
			- \int_u^a \frac{\Prob(X=x \mid U=v)}{p_x} \; dF_U(v)\\
		&\geq F_{U \mid X}(a \mid x) - \int_u^a \frac{1}{p_x} \; dF_U(v)\\
		&= F_U(a) + \frac{F_U(u) - F_U(a)}{p_x}.
\end{align*}
Third, if $u\in [a,b]$ then $F_{U \mid X}(u \mid x) = F_U(u)$ by $\mathcal{T}$-independence. Fourth, if $u\in[b,Q_U(p_xF_U(b) + (1-p_x))]$, then $F_{U \mid X}(u \mid x) \geq F_{U \mid X}(b \mid x) = F_{U}(b)$. Finally, if $u \geq Q_U(p_xF_U(b) + (1-p_x))$, the lower bound holds by equation \eqref{eq:uniformLowerBound}.

\bigskip

\textbf{Part 3}. To prove sharpness, we must construct a joint distribution of $(U,X)$ consistent with assumptions 1--4 and which yields the upper bound $\overline{F}_{U \mid X}^\mathcal{T}(\cdot \mid x)$. And likewise for the lower bound $\underline{F}_{U \mid X}^\mathcal{T}(\cdot \mid x)$. This yields equation \eqref{eq:joint epsilon cdf} for $\varepsilon = 0$ and $\varepsilon = 1$. By taking convex combinations of these two joint distributions we obtain the case for $\varepsilon \in (0,1)$.

The marginal distributions of $U$ and $X$ are prespecified. Hence to construct the joint distribution of $(U,X)$ it suffices to define conditional distributions of $U \mid X$. We define these conditional distributions by the bound functions themselves, $\underline{F}_{U \mid X}^{\mathcal{T}}(u \mid x)$ and $\overline{F}_{U \mid X}^{\mathcal{T}}(u \mid x)$. These functions are non-decreasing, right-continuous, and have range $[0,1]$. Hence they are valid cdfs. They also satisfy $\mathcal{T}$-independence. 
These properties are preserved by taking convex combinations, and hence $\varepsilon \underline{F}_{U \mid X}^{\mathcal{T}}(u \mid x) + (1-\varepsilon)\overline{F}_{U \mid X}^{\mathcal{T}}(u \mid x)$ is also a valid cdf that satisfies $\mathcal{T}$-independence for any $\varepsilon \in [0,1]$ and $x\in\{0,1\}$. 
Finally, we show that these cdfs are consistent with the marginal distribution of $U$, and can satisfy both components of equation \eqref{eq:joint epsilon cdf} simultaneously. To see this, we compute
\begin{align*}
	&p_x \overline{F}_{U \mid X}^\mathcal{T}(u \mid x) + (1-p_x) \underline{F}_{U \mid X}^\mathcal{T}(u \mid 1-x) \\
	&=
	\begin{cases}
	p_x\dfrac{F_U(u)}{p_x}
		&\text{if $u\leq Q_{U}(p_xF_U(a))$}\\
	p_xF_U(a) + F_U(u) - F_U(a) + F_U(a)(1-p_x)
		&\text{if $Q_{U}(p_xF_U(a)) \leq u \leq a$}\\
	p_xF_U(u) + (1-p_x)F_U(u)
		&\text{if $a\leq u \leq b$} \\
	F_U(u) - F_U(b) + p_xF_U(b) + (1-p_x) F_U(b)
		&\text{if $b\leq u \leq Q_{U}(p_x + F_U(b)(1-p_x))$}\\	p_x + F_U(u) - 1 + (1-p_x)
		&\text{if $Q_{U}(p_x + F_U(b)(1-p_x))\leq u$}
	\end{cases} \\
	&= F_U(u).
\end{align*}
Thus
\begin{align*}
	&p_1 \left[ \varepsilon \underline{F}_{U \mid X}^{\mathcal{T}}(u \mid 1) 
		+ (1-\varepsilon)\overline{F}_{U \mid X}^{\mathcal{T}}(u \mid 1) \right] 
	+ p_0 \left[ (1-\varepsilon)\underline{F}_{U \mid X}^{\mathcal{T}}(u \mid 0)
		+ \varepsilon \overline{F}_{U \mid X}^{\mathcal{T}}(u \mid 0) \right] \\
	&= \varepsilon \left[ p_1 \underline{F}_{U \mid X}^\mathcal{T}(u \mid 1)
		+ p_0 \overline{F}_{U \mid X}^\mathcal{T}(u \mid 0) \right] 
		+ (1-\varepsilon) \left[ p_1 \overline{F}_{U \mid X}^\mathcal{T}(u \mid 1)
		+ p_0 \underline{F}_{U \mid X}^\mathcal{T}(u \mid 0) \right] \\
	&= \varepsilon F_U(u) + (1-\varepsilon) F_U(u) \\
	&= F_U(u).
\end{align*}
\end{proof}

\begin{proof}[Proof of proposition \ref{prop:TauCDFbounds} ($\mathcal{U}$-independence)]
Now we consider the cdf bounds under $\mathcal{U}$-independence, under various cases: 

\bigskip

\textbf{Part 1.} We show $F_{U \mid X}(u \mid x) \leq \overline{F}_{U \mid X}^{\mathcal{U}}(u \mid x)$ for all $u \in \supp(U)$. We do this in two cases.

\medskip

\textbf{Part 1a.} Suppose $(1 - (F_U(b) - F_U(a)))p_x \leq F_U(a)$. First, $F_{U \mid X}(u \mid x) \leq 1$ for all $u \in \supp(U)$. In particular, this holds if $u \geq b$.

Second, if $u \in [a,b]$, then
\begin{align*}
	F_{U \mid X} (u \mid x)
		&= \int_{-\infty}^u \frac{\Prob(X=x \mid U=v)}{p_x} \; dF_U(v) \\
		&= 1- \int_{b}^\infty \frac{\Prob(X=x \mid U=v)}{p_x} \; dF_U(v)
			- \int_{u}^b \frac{\Prob(X=x \mid U=v)}{p_x} \; dF_U(v) \\
		&\leq 1 - \int_{b}^\infty \frac{0}{p_x} \; dF_U(v)
			- \int_{u}^b \frac{p_x}{p_x} \; dF_U(v) \\
		&= 1 - (F_U(b) - F_U(u)).
\end{align*}
Third, if $u \in [Q_U((1-(F_U(b) - F_U(a)))p_x), a]$, then $F_{U \mid X}(u \mid x) \leq F_{U \mid X}(a \mid x) \leq 1 - (F_U(b) - F_U(a))$ where the last inequality follows by our derivation immediately above. Finally, if $u \leq Q_{U}((1-(F_U(b) - F_U(a)))p_x)$, the upper bound holds by equation \eqref{eq:uniformUpperBound}.

\bigskip

\textbf{Part 1b.} Now suppose $(1 - (F_U(b) - F_U(a)))p_x \geq F_U(a)$. First, if $u \leq a$, the upper bound holds by equation \eqref{eq:uniformUpperBound}. Second, if $u \in [a,b]$ then 
\begin{align*}
	F_{U \mid X} (u \mid x)
		&= \int_{-\infty}^u \frac{\Prob(X=x \mid U=v)}{p_x} \; dF_U(v) \\
		&= \int_{-\infty}^a \frac{\Prob(X=x \mid U=v)}{p_x} \; dF_U(v)
			+ \int_{a}^u \frac{\Prob(X=x \mid U=v)}{p_x} \; dF_U(v)\\
		&\leq \int_{-\infty}^a \frac{1}{p_x} \; dF_U(v)
			+ \int_{a}^u \frac{p_x}{p_x} \; dF_U(v)\\
		&= \frac{F_{U}(a)}{p_x} + F_U(u) - F_U(a).
\end{align*}
Third, if $u \in [b,Q_U((F_U(b)-F_U(a))(1-p_x)+ p_x)]$, then 
\begin{align*}
	&F_{U \mid X} (u \mid x) \\
		&= \int_{-\infty}^u \frac{\Prob(X=x \mid U=v)}{p_x} \; dF_U(v) \\
		&= \int_{-\infty}^a \frac{\Prob(X=x \mid U=v)}{p_x} \; dF_U(v)
			+ \int_{a}^b \frac{\Prob(X=x \mid U=v)}{p_x} \; dF_U(v)
			+ \int_{b}^u \frac{\Prob(X=x \mid U=v)}{p_x} \; dF_U(v) \\
		&\leq \int_{-\infty}^a \frac{1}{p_x} \; dF_U(v)
			+ \int_{a}^b \frac{p_x}{p_x} \; dF_U(v)
			+ \int_{b}^u \frac{1}{p_x} \; dF_U(v) \\
		&= \frac{F_U(a) + p_x(F_U(b) - F_U(a)) + F_U(u) - F_U(b)}{p_x}.
\end{align*}
Finally, if $u \geq Q_U((F_U(b)-F_U(a))(1-p_x)+ p_x)$, then $F_{U \mid X}(u \mid x) \leq 1$. 

\bigskip

\textbf{Part 2.} We show that $F_{U \mid X}(u \mid x) \geq \underline{F}_{U \mid X}^{\mathcal{U}}(u \mid x)$ for all $u \in \supp(U)$. We do this in two cases.

\medskip

\textbf{Part 2a.} Suppose $(1 - (F_U(b) - F_U(a)))(1-p_x) \leq F_U(a)$. First, if $u \leq Q_U((1-(F_U(b) - F_U(a)))(1-p_x))$, then $F_{U \mid X}(u \mid x) \geq 0$.
Second, if $u \in [Q_U((1-(F_U(b) - F_U(a)))(1-p_x)),a]$, then
\begin{align*}
	&F_{U \mid X} (u \mid x) \\
		&= \int_{-\infty}^u \frac{\Prob(X=x \mid U=v)}{p_x} \; dF_U(v) \\
		&= 1- \int_{b}^\infty \frac{\Prob(X=x \mid U=v)}{p_x} \; dF_U(v)
			- \int_{a}^b \frac{\Prob(X=x \mid U=v)}{p_x} \; dF_U(v)
			- \int_{u}^a \frac{\Prob(X=x \mid U=v)}{p_x} \; dF_U(v) \\
		&\geq 1- \int_{b}^\infty \frac{1}{p_x} \; dF_U(v)
			- \int_{a}^b \frac{p_x}{p_x} \; dF_U(v)
			- \int_{u}^a \frac{1}{p_x} \; dF_U(v) \\
		&=  \frac{F_U(u) - (1 - (F_U(b) -F_U(a)))(1-p_x) }{p_x}.
\end{align*}

Third, if $u\in [a,b]$, then 
\begin{align*}
	F_{U \mid X} (u \mid x)
		&= \int_{-\infty}^u \frac{\Prob(X=x \mid U=v)}{p_x} \; dF_U(v) \\
		&= 1- \int_{b}^\infty \frac{\Prob(X=x \mid U=v)}{p_x} \; dF_U(v)
			- \int_{u}^b \frac{\Prob(X=x \mid U=v)}{p_x} \; dF_U(v) \\
		&\geq 1 - \int_{b}^\infty \frac{1}{p_x} \; dF_U(v)
			- \int_{u}^b \frac{p_x}{p_x} \; dF_U(v) \\
		&= F_U(u) + \frac{(F_U(b) - 1)(1-p_x)}{p_x}.
\end{align*}
Finally, if $u \geq b$, the lower bound holds by equation \eqref{eq:uniformLowerBound}.

\bigskip

\textbf{Part 2b.} Now suppose $(1 - (F_U(b) - F_U(a)))(1-p_x) \geq F_U(a)$. First, if $u\leq a$ then $F_{U \mid X}(u \mid x) \geq 0$. Second, if $u\in [a,b]$ then 
\begin{align*}
	F_{U \mid X} (u \mid x)
		&= \int_{-\infty}^u \frac{\Prob(X=x \mid U=v)}{p_x} \; dF_U(v) \\
		&= \int_{-\infty}^a \frac{\Prob(X=x \mid U=v)}{p_x} \; dF_U(v)
			+ \int_{a}^u \frac{\Prob(X=x \mid U=v)}{p_x} \; dF_U(v) \\
		&\geq \int_{-\infty}^a \frac{0}{p_x} \; dF_U(v)
			+ \int_{a}^u \frac{p_x}{p_x} \; dF_U(v) \\
		&= F_U(u) - F_U(a).
\end{align*}
Third, if $u\in [b,Q_U(p_x(F_U(b)-F_U(a)) + 1-p_x)]$, then $F_{U \mid X}(u \mid x) \geq F_{U \mid X}(b \mid x) \geq F_{U}(b) - F_U(a)$, where the last inequality follows by our derivation immediately above. Finally, if $u \geq Q_U(p_x(F_U(b)-F_U(a)) + 1-p_x)$ the lower bound holds by equation \eqref{eq:uniformLowerBound}.

\bigskip

\textbf{Part 3.} In this part, we prove sharpness in two steps. First we construct a joint distribution of $(U,X)$ consistent with assumptions 1--4 and which yields the upper bound $\overline{F}_{U \mid X}^\mathcal{U}(\cdot \mid x)$. And likewise for the lower bound $\underline{F}_{U \mid X}^\mathcal{U}(\cdot \mid x)$. This yields equation \eqref{eq:joint epsilon cdf} for $\varepsilon = 0$ and $\varepsilon = 1$. Second we use convex combinations of these two joint distributions to obtain the case for $\varepsilon \in (0,1)$.

The marginal distributions of $U$ and $X$ are prespecified. Hence to construct the joint distribution of $(U,X)$ it suffices to define conditional distributions of $X \mid U$. Specifically, when $(1 - (F_U(b) - F_U(a)))p_x \leq F_U(a)$, define the conditional probability
\[
	\overline{p}_x(u) =
	\begin{cases}
		1 &\text{ for $u < Q_U((1-(F_U(b)-F_U(a)))p_x)$} \\
		0 &\text{ for $u \in [(Q_U((1-(F_U(b)-F_U(a)))p_x),a)$} \\
		p_x &\text{ for $u \in [a,b)$} \\
		0 &\text{ for $u \geq b$.}
	\end{cases}
\]
for $u\in\supp(U)$. This conditional probability is consistent with $\mathcal{U}$-independence. Moreover, by applying lemma \ref{lemma:cdfAsIntegralOfPropensityScore} one can verify that it yields the upper bound $\overline{F}_{U \mid X}^{\mathcal{U}}(\cdot \mid x)$.

When $(1 - (F_U(b) - F_U(a)))p_x \geq F_U(a)$, define
\[
	\overline{p}_x(u) =
	\begin{cases}
		1 &\text{ for $u < a$} \\
		p_x &\text{ for $u \in [a,b)$} \\
		1 &\text{ for $u \in [b,Q_U((F_U(b)-F_U(a))(1-p_x)+ p_x))$} \\
		0 &\text{ for $u \geq Q_U((F_U(b)-F_U(a))(1-p_x)+ p_x)$.}
	\end{cases}
\]
Again, by applying lemma \ref{lemma:cdfAsIntegralOfPropensityScore} one can verify that this conditional probability yields the upper bound $\overline{F}_{U \mid X}^{\mathcal{U}}(\cdot \mid x)$.

Next consider the lower bounds. When $(1 - (F_U(b) - F_U(a)))(1-p_x) \leq F_U(a)$, define
\[
	\underline{p}_x(u) =
	\begin{cases}
		0 &\text{ for } u < Q_U((1-(F_U(b)-F_U(a)))(1-p_x))\\
		1 &\text{ for } u\in[Q_U((1-(F_U(b)-F_U(a)))(1-p_x)),a)\\
		p_x &\text{ for } u\in[a,b)\\
		1 &\text{ for } u\geq b.
	\end{cases}
\]
When $(1 - (F_U(b) - F_U(a)))(1-p_x) \geq F_U(a)$, define
\[
	\underline{p}_x(u) =
	\begin{cases}
		0 &\text{ for } u < a\\
		p_x &\text{ for } u\in[a,b)\\
		0 &\text{ for } u\in[b, Q_U(p_x(F_U(b) - F_U(a)) + 1-p_x))\\
		1 &\text{ for } u\geq Q_U(p_x(F_U(b) - F_U(a)) + 1-p_x).
	\end{cases}
\]
As with the upper bounds, one can verify that these yield the lower bound $\underline{F}_{U \mid X}^{\mathcal{U}}(\cdot \mid x)$. For all of these conditional distributions of $X \mid U$, one can verify that they are consistent with the marginal distribution of $X$:
\[
	\int_{\supp(U)} \overline{p}_x(u) \; dF_U(u) = p_x
	\qquad \text{and} \qquad
	\int_{\supp(U)} \underline{p}_x(u) \; dF_U(u) = p_x.
\]
Thus we have shown that the bound functions are attainable. That is, equation \eqref{eq:joint epsilon cdf} holds with $\varepsilon = 0$ or $1$. Next consider $\varepsilon \in (0,1)$. For this $\varepsilon$, we specify the distribution of $X \mid U$ by the conditional probability $\varepsilon \underline{p}_x(u) + (1-\varepsilon) \overline{p}_x(u)$. This is a valid conditional probability since it is a convex combination of two terms which are between 0 and 1. This conditional probability satisfies $\mathcal{U}$-independence. By linearity of the integral and our results above,
\[
	\int_{\supp(U)} \left[ \varepsilon \underline{p}_x(u) + (1-\varepsilon) \overline{p}_x(u) \right] \; dF_U(u) = p_x
\]
and hence this distribution of $X \mid U$ is consistent with the marginal distribution of $X$. Finally, by lemma \ref{lemma:cdfAsIntegralOfPropensityScore} and linearity of the integral, this conditional probability yields the cdf
\[
	\Prob(U \leq u \mid X=x) = \varepsilon \underline{F}_{U \mid X}^{\mathcal{U}}(u \mid x) + (1-\varepsilon)\overline{F}_{U \mid X}^{\mathcal{U}}(u \mid x),
\]
as needed for each component of equation \eqref{eq:joint epsilon cdf}. To see that each component of equation \eqref{eq:joint epsilon cdf} holds simultaneously, we show that a law of total probability constraint holds. There are two cases to check. First suppose $(1 - (F_U(b) - F_U(a)))(1-p_1)  = (1 - (F_U(b) - F_U(a)))p_0 \leq F_U(a)$. Then
\begin{align*}
	&p_1 \underline{F}_{U \mid X}^\mathcal{U}(u \mid 1) + 	p_0 \overline{F}_{U \mid X}^\mathcal{U}(u \mid 0)\\
 	&=
	\begin{cases}
		0 + F_U(u) \\
		\qquad \text{ for $u < Q_U((1-(F_U(b)-F_U(a)))p_0)$} \\
		\\
		(F_U(u) - (1-(F_U(b) - F_U(a))))p_0 + (1-(F_U(b) - F_U(a)))p_0 \\
		\qquad \text{ for $u \in [Q_U((1-(F_U(b)-F_U(a)))p_0),a)$} \\
		\\
		(F_U(b)-1)p_0 + F_U(u)p_1 + (1 - (F_U(b) - F_U(u))p_0
		&\text{ for $u \in [a,b)$} \\
		F_U(u) - 1 + p_1 + p_0
			&\text{ for $u \geq b$}
	\end{cases}\\
	&= F_U(u).
\end{align*}
Likewise, $p_1 \overline{F}_{U \mid X}^\mathcal{U}(u \mid 1) + 	p_0 \underline{F}_{U \mid X}^\mathcal{U}(u \mid 0) = F_U(u)$. Similar derivations hold for the other case. 
Thus
\begin{align*}
	&p_1 \left[ \varepsilon \underline{F}_{U \mid X}^{\mathcal{U}}(u \mid 1) 
		+ (1-\varepsilon)\overline{F}_{U \mid X}^{\mathcal{U}}(u \mid 1) \right] 
	+ p_0 \left[ (1-\varepsilon)\underline{F}_{U \mid X}^{\mathcal{U}}(u \mid 0)
		+ \varepsilon \overline{F}_{U \mid X}^{\mathcal{U}}(u \mid 0) \right] \\
	&= \varepsilon \left[ p_1 \underline{F}_{U \mid X}^\mathcal{U}(u \mid 1)
		+ p_0 \overline{F}_{U \mid X}^\mathcal{U}(u \mid 0) \right] 
		+ (1-\varepsilon) \left[ p_1 \overline{F}_{U \mid X}^\mathcal{U}(u \mid 1)
		+ p_0 \underline{F}_{U \mid X}^\mathcal{U}(u \mid 0) \right] \\
	&= \varepsilon F_U(u) + (1-\varepsilon) F_U(u) \\
	&= F_U(u).
\end{align*}
\end{proof}

\end{document}